\title{Approximation Schemes for Orienteering and Deadline TSP in Doubling Metrics}
\author{Kinter Ren\\
Department of Computer Science\\
University of Alberta\\
zr4@ualberta.ca
\and Mohammad R. Salavatipour\footnote{Supported by NSERC}\\
Department of Computer Science\\
University of Alberta\\
mrs@ualberta.ca}
\date{}
\theoremstyle{plain}
\newtheorem{theorem}{Theorem} 
\newtheorem{lemma}{Lemma} 
\newtheorem{proposition}{Proposition}
\newtheorem{definition}{Definition} 
\newtheorem{remarka}{Remark}
\newcommand{\QQ}{\mathbb{Q}}
\newcommand{\eps}{\varepsilon}
\newcommand{\Poly}{\text{Poly}}
\newcommand{\calE}{{\cal E}}
\newcommand{\calI}{{\cal I}}
\newcommand{\calQ}{{\cal Q}}
\newcommand{\E}{{ \mathbb{E}}}
\newcommand{\opt}{{\text{opt}}}
\begin{document}

\maketitle

\begin{abstract}
In this paper we look at $k$-stroll, point-to-point orienteering, as well as the deadline TSP  problem on graphs with bounded doubling dimension and bounded treewidth and present approximation schemes for them.
These are extensions of the classic Traveling Salesman Problem (TSP) problem. Suppose 
we are given a weighted graph $G=(V,E)$ with  a start node $s\in V$, distances on the edges $d:E\rightarrow\QQ^+$ and integer $k$. In the $k$-stroll problem the goal is to find a path starting at $s$ of minimum length that visits at least $k$ vertices. The dual problem to $k$-stroll is the rooted orienteering in which instead of $k$ we are given a budget $B$ and the goal is to find a walk of length at most $B$ starting at $s$ that visits as many vertices as possible. In the point-to-point orienteering ({P2P orienteering}) we are given start and end nodes $s,t$ and the walk is supposed to start at $s$ and end at $t$. In the deadline TSP (which generalizes {P2P orienteering}) we are given a deadline $D(v)$ for each $v\in V$ and the goal is to find a walk starting at $s$ that visits as many vertices as possible before their deadline (where the visit time of a node is the distance travelled from $s$ to that node).
The  best approximation for rooted orienteering (or {P2P orienteering}) is $(2+\eps)$-approximation \cite{chekuri2012improved} and $O(\log n)$-approximation for deadline TSP \cite{Bansal04}.
For Euclidean metrics of fixed dimension, Chen and Har-peled present \cite{chen2008euclidean} a PTAS for rooted orienteering. There is no known approximation scheme for deadline TSP for any metric (not even trees). Our main result is the first approximation scheme for deadline TSP on metrics with bounded doubling dimension (which includes Euclidean metrics). To do so we first
we present a quasi-polynomial time approximation scheme for $k$-stroll and {P2P orienteering} on such metrics. More specifically, if $G$ is a metric with  doubling dimension $\kappa$ and aspect ratio $\Delta$, we present a $(1+\eps)$-approximation that runs in time $n^{O\left(\left(\log\Delta/\eps\right)^{2\kappa+1}\right)}$.
We then extend these to obtain an approximation scheme for deadline TSP when the distances and deadlines are integer which runs in time 
$n^{O\left(\left(\log \Delta/\eps\right)^{2\kappa+2}\right)}$. The same approach also implies a bicriteria $(1+\eps,1+\eps)$-approximation for deadline TSP for when distances (and deadlines) are in $\QQ^+$.
For graphs with bounded treewidth $\omega$ we show how to solve {$k$-stroll} and {P2P orienteering} exactly in polynomial time and a $(1+\eps)$-approximation for deadline TSP in time $n^{O((\omega\log\Delta/\eps)^2)}$.
\end{abstract}

\section{Introduction}
\label{sec:intro}
We study a fundamental variant of the \emph{Traveling Salesman Problem} ({\bf TSP}) in which the ``salesman'' wants to visit as many customers as possible before their required service \emph{deadlines}. Suppose we are given a weighted graph $G=(V,E)$ with a start node $s\in V$, deadlines $D:V\rightarrow \QQ^+$, and a distance/length function $d: E\rightarrow \QQ^+$.
In the {\em deadline TSP} problem, our objective is to find a path (or a walk) starting at $s$ that visits as many nodes as possible by their deadlines.
We say that the path visits node $v$ by its deadline if the length of the path, starting at $s$ until it visits $v$ (for the first time), is at most $D(v)$. A node might be visited multiple times, but it is counted only once. Alternatively, we can assume $G$ is a complete weighted graph where the edge weights satisfy triangle inequality, i.e. the $(V,d)$ is a metric space.

A special case of the deadline TSP is the \emph{rooted Orienteering} problem, where
all the nodes have a universal deadline $B$ (think of it as a budget for the length
of the path travelled) and the goal is to find a path, starting at $s$, of length at most
$B$ that visits as many vertices as possible. Researchers have also studied another
version of the Orienteering, known as
\emph{point-to-point Orienteering} problem, denoted by {\em P2P orienteering}, where we are given a \emph{start node} $s \in V$, an \emph{end node} $t \in V$, and a \emph{budget} $B$. Here, our objective is to find an $s$-$t$ path of length at most $B$ that visits as many nodes (other than the end-points) as possible. This too can be viewed as a special case of deadline TSP: the deadlines for each node $v$ are set as $B - d(v, t)$, and the deadline TSP path is completed by connecting it directly to the end node $t$. Notice that {P2P orienteering} can be seen as the ``dual'' of the classical {\em $k$-stroll} problem \cite{blum2007approximation} (also known as the \emph{$k$-path} problem \cite{chaudhuri2003paths}; and the (path) $k$-TSP problem \cite{chen2008euclidean}), wherein the objective is to find a minimum-length $s$-$t$ path visiting at least $k$ nodes. We denote this problem as {$k$-stroll}.
It is important to note that {P2P orienteering} and {$k$-stroll} are equivalent in terms of exact algorithms. However, an approximation for one problem does not imply an approximation for the other. 

Blum et al. \cite{blum2007approximation} developed the first O(1)-approximation algorithm for the rooted orienteering problem (with no fixed end node) in a general metric space.
Their main approach involves a clever reduction to $k$-TSP,  transforming an approximation algorithm for the $k$-TSP problem into an approximation algorithm for the rooted orienteering problem via an intermediate problem: They introduced the concept of \emph{excess} (also referred to as \emph{regret}) for a path and formulated the related \emph{minimum-excess} problem, demonstrating that an approximation for the latter extends to an approximation for the orienteering problem. Given a path $P$ that includes two vertices $u,v$, let $d_P(u,v)$ denote the length of the sub-path from $u$ to $v$.
The excess of an $s,t$-path $P$, denoted by ${\cal E}_P$, is the difference between the length of the path and the distance between the endpoints of the path, ${\cal E}_P=d_P(s,t)-d(s,t)$ and the minimum-excess problem seeks to find an $s$-$t$ path of minimum-excess that visits at least $k$ nodes. 
Moreover, they \cite{blum2007approximation} demonstrated that the min-excess problem can be approximated using algorithms for {$k$-stroll}, implying that an approximation algorithm for {$k$-stroll} leads to an approximation algorithm (with a constant factor loss in the ratio) for the orienteering problem, via the intermediate min-excess problem.
Applying a similar approach and concentrating on enhancing the approximation algorithm for the {$k$-stroll}, the factor has been refined to the currently best $(2 + \epsilon)$-approximation, which also extends to {P2P orienteering} \cite{chekuri2012improved} in a general metric space.

Chen and Har-Peled \cite{chen2008euclidean} developed the first Polynomial-Time Approximation Scheme (PTAS) for the rooted orienteering problem in a fixed-dimensional Euclidean space. The prior reduction by \cite{blum2007approximation} increases the approximation ratio by a constant factor and can not provide a PTAS.
 To achieve a PTAS for the rooted orienteering problem, Chen and Har-Peled \cite{chen2008euclidean} introduced the concept of $\mu$-excess for a path (extending the notion of the excess of a path) and defined $(\epsilon, \mu)$-approximation for $k$-TSP. Roughly speaking, the $\mu$-excess of a path is the difference between its length and the length of the best approximation of the path using straight lines between only $u$ nodes of the path (as opposed to considering the length of the line connecting the two endpoints of the path, as in the min-excess path). This notion is formally defined soon.
 An $(\epsilon, \mu)$-approximation for rooted $k$-TSP is any path that starts from the root and visits $k$ points and whose length is at most $||P^*||$ + $\epsilon \cdot \calE_{P^*, \mu}$, where $||P^*||$ denotes the length of the optimal rooted $k$-TSP path and $\calE_{P^*, \mu}$ is the $\mu$-excess for the optimal path. 
 Note that the upper bound provided by $(\epsilon, \mu)$-approximation for the $k$-TSP may be significantly tighter than that by $(1+\epsilon)$-approximation (namely, $(1+\epsilon) \cdot ||P^*||$), especially when $\mu$ is sufficiently large, as the $\mu$-excess of a path can be much smaller than the length of the path. In fact, Chen and Har-Peled \cite{chen2008euclidean} showed that  the same algorithm that Mitchell developed as a PTAS for classic TSP on Euclidean plane \cite{mitchell1999guillotine} provides a $(\epsilon, \frac{1}{\epsilon})$-approximation for the rooted $k$-TSP problem on the that metric and this leads to a PTAS for the rooted orienteering problem.

Deadline TSP seems substantially more difficult to approximate.
For deadline TSP Bansal et al. \cite{Bansal04} presented an $O(\log n)$-approximation and extended that to an $O(\log^2 n)$-approximation for the Time-Window TSP problem (where each node $v$, other than a deadline $D(v)$, has also "release" time $R(v)$, and it is "counted" if the time it is visited along the output path falls in the interval $[R(v),D(v)]$). They also provided a bicriteria approximation: one that produces a $\log(1/\eps)$-approximate solution that violates the deadlines by $(1+\eps)$. The $O(\log n)$-approximation remains the best known approximation for deadline TSP on general metrics (in polynomial time).
More recently, Friggstad and Swamy \cite{FriggstadS22} presented an
$O(1)$-approximation for deadline TSP which runs in time $O(n^{\log n\Delta})$ where $\Delta$ is the diameter of the graph; for this they assume that distances (and so deadlines) are all integers.

Approximations for both orienteering and deadline TSP for other metrics (such as directed graphs) have been obtained as well (discussed in Section \ref{sec:related}). For instance, for deadline TSP on trees Farbstein and Levin \cite{FV19}, present a bicriteria $(1+\eps,1+\eps)$-approximation (i.e. $(1+\eps)$-approximation and violating the deadlines by $(1+\eps)$).
To the best of our knowledge, there is no true approximation scheme for deadline TSP on any metrics (even on Trees). 
Our goal in this work has been to obtain an approximation scheme for
deadline TSP on bounded doubling dimensions (which implies the first approximation scheme for Euclidean metrics as well). 
 
\subsection{Our Results and Techniques.}
We present a number of results for P2P Orienteering and deadline TSP.
By scaling we assume all distances are at least 1 and
at most $\Delta$.
As a starting point we show that orienteering can be solved exactly in polynomial time on graphs with bounded treewidth and how this can be turned into a QPTAS for deadline TSP on such metrics when the distances are integer and $\Delta$ is quasi-polynomial in $n$.

\begin{theorem}\label{thm:boundedTW}
  Given a graph with treewidth $\omega$,
  there is an exact algorithm with running time is $O(\Poly(n)\cdot \omega^{\omega^2})$ for solving the {$k$-stroll} or P2P orienteering problem. Furthermore, there is an approximation scheme for deadline TSP on such graphs with integer distances with running time $n^{O((\omega\log\Delta/\eps)^2)}$.
 \end{theorem}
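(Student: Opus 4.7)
The theorem has two parts (exact algorithms for $k$-stroll/P2P orienteering; an approximation scheme for deadline TSP), so I would address them in sequence.

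For the first part, the plan is a standard bottom-up dynamic program on a tree decomposition $(T,\{X_t\}_{t\in T})$ of width $\omega$. At each bag $X_t$ I would keep a table indexed by pairs $(\sigma,j)$, where $\sigma$ is a \emph{path-cover pattern} on $X_t$ describing, for each vertex of the bag, whether it is unused, an internal vertex of a walk-fragment traced so far, or an endpoint, together with a matching pairing up endpoints of the same fragment; and $j\in\{0,\ldots,n\}$ is the number of distinct vertices visited so far in the subtree rooted at $t$. The cell stores the minimum length of a partial walk realizing $(\sigma,j)$. The introduce, forget and join operations are standard; the join step dominates since it must be consistent across two child patterns (which can create/destroy fragments), yielding $\mathrm{poly}(n)\cdot\omega^{O(\omega^{2})}$ time overall because there are $\omega^{O(\omega)}$ patterns per bag and a join compares pairs of them. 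For $k$-stroll, at the root one reads off the minimum length with $j\geq k$ and a single unmatched endpoint at $s$; for P2P orienteering one binary-searches on $k$ and asks for length at most $B$ with endpoints $s,t$. Since the two problems are equivalent under exact optimization, only one DP is needed.

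For the deadline TSP part, the plan is to reduce to $O(\log\Delta/\eps)$ nested calls to the exact P2P orienteering algorithm of part (i), using a two-level geometric decomposition of the time axis. First, round every deadline $D(v)$ down to the nearest power of $(1+\eps)$; since distances are integer and the diameter is at most $n\Delta$, this creates $L=O(\log\Delta/\eps)$ deadline classes $B_{1}<\cdots<B_{L}$ and loses at most a $(1+\eps)$ factor in the time axis (which we absorb by rescaling $\eps$). For the rounded instance, guess, for each $\ell$, a tuple of up to $O(\omega)$ \emph{portal vertices} through which the optimal walk crosses the boundary $B_{\ell}$ — this is dictated by the fact that any walk in a graph of treewidth $\omega$ leaving a connected region must pass through the $O(\omega)$ vertices of a bag separator. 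Between consecutive portals one runs the part-(i) P2P orienteering algorithm with the appropriate length budget, counting only vertices whose rounded deadline is at least the upper boundary of that segment, so collected vertices automatically meet their deadlines. Summing the counts across segments yields the output walk.

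The main obstacle, and the source of the quadratic exponent in $\omega\log\Delta/\eps$, is making the portal-based decomposition produce a true $(1+\eps)$-approximation rather than only a bicriteria guarantee. A single guess of a transition vertex per boundary gives the wrong factor because vertices whose deadline falls \emph{inside} a segment can be lost; to handle them one must subdivide each segment geometrically a second time and guess an additional $O(\omega)$ portals per sub-segment, aligning portal choices with the tree-decomposition structure so that any fragment of the optimum walk with deadlines strictly inside the segment is captured in a dedicated exact sub-call. Counting gives $n^{O(\omega\log\Delta/\eps)}$ outer guesses times $n^{O(\omega\log\Delta/\eps)}$ inner guesses, matching the stated running time, and a telescoping argument converts a per-level $(1+\eps/L)$ loss into the overall $(1+\eps)$ factor. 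Verifying this telescoping without accidentally double-counting vertices across deadline classes is the delicate part of the proof.
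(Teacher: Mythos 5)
Your treatment of the first part is essentially the paper's approach. Both run a bottom-up DP over a balanced tree decomposition, maintaining at each bag $b$ a description of how the unknown optimal path, restricted to $G_b$, breaks into fragments crossing the separator $V_b$. The paper parameterizes a DP state by $O(\omega^2)$ explicit source--sink pairs in $V_b$ (plus a count $k_b$) and enumerates the edges in $R_b$ between the two children; your ``path-cover pattern'' (per-vertex status plus a matching on endpoints) is a standard repackaging of the same interface information. You also need to allow a bag vertex to be an endpoint of more than one fragment and to count fragments consistently with walks (not just simple paths), and your enumeration should be quantified against the crossing edges $R_b$ at a join rather than just patterns on $X_t$, but these are presentation details: the reasoning, the dominant join step, and the $\mathrm{poly}(n)\cdot\omega^{O(\omega^2)}$ bound all line up with the paper. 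The observation that $k$-stroll and P2P orienteering are interchangeable under exact optimization also matches the paper.

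For the deadline TSP part there is a genuine gap. First, rounding every deadline \emph{down} to a power of $(1+\eps)$ can in general destroy far more than an $\eps$-fraction of the optimum's value: any vertex that the optimum reaches in the $(1+\eps)$-window just below its deadline becomes infeasible, and nothing forces this to be a small fraction. Rounding up instead gives only a bicriteria guarantee. Second, the ``portal vertices through which the optimal walk crosses the boundary $B_\ell$'' do not exist as a tree-decomposition object: a deadline threshold $B_\ell$ is a cut on the \emph{time axis}, not a cut of the graph, and the set of vertices reached before time $B_\ell$ has no relation to a bag separator. There is therefore nothing playing the role your $O(\omega)$ guessed portals are supposed to play, and your second-level geometric subdivision inherits the same problem. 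What the paper actually does is very different in its crucial step: it partitions the optimum $P^*$ at vertices $v_0,v_1,\dots,v_m$ chosen so that the $\mu$-excess of $P^*_{v_i v_{i+1}}$ crosses $\alpha^i=(1+\eps)^i$ (so $m=O(\log\Delta/\eps)$), further splits each $P^*_{v_i v_{i+1}}$ into $\mu-1$ equal-size legs by a fixed $\mu$-jump $J_i$, and then deliberately \emph{drops} the highest-excess legs to buy slack. It is precisely that saved slack (of order $\eps\,\calE^*_{i,\mu}$ per group $i$) which, summed over groups and compared against the upper bound $\calE'_{i,\mu}\le\alpha^i - 1$, makes the residual visit times satisfy the true deadlines $D(v)\ge L_{i,j}$. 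Without this excess/short-cutting mechanism, the telescoping argument you sketch has nothing to telescope against, and all one obtains is a bicriteria $(1+\eps,1+\eps)$-approximation rather than the true approximation scheme claimed in the theorem. The exponent $n^{O((\omega\log\Delta/\eps)^2)}$ then comes from running a single DP that handles all $m\cdot(\mu-1)$ legs and their per-group budgets $B_{b,i}$ concurrently on the tree decomposition, not from an outer loop of $O(\log\Delta/\eps)$ separate calls to the part-(i) algorithm.
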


Proof of this theorem is simpler and has some of the main ideas used in the subsequent theorems; so is a good starting point.
Our second (and more significant) set of results are for metrics with bounded doubling dimension. These metrics include Euclidean metrics of fixed dimension. 
\begin{theorem}\label{thm:kstrollDBL}
Given a graph $G=(V,E)$ with doubling dimension $\kappa$, start and end nodes $s,t\in V$ and integer $k$. We can get a $(1+\eps)$-approximation for {$k$-stroll} in 
time $n^{O\left(\left(\log\Delta/\eps\right)^{2\kappa+1}\right)}$..
\end{theorem}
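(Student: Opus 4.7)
The plan is to adapt the Arora--Talwar hierarchical-decomposition QPTAS framework for TSP on doubling metrics to the \emph{$k$-stroll} objective. The overall shape of the algorithm has four steps: (i) impose a hierarchical split-tree decomposition of $(V,d)$ into clusters with $L = O(\log \Delta)$ levels, where every cluster at level $i$ has diameter $O(2^i)$ and enjoys the standard padding property for doubling metrics; (ii) decorate each cluster with a polylogarithmic set of boundary \emph{portals}; (iii) establish a structural lemma that converts the optimal $s$-$t$ path $P^*$ visiting $k$ vertices into a \emph{portal-respecting} path $\tilde P$ of length at most $(1+\eps)\|P^*\|$ that enters each cluster only a bounded number $R$ of times; and (iv) run a bottom-up dynamic program over the decomposition to find the best such $\tilde P$.

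For steps (i) and (ii) I would invoke the usual construction (as in Talwar) to obtain the hierarchy, and place a $\Theta(\eps 2^i/L)$-net in each level-$i$ cluster; by the doubling property this gives at most $P = (\log \Delta/\eps)^{O(\kappa)}$ portals per cluster. Step (iii) is the crux: re-routing each boundary crossing to the nearest portal costs an additive $O(\eps 2^i/L)$ at level $i$, and summed over all levels this telescopes to $\eps \cdot \opt$, provided the total number of crossings is controlled. To control it I would use a patching/shortcutting argument: whenever $P^*$ enters a cluster more than $R = (\log \Delta/\eps)^{O(\kappa)}$ times, consecutive crossings can be short-cut while preserving the set of vertices visited, because the cluster has diameter $O(2^i)$ and the additive cost of short-cutting is charged against the saved travel.

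In step (iv), for every cluster $C$ the DP maintains a table $T_C[\sigma, j]$ indexed by a \emph{portal signature} $\sigma$ (an ordered list of at most $R$ entry/exit portal pairs on the boundary of $C$) and an integer $j \in \{0, 1, \dots, n\}$. The entry $T_C[\sigma, j]$ stores the minimum total length of a portal-respecting collection of sub-paths inside $C$ consistent with $\sigma$ that visits exactly $j$ distinct vertices of $C$. Leaf clusters have trivial tables. At an internal cluster with children $C_1, \dots, C_q$, I enumerate portal assignments on each child boundary that are compatible with $\sigma$ and convolve the children's count dimensions; the clusters containing $s$ and $t$ constrain $\sigma$ so that the overall path begins at $s$ and ends at $t$. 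The final answer is $\min_{j \ge k} T_V[\varnothing, j]$. A careful counting over portal signatures, visit counts per sub-path, and the $L$ levels of the hierarchy yields the claimed running time $n^{O((\log \Delta/\eps)^{2\kappa+1})}$ for appropriately chosen $R$.

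The main technical obstacle I expect is the structural lemma in step (iii). Unlike for TSP, where patching can short-cut long detours without regard to vertex counts, for $k$-stroll the short-cutting must not drop any of the $k$ required visits. I would address this by performing short-cuts only where the path re-enters a cluster after already collecting the vertices it ``needs'' from that cluster in the optimal interleaving, and by pairing incoming and outgoing crossings so that the consolidated path still covers all targets. A secondary obstacle is the correctness of the DP combination: because each child cluster contributes its own count dimension, the combination is a convolution in $j$ over the children, and one must verify that this convolution preserves the invariant of minimum length per signature/count pair without incurring approximation loss beyond the $(1+\eps)$ budget already allocated to the portal-respecting modification.
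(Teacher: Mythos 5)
Your proposal shares the overall skeleton with the paper's argument (hierarchical decomposition of the doubling metric, portals of size $(\log\Delta/\eps)^{O(\kappa)}$ per cluster, and a bottom-up DP indexed by boundary crossings and a vertex-count), but there is a genuine gap precisely at what you yourself flag as ``the crux'': step (iii). You propose to bound the number of boundary crossings per cluster by patching/short-cutting whenever $P^*$ re-enters a cluster more than $R$ times, and you acknowledge that for $k$-stroll this must not drop required vertices, offering only a vague plan (``short-cut only where the path re-enters after already collecting the vertices it needs''). That plan is not made precise, and it is not clear how the DP could ``know'' which re-entries are safe to collapse. The paper never patches at all. Instead it uses a sparse/dense dichotomy for $P^* \cap C$: if $\|P^*\cap C\| \leq \frac{\log n}{\eps}\Delta_C$, then by the padding property the \emph{expected} number of bridge-edges of $P^*$ across the partition of $C$ is already $O(\kappa\log n/\eps)$ with no modification; if $\|P^*\cap C\| > \frac{\log n}{\eps}\Delta_C$, the path is made portal-respecting (cost increase $O(\eps/\log\Delta)$ per level in expectation), and the number of crossings is then trivially bounded by the number of distinct portal-edges $|R'_{\pi_s}| \leq (16\kappa'\delta/\eps)^{2\kappa}$ since a simple path uses each edge at most once. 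No vertex is ever dropped.

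A second gap is the probabilistic mechanism. Your plan uses a single random Talwar decomposition and the padding property, but that property only holds in expectation, cluster by cluster, and the DP has to commit to a specific decomposition in which \emph{every} cluster simultaneously enjoys the sparse-crossing or cheap-portal-rerouting bound. The paper handles this with a $\gamma$-split-tree: at each cluster it produces $\gamma = \Theta(\log n)$ independent random partitions, notes by Markov that each of the two desired events holds for a given partition with probability $\geq 1/2$, and so with high probability at least one of the $\gamma$ child split-nodes works at every cluster; the DP then tries all $\gamma$ children. Without this (or an equivalent repetition scheme) your structural claim that a single hierarchy supports a portal-respecting, low-crossing near-optimal path with probability close to $1$ is not established. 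If you replace your patching argument with the paper's sparse/dense dichotomy and add the $\gamma$-split-tree, the rest of your DP outline and the running-time accounting are essentially correct.
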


We actually show a stronger bound on the length of the {$k$-stroll} solution produced by this theorem.
We then use that to prove the following:

\begin{theorem}\label{thm:P2PDBL}
    Given a graph $G=(V,E)$ with doubling dimension $\kappa$, start and end nodes $s,t\in V$ and length budget $B$. We can get a $(1+\eps)$-approximation for P2P orienteering in time $n^{O\left(\left(\log\Delta/\eps\right)^{2\kappa+1}\right)}$.
\end{theorem}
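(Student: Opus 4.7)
The plan is to reduce P2P orienteering to the $k$-stroll algorithm of Theorem~\ref{thm:kstrollDBL}. Let $P^*$ be an optimal P2P orienteering walk from $s$ to $t$ with $\|P^*\|\le B$ visiting $k^*$ vertices; since there are only $n$ possibilities, we may assume $k^*$ is known. A direct call to $k$-stroll with parameter $k^*$ returns a walk of length at most $(1+\eps)\|P^*\|\le (1+\eps)B$, which violates the budget. The remark just after Theorem~\ref{thm:kstrollDBL} that ``we actually show a stronger bound on the length'' is essential here: in the spirit of Chen and Har-Peled, I expect the $k$-stroll algorithm to in fact deliver a walk whose length is at most $\|P^*\|+\eps\cdot\calE_{P^*,\mu}$ for a suitable parameter $\mu$, where $\calE_{P^*,\mu}$ is the $\mu$-excess of $P^*$, so that the additive error is controlled by the excess over a $\mu$-skeleton rather than by the total length.

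Given this stronger bound, I would enumerate a skeleton of $\mu$ anchor vertices $u_0=s,u_1,\ldots,u_\mu=t$ lying along $P^*$, together with vertex counts $k_0,\ldots,k_{\mu-1}$ summing to $k^*$, and invoke the $k$-stroll algorithm on each segment $(u_i,u_{i+1})$ with target $k_i$, concatenating the returned walks. For the correct guess, the walk on segment $i$ has length at most $\|P^*_i\|+\eps(\|P^*_i\|-d(u_i,u_{i+1}))$, so the concatenation has length at most
\[
  (1+\eps)\|P^*\|-\eps\sum_{i=0}^{\mu-1}d(u_i,u_{i+1}).
\]
Picking anchors at roughly uniform spacing along $P^*$ makes $\sum_i d(u_i,u_{i+1})\ge(1-\eps)\|P^*\|$, so the total length is at most $(1+O(\eps^2))B$. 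To absorb the remaining slack while losing only an $\eps$-fraction of the vertex count, I would re-run the procedure with target budget $B/(1+\eps)$ and argue, via a covering argument using the doubling assumption, that truncating the last $\eps B$-length suffix of $P^*$ costs only $O(\eps k^*)$ vertices.

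The main obstacle is establishing the stronger excess-type guarantee for the $k$-stroll algorithm of Theorem~\ref{thm:kstrollDBL}. A standard $(1+\eps)$-analysis gives up an $\eps$-fraction of the total length, whereas here we need the algorithm's hierarchical portal/DP machinery to lose only an $\eps$-fraction of the excess relative to the chosen skeleton; this likely requires a per-level charging argument that bounds the detour cost inside each cell of the doubling-dimension partition against the excess crossing that cell. A secondary obstacle is bookkeeping: the enumeration over anchors and counts incurs an $n^{O(\mu)}$ overhead, so $\mu$ must be kept at $\Poly(\log\Delta/\eps)$ and the counts $k_i$ must be discretised to multiples of $\eps k^*/\mu$ to keep the total running time within $n^{O((\log\Delta/\eps)^{2\kappa+1})}$.
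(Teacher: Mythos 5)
Your first paragraph is on the right track: the statement after Theorem~\ref{thm:kstrollDBL} does indeed hide the stronger $(\eps,\mu)$-excess guarantee (proved in the paper as Theorem~\ref{thm:muexcess}), and this is exactly the ingredient needed. However, the way you use it breaks down at two critical points.

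First, your argument that ``picking anchors at roughly uniform spacing along $P^*$ makes $\sum_i d(u_i,u_{i+1})\ge(1-\eps)\|P^*\|$'' is false in general. The quantity $\|P^*\|-\sum_i d(u_i,u_{i+1})$ is precisely the $\mu$-excess $\calE_{P^*,\mu}$, and this can be an arbitrarily large fraction of $\|P^*\|$ — for instance when $P^*$ repeatedly zigzags between two nearby regions, any fixed number of anchors leaves the jump distances negligible compared to the traversed length. Nothing about uniform spacing or doubling dimension saves you here. Second, your fallback of running at budget $B/(1+\eps)$ and claiming via a ``covering argument'' that truncating the last $\eps B$-length suffix of $P^*$ costs only $O(\eps k^*)$ vertices is also unfounded: the optimal path could place most of its $k^*$ vertices in that final suffix, and the doubling property says nothing about how vertices of $P^*$ are distributed along its length.

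The paper's actual proof avoids both issues by \emph{shortcutting} rather than truncating. It breaks $P^*$ into $\mu-1$ segments of (almost) equal \emph{vertex count}, identifies the segment $P^*_j$ with the largest $2$-excess $\calE_{P^*_j,2}$, and replaces that segment by a direct edge. This drops at most an $\eps$-fraction of the $k^*$ vertices while \emph{saving} length $\calE_{P^*_j,2}$ against the budget $B$. It then feeds this shortcut path to the single $(\eps,\mu)$-approximate $k$-stroll algorithm. The additive error is $\eps\calE_{P',\mu}\le\frac{1}{\mu-1}\sum_{i\ne j}\calE_{P^*_i,2}\le\calE_{P^*_j,2}$ (because $P^*_j$ was the segment of maximum $2$-excess), so the error is absorbed \emph{exactly} by the shortcutting savings, yielding $\|P''\|\le B$ without any approximation in the budget. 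The key idea you are missing is this compensating savings: you cannot make $\eps\cdot\calE$ small in absolute terms, but you can pre-pay for it by dropping one segment that has high $2$-excess and few vertices.
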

The reason that our running time 
in Theorems \ref{thm:kstrollDBL} and \ref{thm:P2PDBL} includes $\log \Delta$ in the exponent is that our algorithm uses the (randomized) hierarchical decomposition of such metrics \cite{Talwar04}, which produces a decomposition of height $O(\log \Delta$), and so the algorithm has a running time with $n^{\Poly(\log n)\log\Delta}$. 
We should note that for most vehicle routing problems on Euclidean or doubling metrics, one can assume that $\Delta=\Poly(n)$ by a standard scaling of distances at a loss of $(1+\eps)$ in approximation factor (for e.g. this has been the first step in \cite{Arora98,Talwar04,BartalGK16} for designing approximation schemes for TSP on Euclidean and doubling metrics). 
This however does not work for budgeted versions (such as orienteering or deadline TSP) as a feasible solution for the scaled version might violate the budget(s) by  a $(1+\epsilon)$ factor unless one is settling for a bi-criteria approximation. 

Building upon these theorems we present the strongest result of this paper. Assuming that distances are all integers, building upon the results developed above we show how one can get a $(1+\eps)$-approximation for deadline TSP (without any violation of deadlines) on doubling metrics. This is the first approximation scheme for any non-trivial metrics for deadline TSP.

\begin{theorem}\label{thm:deadlineDBL}
    Suppose $G=(V,E)$ is a graph with doubling dimension $\kappa$, 
    start nodes $s\in V$ as an instance of deadline TSP where distances (and deadlines) are integer values. We can get a $(1+\eps)$-approximation for deadline TSP in time $n^{O((\log\Delta/\eps)^{2k+2})}$.
\end{theorem}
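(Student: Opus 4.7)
I would reduce deadline TSP to a small family of P2P orienteering sub-problems solved via Theorem~\ref{thm:P2PDBL}, using the integrality of distances and deadlines to avoid any violation. Set $L = O(\log\Delta/\eps)$ geometric \emph{integer} thresholds $T_0 = 0,\ T_j = \lceil (1+\eps)^{j-1}\rceil$ for $j=1,\ldots,L$, with $T_L\ge\Delta$. For each vertex $v$ let $b(v) := \max\{j : T_j \le D(v)\}$ be its ``deadline bucket'', so any walk that visits $v$ by time $T_{b(v)}$ respects the deadline. In the optimal walk $P^*$, let $u_j^*$ denote the last vertex visited by $P^*$ at or before time $T_j$, visited at time $\tau_j^*\le T_j$. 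I enumerate all candidate sequences $(u_j,\tau_j)_{j=0}^{L}$ with $u_0=s$, where $u_j$ ranges over $V$ and $\tau_j$ ranges over the $O(n+L)$ integer values coming from vertex deadlines together with the thresholds $T_j$; there are $n^{O(L)}$ such guesses.

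For each guess and each segment $j$, I invoke Theorem~\ref{thm:P2PDBL} with $\eps':=\eps/L$ to compute a $(1+\eps')$-approximate $u_{j-1}\to u_j$ path of length at most $\tau_j-\tau_{j-1}$ maximising profit over the vertices eligible for segment $j$. A vertex $v$ is declared eligible iff $b(v)\ge j$, which forces any chosen $v$ to be visited by a time $\le\tau_j\le T_j\le T_{b(v)}\le D(v)$, i.e.\ with no violation. A top-level DP left-to-right over the $L$ segments then chains the per-segment orienteering solutions, selecting for each $v$ a unique ``home'' segment in $\{1,\ldots,b(v)\}$ so as to maximise total profit subject to the length budgets; rescaling $\eps\leftarrow\eps/L$ absorbs the $L$-fold compounding of per-segment errors into the final $(1+\eps)$ guarantee.

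The main obstacle is showing that the above eligibility rule loses only an $\eps$-fraction of OPT's profit. OPT may legitimately visit some $v$ with $D(v)\in(T_{b(v)},T_{b(v)+1})$ inside segment $b(v)+1$ at a time in $(T_{b(v)},D(v)]$, and such vertices are invisible to my rule. To capture them I would refine the scheme by enumerating, for each segment, one additional integer ``break time'' in $[T_{j-1},T_j]$ drawn from the $O(n)$ distinct relevant deadline values, splitting the segment into an early sub-segment where the vertex can legitimately be collected and a late sub-segment where it cannot. This refinement preserves integrality and the no-violation property, but multiplies the enumeration by another $n^{O(L)}$ factor and, more importantly, adds one $\log\Delta/\eps$-factor to the per-level cost of the coupled DP (because portal crossings of the hierarchical decomposition must now also carry a time-bucket label so that deadlines can be checked consistently across clusters). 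This inflates the $n^{O(L^{2\kappa+1})}$ bound of Theorem~\ref{thm:P2PDBL} to the claimed $n^{O((\log\Delta/\eps)^{2\kappa+2})}$. For $\QQ^+$ distances, dropping the integrality argument leaves a $(1+\eps)$-factor deadline violation, yielding the bicriteria $(1+\eps,1+\eps)$-approximation promised in the abstract.
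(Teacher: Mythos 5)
Your proposal takes a genuinely different route from the paper. You partition the time axis into $L=O(\log\Delta/\eps)$ geometric thresholds $T_0,\dots,T_L$, guess the optimal walk's last vertex at each threshold, and reduce to per-segment P2P orienteering instances chained by a top-level DP. The paper instead partitions the optimal path $P^*$ \emph{adaptively}, at vertices $v_0,\dots,v_m$ chosen so that the $\mu$-excess of the subpath $P^*_{v_iv_{i+1}}$ first exceeds $(1+\eps)^i$; each such segment is then subdivided by a $\mu$-jump into $\mu-1$ legs, and all segments are solved \emph{simultaneously} as a coupled multi-group-leg orienteering instance on the $\gamma$-split-tree, with per-leg vertex sets $N_{i,j}$.

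There is a genuine gap in your proposal, and it is exactly at the step you flagged: the single extra ``break time'' per segment does not repair the eligibility loss. Consider an instance where all of OPT's profit sits in a single time window $(T_{j-1},T_j)$: the non-root vertices have many distinct deadlines inside that window and OPT visits each just before its own deadline. Your rule $b(v)\ge j$ discards all of them, and a single guessed break time inside $(T_{j-1},T_j)$ distinguishes only two deadline classes, not the $\Theta(n)$ classes present. The deeper obstruction is that your scheme never \emph{banks} time: if you collect few vertices early you obtain no slack to spend later, so the only sound eligibility rule is the overly conservative one. The paper's entire deadline-feasibility argument hinges on a banking mechanism: it shortcuts a $\Theta(\eps)$-fraction of the legs of each group $i$, saving at least $\eps\calE^*_{i,\mu}>\eps\alpha^i$ in length, so the accumulated savings over groups $0,\dots,i-1$ is at least $\alpha^i-1\ge\calE'_{i,\mu}$, and this savings dominates the slack between a vertex's lower-bound visit time $L_{i,j}$ and its true deadline. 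Your framework has no analogue of this step, and fixed time thresholds cannot supply one.

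Also, your explanation of the extra $\log\Delta/\eps$ factor in the exponent (``portal crossings must carry a time-bucket label'') does not correspond to a well-specified DP state. In the paper that factor arises because the DP carries $m=O(\log\Delta/\eps)$ residual per-group budgets $B_{C,i}$ in its state, multiplying the table size by $(n\Delta)^{O(m)}$; it is the coupling of the groups in one recursion, not a relabeling of portal crossings, that pays the extra $\log\Delta/\eps$.
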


We note that the algorithm of this theorem can be adapted to 
obtain a bicriteria $(1+\eps,1+\eps)$-approximation for deadline TSP (i.e. cost at most $1+\eps$ the optimum while violating the deadlines by at most $1+\eps$) that works even when distances are in $\QQ^+$ (instead of integer) with the same running time.

We start with some preliminaries in Section \ref{sec:prel}.
We prove Theorem \ref{thm:boundedTW} in Section \ref{sec:poly_p2p}, Theorems \ref{thm:kstrollDBL} and \ref{thm:P2PDBL} in Section \ref{sec:qptas_p2p}, and finally
Theorem \ref{thm:deadlineDBL} in Section \ref{sec:qptas_deadlineTSP}.

\subsection{Related Works}\label{sec:related}

The seminal works of Arora \cite{Arora98} and Mitchell \cite{mitchell1999guillotine} presented the first PTAS
for Euclidean TSP and some of its variants such as $k$-TSP. However, extending this to orienteering proved to be challenging. 
For orienteering on Euclidean plane Arkin et al. \cite{Arkin98} presented a $(2+\eps)$-approximation.
The first PTAS for rooted orienteering on Euclidean metrics was given by Chen and Har-peled \cite{chen2008euclidean}.
More recently Gottlieb et al. \cite{GottliebKR22} presented a more efficient PTAS for rooted orienteering on Euclidean metrics.

For orienteering on general metrics there are several $O(1)$-approximations \cite{blum2007approximation,chekuri2012improved,FriggstadS17}, with the current best being $(2+\eps)$-approximation by \cite{chekuri2012improved}.
For orienteering on directed graphs Chekuri and P{\'{a}l} \cite{ChekuriP05} present a quasi-polynomial time for various problems including deadline TSP, such as an $O(\log \opt)$-approximation for time window in quasi-polynomial time. Nagarajan and Ravi \cite{NagarajanR11} also present poly-logarithmic approximations for directed variants of $k$-TSP and Orienteering, some of these results were improved in \cite{BateniC}.
Chekuri et al. \cite{chekuri2012improved} present an $O(\log^2 \opt)$-approximation for the time-window version and an $O(\log \opt)$-approximation for {P2P orienteering} on directed graphs. There is also a line of research on stochastic orienteering (e.g. see \cite{Stochastic15, BansalN13, Jiang20}), with a constant factor approximation known for non-adaptive and $O(\log\log B)$-approximation for adaptive policies \cite{Stochastic15}.

As said earlier, the best known result for deadline TSP on general metrics is by \cite{Bansal04} which is
an $O(\log n)$-approximation. They also provide a bicriteria $(O(\log\frac{1}{\eps}),1+\eps)$. Assuming
distances (and deadlines) are integers, this implies an $O(\log D_{\max})$-approximation where $D_{\max}$ is the maximum deadline. Different extensions of deadline TSP have been studied also.
For instance, Bansal et al. \cite{Bansal04} present an $O(\log^2 n)$-approximation for the time-window version and an $O(\log D_{\max})$-approximation when the time window values are integer.
Chekuri et al. \cite{chekuri2012improved} prove that any $\alpha$-approximation for {P2P orienteering} implies
an $O(\alpha\max{\{\log \opt,\log \frac{L_{\max}}{L_{\min}}\}})$-approximation for
the time-window version where $\opt$ is the optimal value and $L_{\max}$ and $L_{\min}$ are the sizes of the largest and smallest windows. 
For the variant of time-window TSP where there are multiple time-windows given for each node, Garg et al. \cite{Gargkk21} proved an 
$\Omega(\frac{\log\log n}{\log\log\log n})$-hardness of approximation even on trees.
There are other variants of orienteering and deadline TSP that have been studied (see \cite{chekuri2012improved,ChekuriP05,FriggstadS22,GaoJMZ16}). 

Some variants of vehicle routing problems have been studied on doubling metric as well. Talwar \cite{Talwar04} presented a QPTAS for TSP on such metrics. Bartal et al. \cite{BartalGK16} building upon the work of Talwar presented a PTAS for TSP on doubling metrics. Several variants of TSP, such as Capacitated Vehicle Routing Problem (CVRP) or TSP with neighborhoods  have approximation schemes on doubling metrics \cite{ChanJ16,JS23}.

\section{Preliminaries}\label{sec:prel}
We consider a metric space $(V, d)$ as a (complete) weighted graph $G=(V,E)$ and by scaling we assume the minimum pair-wise distance is 1. For $v \in  V$ and $r  \geq  0$, we let $B(v, r) = \{ u \in  V~ |~  d(u, v) \leq  r \}$ denote the ball of radius $r$ around $v$.
 The \emph{doubling dimension} \cite{gupta2003bounded} of a metric space $(V, d)$ is the smallest value $\kappa$ such that for all $v \in  V$, for all $\rho  > 0$, every ball $B(v, 2\rho)$ can be covered by the union
of at most $2^{\kappa}$ balls of the form $B(z, \rho )$, where $z \in  V$. A metric is called \emph{doubling} when its doubling dimension is a constant.
For a subset $U \subseteq V$, the \emph{diameter} of $U$, denoted by $\Delta_U$, is defined as $\max_{u,v \in U} d(u,v)$.

\begin{definition}
A \textbf{tree decomposition} of a graph $G$ is a pair $(T, \{B_t\}_{t \in V(T)})$, where $T=(V',E')$ is a tree whose every node $t \in V'$ is assigned a vertex subset $B_t \subseteq V(G)$, called a bag, such that the following three conditions hold: 
\begin{enumerate}
    \item $\cup_{t \in V(T)} B_t = V(G)$. In other words, every vertex of $G$ is in at least one bag. 
    \item For every $uv \in E(G)$, there exists a node $t$ of $T$ such that bag $B_t$ contains both $u$ and $v$. 
    \item For every $u \in V(G)$, the set $T_u = \{t \in V(T) : u \in B_t \}$, i.e., the set of nodes whose corresponding bags contain $u$, induces a connected subtree of $T$. 
\end{enumerate}
\end{definition} 
The width of a tree decomposition is one less than the size of the largest bag of it.
A graph $G = (V,E)$ has treewidth $w$ if $w$ is the smallest width of a tree decomposition of $G$.

We now formally define the problems. We consider a metric space $(V, d)$ which is induced by an edge-weighted graph $G = (V, E)$. Let $P = \langle v_1, v_2, \ldots, v_k \rangle$ be a path in $G$, starting at $s=v_1$ and ending at $t=v_k$. The \emph{length of the path} is denoted by $||P|| = \sum_{i=1}^{k-1} d(v_i, v_{i+1})$. We use $|P|$ to denote the number of vertices on the path. If $P$ is a walk then $|P|$ is the number of distinct nodes in $P$. Let a \emph{$\mu$-jump of $P$} be a path $\langle v_1=v_{i_1}, v_{i_2}, \ldots, v_{i_{\mu}}=v_k \rangle$ of size $\mu \leq k$ obtained from $P$ by bypassing some of its intermediate points. 
The optimum $\mu$-jump of $P$, denoted by $J^*_\mu(P)$, is the $\mu$-jump of $P$ with the maximum length. We define the \emph{$\mu$-excess of $P$} (see \cite{chen2008euclidean}), denoted by $\calE_{P, \mu}$, to be the difference between the length of $P$ and the length of $J^*_\mu(P)$, i.e. $\calE_{P,\mu}=||P||-||J^*_\mu(P)||$. Observe that $\calE_{P, \mu}$ may be significantly smaller than $||P||$, especially when $\mu$ is sufficiently large.

In {$k$-stroll}, we are given a \emph{start node} $s$, an \emph{end node} $t$, and a \emph{target integer} $k$. The goal is to find a minimum length path from $s$ to $t$ that visits at least $k$ nodes.
An $(\eps,\mu)$-approximation for {$k$-stroll} is an algorithm that finds a path $P$ with $|P|=k$ where $||P||\leq ||P^*||+\eps\cdot\calE_{P^*,\mu}$ where $P^*$ is the optimum solution.

We study the following extension of {$k$-stroll} to \emph{multi-path} in Sections \ref{sec:poly_p2p} and \ref{sec:qptas_p2p} where we design our algorithms for {P2P orienteering}. 

\begin{definition}[{\bf multi-path $k$-stroll}]
Given a graph $G= (V, E)$, $m$ pairs of nodes $(s_i,  t_i)$, $1\leq i \leq m$, and a target integer $k$, the goal of {\bf multi-path $k$-stroll} is to find $m$ paths $P_i$ from $s_i$ to $t_i$ in $G$ such that:
\begin{itemize}
  \item the total number of distinct nodes visited by all paths $P_1, P_2, \cdots, P_m$ is at least $k$. (a node might be visited by multiple paths, but it is counted only once.)
  \item the total length of all $m$ paths $\sum_{i=1}^{m}||P_i||$ is minimized.
\end{itemize}
\end{definition}
Notice that {multi-path $k$-stroll} becomes {$k$-stroll} when $m=1$. 
In {P2P orienteering}, we are given a \emph{start node} $s$, an \emph{end node} $t$, and a \emph{budget} $B$. The objective is to find an $s$-$t$ path $P$ with $||P|| \leq B$ that maximizes the count of distinct nodes visited. 

For any path $P$ in $G$ and nodes $u, v$ in $P$, let $P_{uv}$ denote the \emph{subpath from $u$ to $v$} in $P$. In  deadline TSP, we are given a \emph{start node} $s \in V$ and \emph{deadlines} $D(v)$ for each node in $V$. A feasible solution is a path starting at $s$. We say that such a path visits node $v \in P$ by its deadline if $||P_{sv}||\leq D(v)$. The objective is to find a path $P$ starting at $s$ to maximize the count of distinct nodes visited within their deadlines. 

For graphs of bounded doubling dimension, like earlier works on problems on such metrics, we rely on a hierarchical decomposition of $V$. Suppose $(V,d)$ is a doubling metric with diameter $\Delta$. We employ the hierarchical decomposition of metric spaces by means of probabilistic partitioning. The decomposition is essentially the one introduced by Bartal~\cite{bar96}, and subsequently used by others. In particular, Talwar \cite{Talwar04} used this to design the first approximation scheme (QPTAS) for TSP and other problems in doubling metrics.
A \emph{cluster} $C$ in the metric $(V, d)$ is a subset of nodes in $V$. A \emph{decomposition} of the metric $(V, d)$ is a partitioning of $V$ into clusters.
A hierarchical decomposition of $V$ is a sequence of partitions of $V$, where each partition is a refinement of the previous one. Normally this is represented by a tree $T$ (called the {\it split-tree}), where each node of $T$ 
corresponds to a cluster. We use $C$ to both refer to a node in $T$ as well as the cluster (set of nodes in $V$) it corresponds to. The root node of $T$ corresponds to the single set $\{V\}$ and the leaf nodes correspond to singleton sets $\{\{v\}\}_{v\in V}$. The children of each node $C\in T$ correspond 
to a partition of $C$ where each part has diameter about half of that of $C$. 
The union of all subsets corresponding to the vertices at each level of $T$ constitutes a partition of $V$.

\begin{theorem}[\cite{Talwar04}]\label{thm:hier-dec}
There is a hierarchical decomposition of $V$, i.e. a sequence of partitions $\Pi_0$, $\Pi_1$, $\dots, \Pi_h$, 
where $\Pi_{i-1}$ is a refinement of $\Pi_i$, $\Pi_h=\{V\}$, and $\Pi_0=\{\{v\}\}_{v\in V}$, and satisfies the following:
\begin{enumerate}
\item $\Pi_0$ corresponds to the leaves and $\Pi_h$ corresponds to the root of the split-tree $T$, and height of
 $T$ is $h=\delta+2$, where $\delta = \log \Delta$ and $\Delta$ is the diameter of the metric.
\item For each $C\in T$ at level $i$, cluster $C\in\Pi_i$, has diameter at most $2^{i+1}$.
\item The branching factor $b$ of $T$ is at most $2^{O(\kappa)}$.
\item For any $u,v \in V$, the probability that they are in different sets corresponding to nodes in level $i$ of $T$ is at most
 $O(\kappa)\cdot \frac{d(u,v)}{2^i}$. 
\end{enumerate}
\end{theorem}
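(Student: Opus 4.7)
The plan is to construct the sequence $\Pi_0, \ldots, \Pi_h$ by an adaptation of the probabilistic partitioning scheme of Bartal~\cite{bar96} to doubling metrics, which is essentially the construction used by Talwar~\cite{Talwar04}. I would fix a single uniformly random permutation $\pi$ of $V$ together with a single family of random radii $r_0, r_1, \ldots, r_h$, and build all partitions consistently from these choices so that nestedness holds by construction.

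First I describe the single-scale step. At level $i$, pick a radius $r_i$ uniformly in $[2^{i-1}, 2^i)$, and declare that a vertex $v$ belongs to the cluster owned by the $\pi$-earliest vertex $u$ with $d(u,v) \le r_i$. Each cluster then has radius at most $r_i \le 2^i$, hence diameter at most $2^{i+1}$ by the triangle inequality, giving property~2. To guarantee that $\Pi_{i-1}$ refines $\Pi_i$, I would perform the level-$i$ partition inside each cluster of $\Pi_{i+1}$ separately, reusing the same $\pi$; nestedness is then automatic. The height bound (property~1) is immediate from the assumption that the minimum pairwise distance is at least $1$ and the diameter is $\Delta$: at level $0$ every cluster is a singleton, and at level $h = \log \Delta + 2$ the whole vertex set fits in a single cluster.

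For the branching bound (property~3), I would use the doubling property as a packing inequality. A cluster $C$ at level $i+1$ has diameter at most $2^{i+2}$ and hence lies in a ball of radius $2^{i+2}$; iterating the doubling definition a constant number of times covers this ball by $2^{O(\kappa)}$ balls of radius $2^{i-1}$. Each child of $C$ in $T$ has radius at most $2^i$ around some center vertex, so a standard packing argument forces the number of distinct children to be $2^{O(\kappa)}$.

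The main obstacle is property~4, the separation probability, which is where the CKR-style analysis meets the doubling dimension. For a fixed pair $u,v$, I would order the other vertices $w_1, w_2, \ldots$ by increasing distance to the pair. The pair is split at level $i$ only if some $w_j$ is the $\pi$-earliest vertex whose ball of radius $r_i$ contains exactly one of $u, v$. This event factorizes: (a) over the random radius $r_i$, the threshold must land in a window of width at most $d(u,v)$, which happens with probability $O(d(u,v)/2^i)$; and (b) over $\pi$, the vertex $w_j$ must precede $w_1, \ldots, w_{j-1}$, which happens with probability $1/j$. The crucial use of the doubling dimension is that only $w_j$ within distance $O(2^i)$ of the pair are relevant, and iterated doubling bounds the number of such vertices by $2^{O(\kappa)}$, whence $\sum_j 1/j = O(\kappa)$. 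Multiplying yields the desired $O(\kappa)\cdot d(u,v)/2^i$ bound.
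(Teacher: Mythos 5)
The paper does not give its own proof of Theorem~\ref{thm:hier-dec}; it is imported from Talwar~\cite{Talwar04}, whose construction (also instantiated in Section~\ref{sec:structure} of this paper under the name ``random partition'') builds each level's clusters from a \emph{random net} at scale $\Theta(2^i)$, not from a CKR ordering over all of $V$. That distinction is exactly where your proposal breaks.

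The doubling dimension controls the number of \emph{pairwise separated} points in a ball (Proposition~\ref{prop:packing}); it does not control the number of arbitrary points in a ball, which can be $\Theta(n)$ even when $\kappa=1$. So your claim in the proof of property~4 that ``iterated doubling bounds the number of such vertices by $2^{O(\kappa)}$'' is false, and the harmonic sum $\sum_j 1/j$ over all vertices within distance $O(2^i)$ of the pair is $O(\log n)$, not $O(\kappa)$ --- exactly the bound CKR gives on a general metric and too weak here. The same defect undermines property~3: with every vertex eligible to be a center, the centers of the children of a level-$(i{+}1)$ cluster need not be $\Omega(2^i)$-separated, so the packing argument you invoke does not bound the branching factor. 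Talwar's fix is to restrict the candidate centers to a $2^{i-2}$-net of the cluster being split, apply a random permutation to those net points, and choose a random radius $r_i \in [2^{i-1},2^i)$. Because net points are $2^{i-2}$-separated, Proposition~\ref{prop:packing} gives at most $2^{O(\kappa)}$ of them within distance $O(2^i)$ of any fixed point, which simultaneously yields a deterministic $2^{O(\kappa)}$ branching factor and makes the harmonic sum in your CKR estimate $O(\kappa)$. Properties~1 and~2, and the hierarchical nesting via recursion inside each parent cluster, go through essentially as you wrote them once centers are restricted to nets in this way.
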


\section{Orienteering and Deadline TSP in Metrics of Bounded Treewidth}\label{sec:poly_p2p}

In this section, we prove Theorem \ref{thm:boundedTW} by first
developing a Dynamic Programming (DP) to solve {P2P orienteering} and {$k$-stroll} exactly on graphs with bounded treewidth. We then show how to extend this to obtain a QPTAS for deadline TSP on such graphs (when $\Delta$ is quasi polynomial).

Recall that in both {P2P orienteering} and {$k$-stroll}, we are given a graph $G$ with start and end nodes $s,t \in V$. In {P2P orienteering}, there is a length budget $B$, and the objective is to find an $s$-$t$ path of length at most $B$ that visits as many nodes as possible while in {$k$-stroll} we are given an integer $k$ and the goal is to find a minimum-length $s$-$t$ path visiting at least $k$ nodes. It is easy to verify that an exact algorithm for {$k$-stroll} also provides an exact algorithm for {P2P orienteering}: one needs to solve {$k$-stroll} for all possible values of $k$ and return the path with length at most $B$ for the largest $k$. However, it is important to note that an approximation for {$k$-stroll} does not imply an approximation for {P2P orienteering}. In particular, despite the fact that graphs of bounded doubling dimensions can be embedded into graphs with poly-logarithmic treewidth with $(1 + \epsilon)$-distortion (\cite{Talwar04}), an exact algorithm for {P2P orienteering} does not provide a $(1+\epsilon)$-approximation for the {P2P orienteering} in doubling dimension (so Theorem \ref{thm:P2PDBL} is not implied immediately). 

\subsection{A Polynomial Time Algorithm for {$k$-stroll} and {P2P Orienteering}}

Given $G=(V,E)$ with treewidth $\omega$, pairs of nodes $s,t\in V$, and $B\in\QQ^+$ is an instance of {P2P orienteering}. Let $P^*$ be an optimal solution for this instance with $k=|P^*|$. Note $P^*$ is a feasible solution for the instance of $k$-stroll on $G$ with specified target integer $k$ and the same start and end node $s,t\in V$. Suppose that $P'$ is an optimal solution for the $k$-stroll instance. Note $||P'||\leq ||P^*||\leq B$.

Let $T$ be a tree decomposition of $G$. In \cite{BodlaenderH95}, the authors showed that for a graph $G = (V, E)$ with treewidth $\omega$, one can create a binary decomposition tree $T$ with a width not more that $\omega'=3\omega+2=O(\omega)$ and a height at most $\rho\log n$ for some $\rho>0$. We shall assume that $T$ possesses these two additional properties and use $\omega$ (instead of $\omega'$) to refer to its treewidth.

Let $T$ be rooted at an arbitrary bag, denoted as $r$. For any bag $b$, we use $V_{b}$ to denote as the vertices in bag $b$, and $C_b$ as the union of vertices in bags below and including $b$. Let $G_b=G(C_b)$ represent the subgraph in $G$ over vertices in $C_b$. Observe that any path from a vertex in $C_b$ to a vertex in $V\setminus C_b$ must pass through $V_b$ (i.e. $b$ is a cut set).  We also use $b$ to refer to both the node in $T$ and its corresponding subset of vertices $V_b$ in $V$ if it is clear from the context. For any bag $b$, we define $R_b$ to be the set of edges with one endpoint in bag $b_1$ and the other in bag $b_2$, where $b_1$ and $b_2$ are children of $b$ in $T$.

We present a DP based on tree decomposition $T$ of $G$ that can find a path of length at most $||P'||$ visiting $k$ vertices.  For $v\in V$ let $T_{v}$ denote the bags in $T$ containing $v$. Note property 3 of $T$ implies for any vertex $v\in V$, $T_{v}$ is a connected subtree in $T$. In order to avoid over-counting, for every vertex $v\in T$, we consider placing a token on $v$ at the root of $T_{v}$, i.e. the bag containing $v$ that is closest to the root bag in $T$. Our goal is to find a shortest $s-t$-path in $G$ which picks up at least $k$ tokens.
Note ''a token in $T$ is picked up'' means that the corresponding vertex of the token in $G$ is visited. For a path $P$ we adapt the notation and use $|P|$ to refer to the number of tokens picked by $P$ in $T$.

Note that for a bag $b\in T$, $P'$ restricted to $G_{b}$ may be a collection of sub-paths where they all enter and exit
$G_{b}$ via $V_b$. Since $|b|=\omega+1$ and each such path continues down $b_1$ or $b_2$ (children of $b$), the number of such sub-paths is at most $O(\omega^2)$. Thus we can define a subproblem in the DP as an instance of multi-path {$k$-stroll} on $G_{b}$ with specified target integer $k_{b}$ and $\sigma_{b}\in O(\omega^2)$ pairs of 
$(s_{i},t_{i})$, $1\leq i\leq \sigma_b$, where the goal is to find $\sigma_{b}$ many $P_{i}$ paths, each path being from $s_{i}$ to $t_{i}$, in $G_{b}$ such that the total number of tokens picked up by $P_{1},\cdots,P_{\sigma_{b}}$ is at least $k_b$ and the total length of them $\sum_{i=1}^{\sigma_{b}}||P_{i}||$ is minimized. We use $A[b,k_{b},(s_{i},t_{i})_{i=1}^{\sigma_{b}}] $ to denote the subproblem defined above and the entry of the table stores the optimal value of this subproblem (i.e. the sum of the lengths of these paths).

We compute the entries of this DP working bottom-up on $T$.
The base cases are when $b$ is a leaf node of $T$. For such instances, the graph $G_b=G(V_b)$ has constant size and therefore each such subproblem can be easily solved by exhaustive search (explained in details soon).
For a non-leaf node $b\in T$, let $b_{1}$ and $b_{2}$ be the children bags of $b$ in $T$ and recall $R_{b}$ is the set of edges in $G_{b}$ with one endpoint in bag $b_1$ and the other in bag $b_2$. Note  $|R_{b}|\leq|V_{b_{1}}| |V_{b_{2}}|\leq (\omega+1)^{2} $.
We guess $k_{b_{1}}$ and $k_{b_{2}}$ for $b_{1}$ and $b_{2}$ such that $k_{b_{1}}+k_{b_{2}}=k_{b}$. Then we guess $\sigma_{b_{1}}$ pairs $\{ (s_{i},t_{i})\}_{i=1}^{\sigma_{b_{1}}} $ for $b_{1}$ and $\sigma_{b_{2}}$ pairs $\{ (s_{i},t_{i})\}_{i=1}^{\sigma_{b_{2}}} $ for $b_{2}$: first we guess the set of edges of $P_{1},\cdots,P_{\sigma_{b}}$  crossing between $b_{1}$ and $b_{2}$, which is a subset of $R_{b}$, denoted as $E_{b}$. For each edge in $E_{b}$ we guess it is in which one of the $\sigma_{b}$ path with start and end node pair $(s_{i},t_{i})$ and for each path with start and end node pair $(s_{i},t_{i}) $ we further guess the order of the guessed edges appearing. Specifically speaking, let $e_{1},e_{2},\cdots,e_{l}$ be the edges guessed in order appearing in the path  with start and end node pair $(s_{i},t_{i}) $. 
Without loss of generality, say $s_{i}\in V_{b_{1}}$ and $t_{i}\in V_{b_{2}}$. Then we set $s_{i}$ and the endpoint of $e_{1}$ in $ V_{b_{1}}$ to be a start and end node pair in $b_{1}$, the endpoint of $e_{1}$ in $ V_{b_{2}}$ and the endpoint of $e_{2}$ in $ V_{b_{2}}$  to be a start and end node pair in $b_{2}$, $\cdots$, the endpoint of $e_{l}$ in $ V_{b_{2}}$ and $t_{i}$ to be a 
 start and end node pair in $b_{2}$. By doing so we generate start and end node pairs
for $b_{1}$ and $b_{2}$ and we sort them based on their appearing in $s_{i}$-$t_{i}$ path and in the increasing order of $i$. This defines $\sigma_{b_{1}}$ and $\sigma_{b_{2}}$ start and end node pairs for $b_{1}$ and $b_{2}$. We formalize the recursion:

\begin{itemize}

\item for any bag $b\in T$  let $b_{1}$ and $b_{2}$ be the children bags of $b$.

\item  guess $k_{b_{1}}$ and $k_{b_{2}}$ for $b_{1}$ and $b_{2}$  such that $k_{b_{1}}+k_{b_{2}}  =k_{b}$.  

\item guess a subset $E_{b}\subset R_{b}$. For each edge in $E_{b}$, we guess it is in which of the $\sigma_{b}$ path with start and end node pair $(s_{i},t_{i})_{i=1}^{\sigma_{b}}$ and for each path with start and end node pair $(s_{i},t_{i})$ we guess the order of the edges appearing as described above. We generate $\{ (s_{i},t_{i})\}_{i=1}^{\sigma_{b_{1}}}$ and $\{ (s_{i},t_{i})\}_{i=1}^{\sigma_{b_{2}}}$ for $b_{1}$ and $b_{2}$. 

 \item $A[b,k_{b},(s_{i},t_{i})_{i=1}^{\sigma_{b}}]=\\\min_{k_{b_{1}}, k_{b_{2}},(s_{i},t_{i})_{i=1}^{\sigma_{b_{1}}},(s_{i},t_{i})_{i=1}^{\sigma_{b_{2}}} } A[b_{1},k_{1},(s_{i},t_{i})_{i=1}^{\sigma_{b_{1}}}] +A[b_{2},k_{2},(s_{i},t_{i})_{i=1}^{\sigma_{b_{2}}}]+
    \sum_{(u,v)\in E_{b}}d(u,v) $
    
\end{itemize}

The final answer we are seeking is $A[r, k, (s, t)]$  where $r$ is the root bag in $T$, $k$ and $s,t$ are specified in {$k$-stroll}. The base case is when $b$ is a leaf bag in $T$, i.e. $C_{b}$ is exactly $V_{b}$, thus $|C_{b}|\leq (\omega+1)$. Note $\sigma_{b}$ is at most $O(\omega^2)$ in this case because there are at most $(\omega+1)^{2}$ pairs of vertices in $G_{b}$. We can enumerate all possible collections of $P_{1},\cdots, P_{\sigma_{b}}$ such that $P_{i}$ is a $s_{i}$-$t_{i}$ paths. Specifically speaking, we guess all possible subset of $V_{b}$, which are $2^{\omega+1}$ many. Then for a specific subset, denoted as $U$, for each vertex in $U$ we guess it is in which one of the $\sigma_{b}$ path with start and end node pair $(s_{i},t_{i})$ and for each path with start and end node pair $(s_{i},t_{i})$ we guess the order of vertices of $U$ appearing on the path, which is at most $|U|!|U|^{\sigma_{b}}= (\omega+1)!(\omega+1)^{(\omega+1)^{2}}$ guessings. Among these enumeration of $P_{1},\cdots, P_{\sigma_{b}}$, which is at most $O(\omega^{\omega^2})$ many, we consider the one such that $|P_{1}\cup \cdots \cup P_{\sigma_{b}} |\geq k_{b}$ and  $\sum_{i=1}^{\sigma_{b}} ||P_{i}||$ is minimized.

We show the running time of computing one entry of the dynamic programming table is at most $O(\omega^{\omega^2}n)$:

In the recursion, for bag $b$ and its children $b_{1}$ and $b_{2}$, there are at most $k_b$ guesses for $k_{b_{1}}$ and $k_{b_{2}}$ such that $k_{b_{1}} +k_{b_{2}}=k_{b} $. For $E_{b}$:  because  $E_{b}\subset R_{b}$ and $|R_{b}|\leq (\omega+1)^{2} $ there are at most $O(2^{\omega^{2} })$ guesses. To generate $(s_{i},t_{i})$ ($1\leq i\leq\sigma_{b_{1}}$) for $b_{1}$ and $(s_{i},t_{i})$ ($1\leq i\leq \sigma_{b_{2}}$) for $b_{2}$, for a certain $E_{b}$ and for each edge in $E_{b}$ we guess this edge is in which one of $\sigma_{b}$ paths with start and end node pair $(s_{i},t_{i})$ ($1\leq i\leq\sigma_{b}$) and for each path with source-sink pair $(s_{i},t_{i})$ we guess the order of the edges appearing in it, which are at most $|E_{b}|! |E_{b}|^{\sigma_{b}} $ guesses. Note a start and end node pair in $\sigma_{b}$ is a pair of vertices in $V_{b}\cup (\{s,t\}\cap C_{b}) $, thus $\sigma_{b}\leq O(\omega^2)$ and therefore  $|E_{b}|! |E_{b}|^{\sigma_{b}}$ is at most  $O((\omega^2)!\omega^{\omega^{2}})$.

Now we bound the size of the DP table by $O(n^{\rho+1}\omega^{\omega^2})$:
Recall an entry of the table is $A[b,k_{b},(s_{i},t_{i})_{i=1}^{\sigma_{b}}]$. 
Since $T$ is binary and of height $\rho\log n$, there are $O(n^\rho)$ many bags.
There are at most $n$ choices for $k_b$ and $O(\omega^2)$ choices for
$(s_{i},t_{i})$, $1\leq i\leq\sigma_b$.

\begin{theorem}
    Let $G=(V,E)$ be a graph with tree width $w$, source-sink nodes $s,t\in V$, and integer $k$ as an instance of {$k$-stroll}. We can find an optimum solution in time $O(n^{\rho+2}\omega^{\omega^2})$ time. 
\end{theorem}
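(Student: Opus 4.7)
The plan is to verify correctness of the dynamic program already set up above and then combine the per-entry and table-size bounds established in the text to get the claimed running time. Correctness rests on three observations that I would state and justify in order.

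First I would argue the \emph{cut-set structure}: for every bag $b$, the vertex set $V_b$ separates $C_b$ from $V\setminus C_b$ in $G$, so any walk in $G$ intersected with $G_b$ decomposes into a collection of sub-walks whose endpoints all lie in $V_b\cup(\{s,t\}\cap C_b)$. Since each sub-walk enters and leaves through $V_b$ and $|V_b|\le\omega+1$, and $V_b$ partitions into the parts belonging to the two children $b_1,b_2$, the number of such sub-walks restricted to $G_b$ is $\sigma_b=O(\omega^2)$. This justifies restricting the DP signature at bag $b$ to at most $\sigma_b$ start--end pairs.

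Second, I would verify the \emph{token-assignment trick} guarantees correct vertex counting. Place the token for $v$ at the unique bag $r(v)$, the root of the connected subtree $T_v$ (which is well-defined by property~3 of tree decompositions). Then for every bag $b$, a vertex $v\in C_b$ has $r(v)\in T_b$ iff $v$ is ``first seen'' inside the subtree rooted at $b$; thus when we accumulate $k_{b_1}+k_{b_2}$ tokens at the children, we never double-count a vertex that happens to sit in both $V_{b_1}\cap V_{b_2}$, because each vertex's token is assigned to exactly one bag. Consequently the DP value $A[r,k,(s,t)]$ equals the length of the shortest $s$--$t$ walk in $G$ hitting at least $k$ distinct vertices.

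Third, I would justify the \emph{merge step}. Given a candidate solution for bag $b$ with signature $(s_i,t_i)_{i=1}^{\sigma_b}$, its edge-set crosses between $G_{b_1}$ and $G_{b_2}$ via some $E_b\subseteq R_b$, and inside each $s_i$--$t_i$ sub-walk these crossing edges appear in a specific order. Enumerating $E_b$, the assignment of each crossing edge to one of the $\sigma_b$ sub-walks, and their linear order on that sub-walk, canonically yields the induced signatures $(s_i,t_i)_{i=1}^{\sigma_{b_j}}$ for $j=1,2$ and the split $k_{b_1}+k_{b_2}=k_b$. Taking the minimum over all such consistent guesses is therefore both feasible (every combination indeed glues back into a walk with the target signature by property~2 of tree decompositions, which guarantees each edge of $G$ is contained in at least one bag) and optimal (the true optimum restricted to $G_b$ is one of the guessed configurations). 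The leaf case is correct by exhaustive enumeration over $V_b$, which has size $\le\omega+1$.

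The running time follows by multiplying the table size $O(n^{\rho+1}\omega^{\omega^2})$ by the per-entry work $O(\omega^{\omega^2}n)$ derived in the text, giving $O(n^{\rho+2}\omega^{\omega^2})$ after absorbing the polynomial factors in $\omega$ into the $\omega^{\omega^2}$ term. The only mildly delicate step is the bookkeeping in the merge to show that the enumeration of $(E_b,\text{assignment},\text{order})$ is in bijection with the ways an optimal $G_b$-restricted solution can split across $b_1$ and $b_2$; once that is laid out carefully, correctness is immediate and the running time bound is a direct multiplication.
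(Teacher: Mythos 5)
Your proposal follows exactly the same approach as the paper: a DP on the tree decomposition whose subproblems are multi-path $k$-stroll instances indexed by bag, token count, and $O(\omega^2)$ start--end pairs; the token-at-root-of-$T_v$ device to avoid double-counting; the merge step that enumerates crossing edges in $R_b$ together with their assignment to sub-paths and ordering; and multiplying the per-entry work $O(\omega^{\omega^2}n)$ by the table size $O(n^{\rho+1}\omega^{\omega^2})$. The only point worth tightening is the running-time arithmetic: the product is $\omega^{2\omega^2}$, not literally $\omega^{\omega^2}$, so the equality holds only under the implicit reading $\omega^{O(\omega^2)}$ — the paper makes the same silent simplification.
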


For the P2P orienteering instance on $G$ with specified budget $B$, source-sink nodes $s,t\in V$, we try all possible $k$ (from $1$ to $n$) for the size of the optimum path $P^*$ and for each $k$
we compute the optimal for the $k$-stroll instance on $G$ with the corresponding $k$. We return the maximum $k$ such that the length of the optimal for corresponding $k$-stroll instance on $G$ is at most $B$. This together with the above theorem complete
the proof of the first part of Theorem \ref{thm:boundedTW}.

\subsection{An Approximation Scheme for Deadline TSP}

In this section we show how we can extend the exact algorithm for {P2P orienteering} on bounded treewidth graphs to an approximation scheme for deadline TSP assuming that distances and deadlines are integer, which completes the proof of Theorem \ref{thm:boundedTW}. The idea is to guess a sequence of vertices of the optimum where the $\mu$-excess is increasing geometrically (say by a factor of $(1+\eps)^i$), then try to solve {P2P orienteering} concurrently for the instances defined between those vertices and then drop a small fraction of the points from the solutions for the {P2P orienteering} instances so that the saving in time is enough to make sure all the vertices are visited by their deadlines.
This is building upon ideas of \cite{FriggstadS17} and ideas we develop later for approximation of {P2P orienteering} on doubling metrics (Section \ref{sec:qptas_deadlineTSP}).

Let $G=(V,E)$ be a graph with bounded treewidth $\omega$, a start node $s\in V$ and $D(v)$ for all $v\in V$ as an instance of deadline TSP on $G$. Let $n=|V|$ and $\delta=\log \Delta_{G}$.
Let $\mu=\lfloor \frac{1}{\eps} \rfloor+1$ and $\alpha=(1+\eps)$.  Let $P^{*}$ be an optimal for this deadline TSP instance and $\langle v_{0},v_{1},\cdots,v_{m}\rangle$ be a sequence of vertices in $P^{*}$ satisfying the following properties:

\begin{itemize}
    \item $v_{0}=s$ is the start node of $P^{*}$.
    \item  $v_{i+1}$ is the first vertex in $P^{*}$ after $v_{i}$ such that $ \calE_{P^{*}_{v_{i}v_{i+1}},2}> \alpha^{i}$, except possibly for $v_{m}$ which is the last vertex of $P^{*}$.
\end{itemize}

We also denote the vertex on $P^*$ just before $v_{i+1}$ by $v'_i$. Since $||P^*||\leq n\Delta$, $m\leq h\delta$ for some $h=h(\eps)>0$.
We can assume that $|P^*_{v_iv_{i+1}}|\geq \mu^2$, otherwise we can compute $P^*_{v_iv_{i+1}}$ exactly using exhaustive search.
For each $0\leq i <m$, we break $P^*_{v_iv_{i+1}}$ into $\mu-1$ subpaths of (almost) equal sizes, denoted as  $P^*_{i,j}$, for $1\leq j <\mu$,
by selecting a $\mu$-jump $J_i: v_i=u^1_i,u^2_i,\ldots,u^{\mu}_i=v_{i+1}$ of $P^*_{v_iv_{i+1}}$ and letting $P^*_{i,j}$ be the segment of $P^*$ between $u^j_i$ and $u^{j+1}_i$. The $\mu$-jump $J_i$ is defined as follows. Assume $P^*_{v_iv_{i+1}}$ has size $k_{i}$ and say $P^*_{v_iv_{i+1}} =\langle v_{i,1},\cdots,v_{i,k_i} \rangle $ where $v_i=v_{i,1}$ and $v_{i,k_i}=v_{i+1}$. In other words, if we let $a_{j}= \lceil \frac {(j-1)(k_i-1)}{\mu -1} \rceil +1$ then $a_{1}=1$, $a_{\mu}=k_{i}$, and if 
we consider $v_{i,a_1},\ldots,v_{i,a_{\mu}}$ then we obtain $J_i$ by letting
$v_{i,a_j}=u^j_i$. 
Suppose $J^*_\mu=J^*_\mu(P^*_{v_iv_{i+1}})$ is the optimum $\mu$-jump of $P^*_{v_iv_{i+1}}$, which is the $\mu$-jump with the maximum length. Recall that $\calE_{P^*_{v_iv_{i+1}},\mu}$ is the $\mu$-excess of $P^*_{v_iv_{i+1}}$ and with  
$B_i= ||J^*_\mu(P^*_{v_iv_{i+1}})||+\calE_{P^*_{v_iv_{i+1}},\mu}$ we have $||P^*_{v_iv_{i+1}}||= B_i$. To simplify the notation we denote the $\mu$-excess of subpath $P^*_{v_iv_{i+1}}$ by $\calE^*_{i,\mu}$. We also use $\calE'_{i,\mu}$ to denote the $\mu$-excess of path $P^*_{v_iv'_i}$. 
Let $v$ be an arbitrary vertex in $P^*_{v_iv_{i+1}}$ that falls in between 
$u^j_i$ and $u^{j+1}_i$. We use $||J_i(v_i,u^{j}_i)||$ to denote the length of the $J_i$ path from $v_i$ to $u^j_i$ (i.e. following along $J_i$ from the start node $v_i$ to $u^j_i$). Define $L_{i,j}=\sum_{j=0}^{i-1} ||P^*_{v_j v_{j+1}}||+||J_i(v_i,u^{j}_i)||$. 
Note that the visiting time of $v$ in $P^*$ (and hence the deadline of $v$) is lower bounded by $L_{i,j}$.

Let $N_{i,j}=\{ v: D(v)\geq L_{i,j} \}$. Note $P^{*}_{i,j}$ ($0\leq i<m, 1\leq j<\mu$) is a feasible solution of the multiple groups-legs orienteering instance (Definition \ref{def:mglo}) with groups $ 0\leq i<m$ and legs $1\leq j<\mu$, start and end node pairs $(u_i^{j}, u_i^{j+1})$, budgets $B_{i}$ and subset $N_{i,j}$. We will show there is a $(1+\epsilon)$-approximation i.e. a set of paths $Q_{i,j}$ such that $Q_{i,j}$ is a  $u^j_{i},u^{j+1}_i$-path and if we define concatenation of different legs of group $i$ by $Q_i=Q_{i,1}+\ldots+Q_{i,\mu-1}$ then 
\begin{equation}\label{eqn:bteq1}
    ||Q_i||=\sum_{j=1}^{\mu-1}||Q_{i,j}||\leq ||P^*_{v_iv_{i+1}}||-\eps\calE^*_{i,\mu}\quad\quad\text{and}\quad\quad
|\bigcup_{i=0}^m Q_i|=|\bigcup_{i=0}^m\cup_{j=1}^{\mu-1}(Q_{i,j}\cap N_{i,j})|\geq (1-3\eps)|P^*|.
\end{equation} 
We also show that if $v$ is visited by $Q_{i,j}$ then if $Q_{i,j}(u^j_i,v)$ denotes the segment of path $Q_{i,j}$ from $u^j_i$ to $v$, then the length of the segment from $v_i$ to $v$ in $Q_i$ can be upper bounded:
\begin{equation}\label{eqn:bteq2}
||Q_i(v_i,v)||=\sum_{\ell=1}^{j-1}||Q_{i,\ell}|| + ||Q_{i,j}(u^j_i,v)|| \leq ||J_i(v_i,u^j_i)||+(1-\eps)\calE'_{i,\mu}.
\end{equation}

We will show the existence of such paths $Q_{i,j}$  and also how to find these using a DP.
For now suppose we have found such paths $Q_i$ as described above. We concatenate all these paths to obtain the final answer $\calQ=Q_0+Q_1+\ldots+Q_{m}$. We show the vertices of $\calQ$ 
are visited before their deadlines and hence we have an approximation  for deadline TSP. Given the bounds given for the sizes of $Q_i$'s in (\ref{eqn:bteq1}),  the number of vertices visited overall (respecting their deadlines) is at least $(1-3\eps)|P^*|$.

To see why the vertices in $\calQ$ respect their deadlines consider an arbitrary node $v\in Q_i$.
Note that each $Q_i$ contains the vertices in $J_i$ (as those are the vertices that define $\mu-1$ legs of the $i$'th group). Suppose $v$ is visited in $Q_{i,j}$, i.e. between $u^j_i$ and $u^{j+1}_i$. Therefore, the visit time of $v$ in $\calQ$, i.e. $||\calQ_{sv}||$ is bounded by:

\begin{eqnarray*}
  ||\calQ_{sv}|| &=& \sum_{\ell=0}^{i-1}||Q_\ell|| + ||Q_i(v_i,v)|| \\
    &\leq& \sum_{\ell=0}^{i-1} (||P^*_{v_\ell v_{\ell+1}}|| - \eps\calE^*_{\ell,\mu}) + ||J_i(v_i,u^j_i)||+(1-\eps)\calE'_{i,\mu} \quad\quad\quad\quad\quad\mbox{using (\ref{eqn:bteq1}) and (\ref{eqn:bteq2})}\\
    &=& L_{i,j} + (1-\eps)\calE'_{i,\mu}-\eps\sum_{\ell=0}^{i-1}\calE^*_{\ell,\mu}\\
    &\leq& D(v)
\end{eqnarray*}
where the last inequality follows from the fact that $\calE'_{i,\mu}\leq\alpha^i-1$ and
$\calE^*_{\ell,\mu}>\alpha^\ell$ so $\eps\sum_{\ell=0}^{i-1}\calE^*_{\ell,\mu}\geq  \eps\sum_{\ell=0}^{i-1}\alpha^\ell = (\alpha^i-1)\geq \calE'_{i,\mu}$.

So we need to show the existence of $Q_i, 0\leq i<m$ as described and how to find them. To simplify notation, for each $i$, let $P^{*}_{i}$ be the subpath $P^*_{v_iv_{i+1}}$ and $P^{*}_{i,j} $ be the subpath $P^*_{u_i^j u_{i}^{j+1}}$. We start from $P^{*}_{i,j}$ ($1\leq j<\mu$), for each $P^{*}_{i,j}$ we consider the $2$-excess of it and let $j'$ be the index that $P^*_{i,j'}$ has the largest  2-excess among indices $1,\ldots,\mu-2$. We consider short-cutting the two subpaths $P^*_{i,j'}$ and $P^*_{i,\mu-1}$ and let the resulting path obtained from $P^*_i$ by these short-cutting be $Q_i$. In other words, $Q_i$ is the same as $P^*_i$ except that each of $P^*_{i,j'}$ and $P^*_{i,\mu-1}$ are replaced with the direct edges $(u^{j'}_i,u^{j'+1}_i)$ and $(u^{\mu-1}_i,v'_i)$, respectively.
Let $D^2_i=\calE_{P^*_{i,j'},2}+\calE_{P^*_{i,\mu-1},2} $. We have $\calE_{P^*_{i,j'},2}  \geq\frac{1}{\mu-1}\sum_{j=1}^{\mu-2}\calE_{P^*_{i,j},2}$. Recall that $v'_i$ is the vertex on $P^*$ just before $v_{i+1}$, so $||P^*_i(v_i,v'_i)||=||J^*_\mu(P^*_i(v_i,v'_i))||+\calE'_{i,\mu}$ (this is from the definition of optimum $\mu$-jump of the path $P^*_i(v_i,v'_i)$). So for vertices in $Q_{i}$ we can bound the length of the subpath from $v_i$ to $v$ by:  
\begin{equation}\label{}
||Q_i(v_i,v)||\leq ||P^*_i(v_i,v)||- \calE_{P^*_{i,j'},2} \leq ||J_\mu(P^*_i(v_i,u_i^j))||+\calE'_{i,\mu}-\frac{1}{\mu-1}\sum_{j=1}^{\mu-2}\calE_{P^*_{i,j},2}\leq ||J_i(v_i,u_i^j)|| +(1-\eps)\calE'_{i,\mu}.
\end{equation}

Also for the total length of $Q_i$ we have:
\begin{equation}\label{}
    ||Q_i||\leq ||P^*_i|| - D^2_i \leq ||P^*_i||-\frac{1}{\mu-1}\sum_{j=1}^{\mu-1}\calE_{P^*_{i,j},2}\leq ||P^*_{v_{i}v_{i+1}}||-\eps \calE^{*}_{i,\mu} .
\end{equation}

As for the size note that $|P^{*}_{i,j'}|=a_{j'+1}-a_{j'}+1 =(\lceil \frac {(j'+1)(k_{i}-1)  }{\mu -1} \rceil +1) -  (\lceil \frac{j'(k_{i}-1)  }{\mu-1}  \rceil+1) +  1 \leq \lceil\frac{k_{i}-1}{\mu-1}  \rceil +1 $. Same bound holds for $|P^*_{i,\mu-1}|$.
From the construction of $Q_{i}$: $|Q_{i}|=k_{i}-|P^{*}_{i,j'}|+2  -|P^*_{i,\mu-1}| +2 \geq k_{i} - 2\lceil\frac{k_{i}-1}{\mu-1}  \rceil +4 \geq  k_{i} - 2\lfloor\frac{k_{i}-1}{\mu-1} \rfloor  \geq (1-3\epsilon) k_{i} $, since we assumed $k_i\geq\mu^2$.

Now we describe our approximation algorithm for deadline TSP on the given
instance $\calI$ based on graph $G$ of bounded $\omega$ treewidth.
The algorithm has two phases. In Phase 1 we guess the vertices
$v_1,\ldots,v_m$ of optimum (as described) as well as $u^1_i,\ldots,u^{\mu-2}_i$ for each $0\leq i<m$, and also guess $\calE^*_{i,\mu}$ and $||J^*_\mu(P^*_{v_iv_{i+1}})||$; so we have
guessed $B_i=||J^*_\mu(P^*_{v_iv_{i+1}})||+\calE^*_{i,\mu}$, which imply sets $N_{i,j}$ as well. For each $i$ we can guess $k_i=|P^*_{v_iv_{i+1}}|$ and for those $k_i<\mu^2$ we guess $P^*_{v_iv_{i+1}}$ exactly. All the guesses in Phase 1 can be done in time $(n\Delta)^{O(\mu^2h\delta)}$. 

In Phase 2 we find near optimum solution $Q$ by finding $Q_i$'s as described for those $i$'s for which $k_i\geq\mu^2$ using DP. In the remaining we assume all $i$'s satisfy $k_i\geq\mu^2$.
Let $T$ be a tree decomposition of $G$ that it is binary with height $ \rho\log n$ for some constant $\rho>0$ and the width of $T$ is at most $\omega$. 
We present a DP based on the tree decomposition $T$ that can find $Q_{0},Q_{1},\cdots,Q_{m}$. Recall for any vertex $v\in V$, the bags in $T$ containing $v$, i.e. $T_{v}$ is a connected subtree in $T$. In order to avoid over-counting, for every vertex $v\in T$, we consider placing a token on $v$ at the root of $T_{v}$. We adapt the notation and use $|Q_{i,j}\cap N_{i,j}|$ to refer to the number of tokens picked by $Q_{i,j}$ in $N_{i,j}$. 

Note for bag $b$, let $C_{b}$ denotes the union of vertices in bags below and including $b$ and $G_{b}$ represent the subgraph in $G$ over vertices in $C_{b}$. For any bag $b\in T$, any group 
$0 \leq i <m$, and any leg $1 \leq j <\mu$, the restriction of $Q_{i,j}$ in the subgraph $G_{b}$ may be a collection of sub-paths where they all enter and exit $G_{b}$ via $V_{b}$ where the number of such sub-paths is at most $O(\omega^{2})$ because $|V_{b}|\leq \omega+1$. We can define a subproblem in the DP as an instance of multi-groups-legs multi-paths orienteering on $G_{b}$ with groups $0\leq i<m$ and legs $1\leq j<\mu$  start and end node pairs  $(s_{i,j,l},t_{i,j,l}),1\leq l \leq \sigma_{b,i,j}$, budgets $B_{b,i}$ and subset $N_{i,j}\subset V$. The goal is to find a collection of path $ Q_{i,j,l}$ such that $Q_{i,j,l} $ a $s_{i,j,l}$-$t_{i,j,l}$ path,  $\sum_{j=1}^{\mu-1}\sum_{l=1}^{\sigma_{b,i,j}}||Q_{i,j} ||$ is at most $B_{b,i}$ and $| \cup_{i=0}^{m-1} \cup_{j=1}^{\mu-1} ( \cup_{l=1}^{\sigma_{b,i,j}} Q_{i,j,l})\cap N_{i,j}  )  |  $ is  maximized. We use $A[b, \{ B_{b,i}\}_{0\leq i<m}, \{ (s_{i,j,l},t_{i,j,l})_{l=1}^{\sigma_{b,i,j}}\}_{0\leq i<m;1\leq j<\mu}]$ to denote the subproblem defined above and entry of the table store the optimal value of the subproblem.

We compute the entries of this DP from bottom to up of $T$. The base cases are when $b$ is a leaf bag of $T$, where $G_{b}$ has constant size hence each such subproblem can be solved by exhaustive search. In the recursion, consider any entry $A[b, \{ B_{b,i}\}_{0\leq i<m}, \{ (s_{i,j,l},t_{i,j,l})_{l=1}^{\sigma_{b,i,j}}\}_{0\leq i<m;1\leq j<\mu}] $, let $b_{1}$ and $b_{2}$ be the children bags of $b$ and let $ R_{b}$ is the set of edges in $G_{b}$ with one endpoint in $b_{1}$ and the other in $b_{2}$. Note $|R_{b}|\leq |V_{b_{1}}| |V_{b_{2}}| = O(\omega^{2})$. For each group $i$ and each leg $j$, first we guess a subset of $R_{b}$ (the set of edges $Q_{i,j,l}, 1\leq l\leq \sigma_{b,i,j}$ crossing between $b_{1}$ and $b_{2}$) such that they are disjoint (for different $i,j$) and for every edge in $E^{i,j}_{b}$ both end points are in $N_{i,j}$ and , denoted as $E^{i,j}_{b}$. For each $i$ we guess $B_{b_{1},i}$ and $B_{b_{2},i}$ such that  $B_{b_{1},i}+B_{b_{2},i} +\sum_{j=1}^{\mu-1} \sum_{(u,v) \in E^{i,j}_{b} }d(u,v)=B_{b,i} $. 
We show how to guess $(s_{i,j,l},t_{i,j,l})_{l=1}^{\sigma_{b_{1},i,j}}$ for $b_{1}$ and $(s_{i,j,l},t_{i,j,l})_{l=1}^{\sigma_{b_{2},i,j}}$ for $b_{2}$ and check the consistency of them: for each $E^{i,j}_{b} $ and for each edge in $E^{i,j}_{b} $ we guess it is in which one of the $\sigma_{b,i,j}$ path with the start and end node pair $(s_{i,j,l},t_{i,j,l})_{l=1}^{\sigma_{b,i,j}}$ and for each path with start and end node pair $(s_{i,j,l},t_{i,j,l})$ we guess the order of the guessed edges appearing on the path. Specifically speaking, let $e_{1},e_{2}\cdots,e_{w}$ be the edges guessed in the path with start and end node pair $(s_{i,j,l},t_{i,j,l})$ appearing in this order. Without loss of generality, say $s_{i,j,l} \in V_{b_{1}} $ and $t_{i,j,l} \in V_{b_{2}} $.  Then we set $s_{i,j,l}$ and the endpoint of $e_{1}$ in $V_{b_{1}}$ to be a start and end node pair in group $i$ and leg $j$ in $b_{1}$, the endpoint of $e_{1}$ in $V_{b_{2}}$  and the  endpoint of $e_{2}$ in $V_{b_{2}}$ to be a start and end node pair in group $i$ and leg $j$ in $b_{2}$, $\cdots$, the endpoint of $e_{w}$ in $V_{b_{2}}$ and $t_{i,j,l}$ to be a start and end node pair in group $i$ and leg $j$ in $b_{2}$. By doing so we generate start and end node pairs in group $i$ and leg $j$ for $b_{1}$ and $b_{2}$ and we sort them based on their appearing in $s_{i,j,l}$-$t_{i,j,l}$ path. This defines $\sigma_{b_{1},i,j}$ and $\sigma_{b_{2},i,j} $ start and end node pairs for $b_{1}$ and $b_{2}$. Formally, to compute $A[b, \{ B_{b,i}\}_{0\leq i<m}, \{ (s_{i,j,l},t_{i,j,l})_{l=1}^{\sigma_{b,i,j}}\}_{0\leq i<m;1\leq j<\mu}] $:
\begin{itemize}

\item  let $b_{1}$ and $b_{2}$ be the children bags of $b$. Let $R_{b} $ be the set of edges in $G_{b}$ crossing $b_{1}$ and $b_{2}$. 

\item for each $i$ and each $j$, we guess a subset of $ R_{b}$, denoted as $E^{i,j}_{b}$ such that they are disjoint (for different $i,j$) and for every edge in $E^{i,j}_{b} $ both end points are in $N_{i,j}$.

\item for each $i$, we guess $B_{b_{1},i}$ and $B_{b_{2},i}$ such that $ B_{b_{1},i}+B_{b_{2},i}  + \sum_{j=1}^{\mu-1}\sum_{(u,v)\in E^{i,j}_{b}} d(u,v) =B_{b,i} $. 

\item for each $E^{i,j}_{b} $ and each edge in $E^{i,j}_{b}$, we guess it is in which one of the $\sigma_{b,i,j}$ path with start and end node pair $(s_{i,j,l},t_{i,j,l})_{l=1}^{\sigma_{b,i,j}}$ and for each path with source-sink pair $(s_{i,j,l},t_{i,j,l})$ we guess the order of the edges appearing on the path as described above. We generate start-end pairs in group $i$ and leg $j$  for $b_{1}$ and $b_{2}$ accordingly. Then:

\item $A[b,\{ B_{b,i}\}_{0\leq i<m}, \{ (s_{i,j,l},t_{i,j,l})_{l=1}^{\sigma_{b,i,j}}\}_{0\leq i<m;1\leq j<\mu}] =\\ \max [b_1, \{ B_{b_1,i}\}_{0\leq i<m}, \{ (s_{i,j,l},t_{i,j,l})_{l=1}^{\sigma_{b_1,i,j}}\}_{0\leq i<m;1\leq j<\mu}] + [b_2, \{ B_{b_2,i}\}_{0\leq i<m}, \{ (s_{i,j,l},t_{i,j,l})_{l=1}^{\sigma_{b_2,i,j}}\}_{0\leq i<m;1\leq j<\mu}]  $, where the maximum is taken over all tuples\\ $( \{B_{b_1,i}\}_{0\leq i<m}, \{B_{b_2,i}\}_{0\leq i<m}, \{ (s_{i,j,l},t_{i,j,l})_{l=1}^{\sigma_{b_1,i,j}}\}_{0\leq i<m;1\leq j<\mu}, \{ (s_{i,j,l},t_{i,j,l})_{l=1}^{\sigma_{b_2,i,j}}\}_{0\leq i<m;1\leq j<\mu})$ as described above.

\end{itemize}

As said earlier, we have guessed
$v_0,v_1,\ldots,v_m$, $u^1_i,\ldots,u^{\mu-2}_i$, and $B_i$'s in Phase 1. The goal is to compute $A[r,\{ B_{i}-\eps \calE^{*}_{i,\mu} \}_{0\leq i<m},\{(u_i^{j},u_i^{j+1})\}_{0\leq i<m;1\leq j<\mu} ]$ where $r$ is the root bag in $T$.

Now we analyze the running time of DP. First we show the running time of computing one entry of the dynamic programming table is at most $n^{O((\frac{\delta}{\eps})^{2})}$. In the recursion, for bag $b$ and for $\{E^{i,j}_{b}\}_{0\leq i<m;1\leq j<\mu}$: for each $i$ and leg $j$, because $E^{i,j}_{b} \subset R_{b}$ and $|R_{b}|\leq (\omega+1)^{2}$ thus there are at most $ [2^{(\omega+1)^{2}}]^{m\mu}$ many possible $E^{i,j}_{b} $ to consider for all $i$ and all $j$. There are at most $ (n\Delta_{G})^{m\mu}=n^{O((\frac{\delta}{\eps})^{2})}$ guessings for $B_{b_{1},i}$ and $ B_{b_{2},i}$ for $b_{1}$ and $b_{2}$ such that $ B_{b_{1},i}+B_{b_{2},i}  + \sum_{j=1}^{\mu-1}\sum_{(u,v)\in E^{i,j}_{b}} d(u,v) =B_{b,i} $ for all $i$. To generate $ \{(s_{i,j,l},t_{i,j,l})_{ l=1}^{\sigma_{b_{1},i,j}}\}_{0\leq i<m; 1\leq j<\mu}$ and  $\{(s_{i,j,l},t_{i,j,l})_{ l=1}^{\sigma_{b_{2},i,j}}\}_{0\leq i<m; 1\leq j<\mu}$ for $b_1$ and $b_{2}$: for each group $i$ and leg $j$ and for each edge in $E^{i,j}_{b}$ we guess it is in which one of $\sigma_{b,i,j}$ path with start and end node pair $(s_{i,j,l},t_{i,j,l})$, $1\leq l\leq\sigma_{b,i,j}$, and for each path with start and end node pair $(s_{i,j,l},t_{i,j,l})$ we guess the order of the edges appearing, which is at most $ |E^{i,j}_{b}|! |E^{i,j}_{b}|^{\sigma_{b,i,j}}$ guessings. Note a start and end node pair in $\sigma_{b,i,j}$ is a pair of vertices in $V_{b}\cup(\{u_i^j, u_i^{j+1}\}\cap C_{b})$, thus $\sigma_{b,i,j}\leq (\omega+3)^{2}= O(\omega^{2})$. Therefore the total guessings for all $i$ and all $j$ is at most $ (|E^{i,j}_{b}|! |E^{i,j}_{b}|^{\sigma_{b,i,j}})^{m\mu}\leq ((\omega+1)^{2}!(\omega+1)^{2 (\omega+3)^{2}})^{(m+1)\mu}=n^{O(\omega^2\delta/\eps)})$.

We show the size of the dynamic programming table is at most $n^{O((\frac{\omega\delta}{\eps})^{2})}$. Recall an entry of the table is $A[b,\{ B_{b,i}\}_{0\leq i<m}, \{ (s_{i,j,l},t_{i,j,l})_{l=1}^{\sigma_{b,i,j}}\}_{0\leq i<m;1\leq j<\mu}] $. For $b$, since $T$ is binary and the height of $T$ is $\rho \log n$, there are $O(n^{\rho})$ many bags. For $B_{b,0},\cdots,B_{b,m} $, there are at most $(n\Delta_{G})^{m}$ choices for and for $\{ (s_{i,j,l},t_{i,j,l})_{l=1}^{\sigma_{b,i,j}}\}_{0\leq i<m;1\leq j<\mu}$, there are at most $ ((\omega+1)^{2(\omega+3)^{2}})^{m\mu}$ possible start-end node pairs to consider.
Combined the overall time for Phase 1 and Phase 2 to compute a $(1+\eps)$-approximation of $P^*$ for deadline TSP is $n^{O((\omega\delta/\eps)^2)}$. This completes the proof of the 2nd part of Theorem \ref{thm:boundedTW}.

\section{An Approximation Scheme for {$k$-stroll} and Orienteering in Doubling Metrics}\label{sec:qptas_p2p}

In this Section we prove Theorems \ref{thm:kstrollDBL} and \ref{thm:P2PDBL}.
Suppose that $G=(V,E)$ (a graph with bounded doubling dimension $\kappa$), $s,t\in V$, and budget $B\in \QQ^+$ are given as an instance of {P2P orienteering}. Let $\delta=\log \Delta_{G}$. We assume $\Delta_{G}$ is quasi-polynomial  in which case $\delta$ is polylogarithmic. 

{\bf Overview:}
Before we present our approximation scheme for {P2P orienteering} on $G$ let us review the algorithm of Chen and Har-Peled \cite{chen2008euclidean} for {P2P orienteering} on Euclidean plane. They showed that the algorithm of Mitchell \cite{mitchell1999guillotine} for Euclidean TSP in fact implies a $(\epsilon,u)$-approximation for {$k$-stroll}, i.e. an algorithm that finds a path $P$ with $|P|=k$ where $||P||\leq ||P^*||+\eps\cdot\calE_{P^*,\mu}$, where $P^*$ is the optimum {$k$-stroll} solution. 
Using this approximation algorithm to solve the orienteering problem (with the assumption that $k$ is the guessed value of the optimum solution for orienteering), they show that one can break the path obtained by the algorithm into $u=O(\frac{1}{\eps})$ segments, each containing $\eps\cdot k$ vertices; this will be a $\mu$-jump; once we remove one segment that has the longest length the total length of the path is dropped by at least $\eps\cdot\calE_{P^*,\mu}$, and therefore the length of the resulting path is at most the length of the optimum path $||P^*||$ and hence below
the given budget and the total number of nodes is at least $(1-\epsilon)k$. This yields the approximation scheme for orienteering (via {$k$-stroll}).
The algorithm for $(\epsilon,u)$-approximation of {$k$-stroll} is essentially the algorithm of Mitchell \cite{mitchell1999guillotine} for Euclidean TSP which breaks the problem into (polynomially many) subinstances, each defined by a window $w$ (a minimum bounding box). Mitchell defines a vertical (or horizontal) line $\ell$ that cuts a windows $w$ a {\em cut} and  for a parameter $m=O(1/\epsilon)$ if the number of edges of the optimum path that cross $\ell$ is no more than $m$ then this cut is {\em sparse}; else $\ell$ is dense. If $\ell$ is sparse one can {\em guess} the edges of the optimum in time $O(n^m)$ and break the problem into independent instances. If there is no sparse cut then the value of the optimum in this window is large. In this case using either Mitchell's \cite{mitchell1999guillotine} scheme (using bridges) or Arora's scheme \cite{Arora98} for TSP (making paths portal respecting paths) one can modify the solution for the problem restricted to $w$ to a near optimum one of length at most $(1+\frac{1}{m})||P^*(w)||$, where $P^*(w)$ is the restriction of the optimum path to window $w$. We should point out that this idea of partitioning sub-instances into sparse and dense regions has been used in other works (e.g. in \cite{BartalGK16} to obtain a PTAS for TSP on doubling metrics). 

Our QPTAS for orienteering on doubling metrics builds upon this idea of sparse and dense sub-instances, but the difficulty is we do not have a polynomial size set of windows or cuts as in the Euclidean case. 
We first try to find a good approximation for {$k$-stroll} on $G$ with a similar stronger upper bound on the quality of the solution:
a solution with a stronger upper bound of the form $||P^*||+\epsilon\cdot\calE_{P^*,u}$. We show the existence of a near-optimum solution for {$k$-stroll} which has some structural properties. Suppose that $T$ is a hierarchical decomposition of $G$ as in Theorem \ref{thm:hier-dec}. Consider an arbitrary node $C\in T$ of the hierarchical decomposition at level $i$ with children $C_1,\ldots,C_\sigma$. 
Roughly speaking we  would like this near optimum solution to have very few edges crossing between different clusters $C_i$'s when we focus on $C$ or these edges be portal-respecting if we impose some portals for each $C_i$ as in the algorithm for TSP on doubling metrics \cite{Talwar04}. The existanse of such a structured near optimum allows us to find it using a DP on $T$.

For simplicity of presentation, suppose $P^*$ (for {$k$-stroll}) is known (let's call this "Assumption 1"). As said above, consider $C\in T$ with children $C_1,\ldots,C_\sigma$.
We also consider the restriction of $P^*$ to $C$, denoted by $P^*_C=P^*\cap C$ and let $\Delta_C$ denote the diameter of $C$. We consider two cases based on whether $||P^*_C||\leq \frac{\log n}{\eps}\Delta_C$ holds or not. If the inequality holds we call $P^*_C$ {\em sparse} with respect to $C$; else we call it {\em dense}.

{\bf Sparse case:} If $||P^*_C||\leq \frac{\log n}{\eps}\Delta_C$, we show in this case the {\em expected} number of edges of $P^*_C$ crossing between different subgraphs $C_1,\ldots,C_\sigma$ is $O(\frac{\kappa\log n}{\epsilon})$; this uses property 4 of Theorem \ref{thm:hier-dec}. For simplicity assume this actually holds with high probability instead of "in expectation" (call this "Assumption 2")

{\bf Dense case:}
If $||P^*_C||>\frac{\log n}{\epsilon} \Delta_C$ we modify $P^*_C$ to a near optimum one by making $P^*_C$ to be portal respecting when it crosses between $C_1,\ldots,C_\sigma$ (as was done for TSP for e.g. in \cite{Talwar04}). We consider each of $C_i$
and generate a set of $O(\log^\kappa n)$ portals for it. We make $P^*_C$ portal respecting, that is, we modify it such that it crosses between different subgraphs $C_i$ only through portals to reduce the number of times it crosses between different subgraphs to be poly-logarithmic. We show the {\em expected} increase of length of $P^*_C$ in $C$ is bounded by $O(||P^*_C||\cdot\eps/\log n)$, which holds by Theorem \ref{thm:hier-dec}. Again, assume that this actually holds with high probability instead of "in expectation" every time we do this (call this "Assumption 3")

We modify $P^*$ to a structured near optimum solution $P'$ as follows. Starting from the root of $T$, and let initially $P'=P^*$. We modify $P'$ as we go down the tree $T$: for each node $C\in T$, 
if $P'_C$ (restriction of $P'$ to $C$) is sparse w.r.t. $C$ we make no modification going down to children of $C$. However, if $P'_C$ is dense w.r.t. $C$, we make $P'_C$ portal respecting when crossing between the subgraphs corresponding to children of $C$. This increases the length of $P'_C$ by at most $O(||P'_C||\cdot\eps/\log n)$. Given that height of $T$ is $O(\log n)$, the total increase in length of $P'$ over all steps can be bounded to be $O(\eps||P^*||)$.

We have made a number of assumptions above that we should remove.
We don't know $P^*$ (Assumption 1) and whether it is sparse or dense in each step we are going down the tree $T$. To remove this assumption we {\em guess} both cases of whether $P^*$ is sparse or dense at each step and add both into the DP.
To remove Assumptions 2 and 3, we repeat the random decomposition of Theorem \ref{thm:hier-dec} $\Omega(\log n)$ many times at each step. In other words, imagine we do parallel (or non-deterministic) runs of the decomposition at each step we want to partition a cluster. Then instead of things holding in expectation, one can show that in at least one such decomposition Assumptions 2 and 3 actually hold with high probability. This idea of repeating random decompositions has been used in earlier works, most recently by \cite{Cohen-Addad22} to present approximation scheme for $k$-MST on minor free graphs.
This non-deterministic decomposition can be described as a tree itself, which we call a $\gamma$-split-tree. Recall that 
in the hierarchical decomposition we would partition each cluster $C$ into $C_1,\ldots,C_\sigma$ where each $C_i$ has diameter at most $\frac{\Delta_C}{2}$ and $\sigma\in 2^{O(\kappa)}$.
In a $\gamma$-split-tree we consider $\gamma$ many such (independent) partitions of $C$ in each step when we decompose $C$. This will help us to show that for at least one partition, the properties that are shown to hold in expectation (Assumptions 2 and 3) actually hold with high probability for at least one partition of $C$. 
We don't know which of the several parallel decompositions we run at each step will have this property; so we guess and try all of them in our DP.

In the next subsection we formalize this and present our structural theorem that proves the existence of a nearly optimum solution with certain properties. We then show in Subsection \ref{sec:mu-excess} that the structural theorem in fact shows the existence of a $(\eps,\mu)$-approximation for {$k$-stroll}. In Subsection \ref{sec:dp} we show how we can find that using a suitable DP. Finally in Subsection \ref{sec:p2p} we prove that we show an approximation scheme for {P2P orienteering}.

\subsection{Structural Theorem}\label{sec:structure}
Let $P$ be any feasible solution and $P^*$ be an optimal solution to {$k$-stroll}. For a subgraph (or subset of vertices) $U\subseteq G$ we use $P_U$ to denote the restriction of $P$ to $U$.

\begin{definition}\label{m}
$P$ is called $\eta$-dense with respect to $U\subseteq G$ if 
$||P_{U}||> \eta \Delta_{U}$; otherwise $P$ is sparse.
\end{definition}

We will set $\eta=\frac{\log n }{\epsilon}$. A set $C\subseteq V$ is
a \emph{$\rho$-cover} for  $U \subseteq  V$ if for any $u \in  U$, there exists $c \in  C$ such that $d(u, c) \leq  \rho$. A set $S$ is a \emph{$\rho$-packing} if for any two distinct points $u$ and $v$ in $S$,
$d(u,v) \geq 2\rho$ holds. A set $N$ is a \emph{$\rho$-net} for  $U \subseteq  V$ if $N$ is a $\frac{\rho}{2}$-packing and a $\rho$-cover for $U$. We can find a $\rho$-net for any graph $G$ easily by iteratively removing balls of radius $\rho$. The center of these balls forms a $\rho$-net.


Consider a set $C\subseteq G$.
A {\em random partition} of $C$ is a partition  
$\pi_C=C_1,C_2,\ldots,C_\sigma$ of $C$ that is obtained by taking a random $\frac{\Delta_C}{4}$-net of $C$ (i.e. the centers chosen iteratively are picked randomly); so the balls of the net have diameter $\frac{\Delta_C}{2}$ and we know that $|\pi_C|\leq 2^{O(\kappa)}$.
Note that the algorithm for proving Theorem \ref{thm:hier-dec} essentially starts from the single cluster $\{V\}$ (as the root of the decomposition) and at each step, when we have a cluster $C$, it is decomposed into $2^{O(\kappa)}$ parts by taking a random partition of $C$.
We say an edge $(u,v)$ is crossing $\pi_{C}$ if $u$ and $v$ are in different parts of $\pi_C$ i.e. $u\in C_i$ and $v\in C_j$ for $i\not=j$. Such an edge is called a {\bf\em bridge-edge}. The following lemma (implied
by Theorem \ref{thm:hier-dec}) shows one property of  bridge-edges 
of $\pi_{C}$.

\begin{lemma}\cite{Talwar04}\label{lem:bridge}
	For any edge $(u,v)$ with $u,v\in C$, the probability that $(u,v)$ crosses $\pi_{C}$ (and so is a bridge-edge) is at most $ \kappa'\frac{d(u,v)}{\Delta_{C}}$ for some constant $\kappa'=O(\kappa)$.
\end{lemma}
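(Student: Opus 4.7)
The plan is to reduce the lemma to the classical Calinescu--Karloff--Rabani (CKR) style analysis for probabilistic decompositions, specialized to the doubling setting. First I would replace the iterative random-net description of $\pi_C$ by a distributionally equivalent (up to constants absorbed into $\kappa'$) two-step process: sample a radius $r$ uniformly from $[\Delta_C/8,\Delta_C/4]$ and an independent uniformly random permutation $\tau$ of the points of $C$, and assign each $x\in C$ to the cluster of the first center $c$ in $\tau$ with $d(c,x)\leq r$. This yields clusters of diameter at most $2r\leq\Delta_C/2$, matching the diameter guarantee required for the hierarchical decomposition, while a standard packing bound in doubling metrics gives $|\pi_C|\leq 2^{O(\kappa)}$.

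Next, I would fix an edge $(u,v)$ with $\ell=d(u,v)$ (assuming $\ell\leq\Delta_C$, else the bound is trivial) and observe that only centers lying in $B(u,\Delta_C/4+\ell)\subseteq B(u,5\Delta_C/4)$ can possibly cut it, since any farther center is at distance more than $r$ from both endpoints. Iterating the doubling property $O(\kappa)$ times, this ball is covered by $M=2^{O(\kappa)}$ balls of radius $\Delta_C/16$ and therefore contains at most $M$ candidate centers.

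The heart of the argument is then to order those candidates $c_1,\ldots,c_M$ by increasing $\min\{d(c_j,u),d(c_j,v)\}$ and run the standard CKR/Fakcharoenphol--Rao--Talwar analysis. For a fixed index $j$, the event that $c_j$ precedes all of $c_1,\ldots,c_{j-1}$ in $\tau$ has probability $1/j$ by symmetry, and conditioned on this the event that $r$ lies in the length-$\ell$ window $(\min\{d(c_j,u),d(c_j,v)\},\,\max\{d(c_j,u),d(c_j,v)\}]$ has probability at most $\ell/(\Delta_C/8)$. These are the only situations in which $c_j$ can be the center that causes the cut. Summing over $j$ and using the harmonic bound yields
\begin{equation*}
\Pr\bigl[(u,v)\text{ is a bridge-edge}\bigr]\;\leq\;\sum_{j=1}^{M}\frac{1}{j}\cdot\frac{8\ell}{\Delta_C}\;=\;O(\log M)\cdot\frac{d(u,v)}{\Delta_C}\;=\;O(\kappa)\cdot\frac{d(u,v)}{\Delta_C},
\end{equation*}
as required, with $\kappa'=O(\kappa)$.

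\textbf{Main obstacle.} The delicate point is the coupling between $\tau$ and $r$: the harmonic factor that converts the doubling packing bound $M=2^{O(\kappa)}$ into the desired $\log M=O(\kappa)$ arises only after a careful per-index conditioning that avoids double-counting cut events across different $j$. A secondary, more routine point is verifying that the radius-plus-permutation process is distributionally equivalent (up to constants) to the iterative random-net construction referenced in Theorem~\ref{thm:hier-dec}; this is a standard equivalence in doubling metrics but should be stated explicitly when writing the proof.
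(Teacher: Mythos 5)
The paper does not prove this lemma: it is stated as a cited fact from Talwar~\cite{Talwar04} and invoked via property~4 of Theorem~\ref{thm:hier-dec}. You instead attempt a proof from first principles using the CKR/FRT-style ``random radius plus random permutation'' scheme, which is indeed the machinery underlying Talwar's result, so the route is the right one in spirit. However, there is a genuine gap in the packing step.

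You take $\tau$ to be a uniformly random permutation of \emph{all} points of $C$, and then argue that because $B(u,5\Delta_C/4)$ is covered by $M=2^{O(\kappa)}$ balls of radius $\Delta_C/16$, it ``therefore contains at most $M$ candidate centers.'' That inference is invalid: a cover of a ball by $M$ small balls caps the number of candidate centers only when the candidates form a $\Delta_C/16$-packing, but under your two-step process every point of $C$ is a candidate center, so the number of candidates within $B(u,5\Delta_C/4)$ can be as large as $|C|$. The harmonic sum $\sum_j 1/j$ then only gives $O(\log|C|)\cdot d(u,v)/\Delta_C$, not the claimed $O(\kappa)\cdot d(u,v)/\Delta_C$. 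To recover the $O(\kappa)$ factor you must restrict the permutation to a $\Theta(\Delta_C)$-net of $C$; this is precisely what the paper's iterative random-net construction furnishes (centers at pairwise distance at least $\Delta_C/4$ by the paper's definition of a $\rho$-net as a $\rho/2$-packing), and only then does the doubling packing bound yield $M=2^{O(\kappa)}$ and $\log M = O(\kappa)$. You also flag, correctly, that the ``distributional equivalence'' between your two-step process and the iterative random-net sampling needs an explicit argument; note that the iterative construction in the paper uses a fixed radius $\Delta_C/4$ whereas your process draws $r$ at random, and reconciling these is part of the standard doubling-metric analysis rather than a constant-factor footnote.
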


Consider $P_C=P\cap C$.
We consider the set of edges of $P$ that are bridge-edges, denoted as $E_{\pi_{C} } $, i.e. $ E_{\pi_{C} }=\{(u,v)\in P: (u,v) \text{ crosses }   \pi_{C}  \} $. Suppose $P_C$ is sparse with respect to $C$, i.e. $||P_C||\leq\eta\Delta_C$. 
We upper bound the number of bridge-edges of $\pi_C$, i.e. the size of $E_{\pi_{C} } $ in this case. 

\begin{lemma}\label{lem:sparse}
	If $P$ is sparse with respect to $C$, then $\E[|E_{\pi_{C} } |]\leq \frac{\kappa'\log n}{\epsilon} $ for some constant $\kappa' =O(\kappa)$.
\end{lemma}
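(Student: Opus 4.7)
The plan is to combine Lemma \ref{lem:bridge} (giving a per-edge bound on the bridge probability) with the sparsity hypothesis on $P_C$ via linearity of expectation. The only edges that can contribute to $E_{\pi_C}$ are edges of $P$ with both endpoints in $C$, so I will restrict attention to $P_C$, whose total length is bounded by $\eta \Delta_C$ with $\eta = \frac{\log n}{\epsilon}$.

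First, I would express the count of bridge edges as a sum of indicator variables, one for each edge $(u,v) \in P_C$ indicating whether $(u,v)$ crosses the random partition $\pi_C$. By linearity of expectation,
\[
\E[|E_{\pi_C}|] \;=\; \sum_{(u,v)\in P_C} \Pr\!\big[(u,v)\text{ crosses }\pi_C\big].
\]
Next, apply Lemma \ref{lem:bridge} edge by edge to get the upper bound $\Pr[(u,v)\text{ crosses }\pi_C] \leq \kappa' \, d(u,v)/\Delta_C$ for some $\kappa' = O(\kappa)$. Substituting gives
\[
\E[|E_{\pi_C}|] \;\leq\; \frac{\kappa'}{\Delta_C}\sum_{(u,v)\in P_C} d(u,v) \;=\; \frac{\kappa'}{\Delta_C}\cdot \|P_C\|.
\]

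Finally, I would invoke the sparsity assumption: since $P$ is sparse with respect to $C$, $\|P_C\| \leq \eta \Delta_C = \frac{\log n}{\epsilon}\Delta_C$. Plugging this in yields
\[
\E[|E_{\pi_C}|] \;\leq\; \frac{\kappa'}{\Delta_C}\cdot \frac{\log n}{\epsilon}\Delta_C \;=\; \frac{\kappa'\log n}{\epsilon},
\]
as claimed. There is no real obstacle here: the argument is a direct three-line application of linearity of expectation, the per-edge crossing bound, and the definition of sparsity. The only subtlety worth flagging is that Lemma \ref{lem:bridge} applies to edges whose endpoints both lie in $C$, which is automatic for edges of $P_C$; any edge of $P$ with at most one endpoint in $C$ is, by definition, not in $P_C$ and is not counted in $|E_{\pi_C}|$.
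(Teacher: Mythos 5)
Your proposal is correct and matches the paper's proof essentially line for line: express $|E_{\pi_C}|$ as a sum of indicators over edges of $P_C$, bound each crossing probability by $\kappa'\,d(u,v)/\Delta_C$ via Lemma \ref{lem:bridge}, sum to get $\kappa'\,\|P_C\|/\Delta_C$, and finish with the sparsity bound $\|P_C\|\leq\eta\Delta_C$. (If anything, your write-up is slightly cleaner, since the paper writes an equality where an inequality is meant when invoking the per-edge probability bound.)
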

\begin{proof}
	Recall that the probability $(u,v)$ crosses $\pi_{C} $ is at most $ \kappa' \frac{d(u,v)}{\Delta_{C}}$ for some constant $\kappa'=O(\kappa)$. Therefore, $ \E (|E_{\pi_{C} } | ) =\sum_{u,v\in P_C} \kappa' \frac{d(u,v)}{\Delta_{C}}  =  \kappa'\frac{||P_{C}||}{\Delta_{C}} $. Since $P$ is sparse with respect to $C$, thus $ \kappa'\frac{||P_{C}||}{\Delta_{C}}\leq \kappa' \frac{\eta\Delta_{C}}{\Delta_{C}} =\kappa'\eta=\kappa'\frac{\log n}{\epsilon}  $.
\end{proof}

Now suppose $P$ is $\eta$-dense with respect to $C$. Again consider
the (random) partition $\pi_C$ of $C$. For each $C_i\in \pi_C$, we further consider a $\beta\Delta_{C_i}$-net of it where $ \beta=\frac{\epsilon}{4\kappa' \delta}$ ( recall $\delta=\log \Delta_{G}$ and $\kappa'$ is the constant in Lemma \ref{lem:bridge}), denoted as $S_i$ and we call them the portal set for $C_i$. The following packing property of doubling metrics bounds
the size of $|S_i|$:

\begin{proposition}[\cite{gupta2003bounded}]\label{prop:packing}
Let $(V, d)$ be a doubling metric with doubling dimension $\kappa$ and
diameter $\Delta$, and let $S$ be a $\rho$-packing. Then $|S| \leq \left(\frac{\Delta}{\rho}\right)^{\kappa}$.
\end{proposition}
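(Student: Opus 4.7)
My plan is to combine the iterated doubling property with the packing condition on $S$. First I fix any point $v_0 \in V$; since $(V,d)$ has diameter $\Delta$, the ball $B(v_0,\Delta)$ already contains the entire set $V$, and in particular $S$. The strategy is then to refine this one coarse ball into a fine cover by balls small enough that each contains at most one point of $S$, and to read off the bound by counting the covering balls.

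To perform the refinement I repeatedly invoke the defining property of doubling dimension $\kappa$: every ball of radius $2r$ can be covered by at most $2^{\kappa}$ balls of radius $r$. Applying this $i$ times in succession, starting from $B(v_0,\Delta)$, yields a cover of $V$ by at most $(2^{\kappa})^{i} = 2^{i\kappa}$ balls of radius $\Delta/2^{i}$. I will take $i = \lceil \log_2(\Delta/\rho) \rceil$ so that every ball in the resulting cover has radius at most $\rho$.

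Finally I use the $\rho$-packing condition. Any two distinct points $u,v \in S$ satisfy $d(u,v) \geq 2\rho$, so no single ball of radius strictly less than $\rho$ can contain two of them. Each covering ball therefore contributes at most one point of $S$, and $|S|$ is bounded by the number of covering balls, giving $|S| \leq 2^{i\kappa} \leq (\Delta/\rho)^{\kappa}$. The only mild obstacle is the boundary case in which a covering ball has radius exactly $\rho$ (when $\Delta/\rho$ is a power of two): two packing points at distance exactly $2\rho$ could in principle sit at opposite extremes of such a ball. This is handled by taking one extra doubling step to radius $\rho/2$ and absorbing the resulting $2^{\kappa}$ factor into the constant, or by choosing the net centers so that each ball's interior contains at most one packing point; either way the stated asymptotic bound follows immediately.
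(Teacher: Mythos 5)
The paper cites this proposition from \cite{gupta2003bounded} and gives no proof of its own, so there is nothing in the paper itself to compare against. Your argument is the standard derivation used in the cited reference: iterate the doubling property $i$ times to cover $B(v_0,\Delta)$ by at most $2^{i\kappa}$ balls of radius $\Delta/2^i$, then observe that a ball of radius strictly below $\rho$ meets a $\rho$-packing $S$ in at most one point, since any two points of $S$ are at distance at least $2\rho$ while any two points of such a ball are at distance strictly less than $2\rho$. This is correct in substance and is exactly the canonical proof.

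The only imprecision, which you flag yourself, is in the constant. With $i=\lceil\log_2(\Delta/\rho)\rceil$ the count $2^{i\kappa}$ can be as large as $(2\Delta/\rho)^\kappa$ rather than the stated $(\Delta/\rho)^\kappa$, and a \emph{closed} ball of radius exactly $\rho$ can, in degenerate configurations, contain two packing points at distance exactly $2\rho$, so ``radius at most $\rho$'' by itself is not quite enough and you are right to reach for one extra half-step. Both discrepancies cost at most a multiplicative $2^{O(\kappa)}$, which is already absorbed in the unspecified $\kappa'=O(\kappa)$ constants at every place the paper invokes this proposition (the bounds on $|S_i|$ and $|R'_{\pi_i}|$), so this is a cosmetic constant-factor issue and not a gap in the reasoning; if one insisted on matching the displayed constant exactly, a slightly more careful accounting of the packing radius versus covering radius would be needed rather than ``absorb into the constant.''
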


Using this proposition: $|S_i|\leq (\frac{8\kappa' \delta}{\epsilon})^{\kappa}$. We are going to modify $P_C$ such that when it crosses between different clusters $C_i$ in $\pi_C$ (i.e. uses bridge-edges) it does so via portals only. This means after the modification it will use {\bf\em portal-edges}; those bridge-edges whose both end-points are portals.
We say a path $P'$ is portal respecting with respect to $\pi_{C}$ if for any edge $u,v$ of $P'$ crossing between two parts $C_i,C_j\in \pi_{C} $, it only cross through portals, i.e. $u\in  S_i$ and $v\in  S_j$. The following lemma shows one can modify $P_C$  to be portal respecting with respect to $\pi_{C}$ with a small increase of length.

\begin{lemma}\label{lem:dense}
	$P_C$ can be changed to another path $P'$ that is portal respecting with respect to $\pi_{C}$ such that $\E[||P'||]\leq (1+\frac{\eps}{2\delta})||P_C||$.
\end{lemma}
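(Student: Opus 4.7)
\medskip

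\noindent\textbf{Proof proposal for Lemma \ref{lem:dense}.} The plan is to perform a straightforward ``portal rerouting'' transformation on $P_C$: for every bridge-edge $(u,v) \in E_{\pi_C}$ with $u \in C_i$ and $v \in C_j$ (where $i \neq j$), we replace the edge $(u,v)$ by the three-edge detour $u \to p_u \to p_v \to v$, where $p_u \in S_i$ is the portal of $C_i$ closest to $u$ and $p_v \in S_j$ is the portal of $C_j$ closest to $v$. If $u$ is already a portal we set $p_u = u$, and similarly for $v$. Edges of $P_C$ whose endpoints lie in the same child cluster $C_i$ remain untouched. The resulting walk $P'$ has the same endpoints as $P_C$ for each maximal sub-path of $P_C$, and by construction every cross-cluster edge of $P'$ has both endpoints in some portal set, so $P'$ is portal respecting with respect to $\pi_C$.

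Next I would bound the expected increase in length. Fix a bridge-edge $(u,v) \in E_{\pi_C}$ and note that since $S_i$ is a $\beta \Delta_{C_i}$-net it is in particular a $\beta \Delta_{C_i}$-cover, so $d(u, p_u) \leq \beta \Delta_{C_i}$ and similarly $d(v, p_v) \leq \beta \Delta_{C_j}$. By two applications of the triangle inequality, the new length of the detour is
\[
d(u,p_u) + d(p_u,p_v) + d(p_v,v) \;\leq\; 2\bigl(d(u,p_u) + d(v,p_v)\bigr) + d(u,v) \;\leq\; 2\beta\bigl(\Delta_{C_i} + \Delta_{C_j}\bigr) + d(u,v),
\]
so the extra length contributed by $(u,v)$ is at most $2\beta(\Delta_{C_i} + \Delta_{C_j}) \leq 2\beta \Delta_C$, using that each child cluster has diameter at most $\Delta_C/2 \leq \Delta_C$.

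Summing over all bridge-edges and taking expectations over the random partition $\pi_C$, we obtain
\[
\E[\,\|P'\| - \|P_C\|\,] \;\leq\; 2\beta \Delta_C \cdot \E[|E_{\pi_C}|].
\]
By the same calculation as in the proof of Lemma \ref{lem:sparse}, linearity of expectation together with Lemma \ref{lem:bridge} gives $\E[|E_{\pi_C}|] = \sum_{(u,v) \in P_C} \Pr[(u,v) \text{ crosses } \pi_C] \leq \kappa' \|P_C\|/\Delta_C$. Substituting the choice $\beta = \eps/(4\kappa'\delta)$ then yields
\[
\E[\|P'\|] \;\leq\; \|P_C\| + 2\beta \kappa' \|P_C\| \;=\; \Bigl(1 + \tfrac{\eps}{2\delta}\Bigr)\|P_C\|,
\]
as required. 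The only mild subtlety (rather than a real obstacle) is book-keeping: one must verify that each replacement is done independently per bridge-edge so that the cost bound adds up linearly, and that the modification preserves the endpoints of each maximal sub-path of $P_C \cap C$ so that $P'$ can be glued back into a global path at the next level up in the decomposition. Note that the bound above does not use the density assumption on $P_C$; density is exploited elsewhere (in the structural theorem) to amortize $\Delta_C$ against $\|P_C\|$, whereas here the $\beta \Delta_C$ detour cost is absorbed directly into the expected number of bridge-edges.
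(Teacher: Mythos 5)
Your proof is correct and follows essentially the same route as the paper: reroute each bridge-edge through the nearest portals, bound the per-edge detour cost by $2\beta\Delta_C$ via the triangle inequality and the net property, then take expectations using Lemma~\ref{lem:bridge} and substitute $\beta=\eps/(4\kappa'\delta)$. The only cosmetic difference is that you factor the deterministic per-edge bound out of the expectation and multiply by $\E[|E_{\pi_C}|]$, whereas the paper sums $\Pr[(u,v)\text{ crosses }\pi_C]\cdot 2\beta\Delta_C$ directly over edges of $P_C$; these are the same calculation by linearity of expectation.
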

\begin{proof}
	We start with $P_C$ and whenever an edge $u,v$ of $P_C$ is crossing two parts $C_i,C_j\in\pi_C$ that are not portals we replace that edge with a path between $u,v$ via the closest portals in $C_i$ and $C_j$.
	Consider any edge $u,v$ in $P_C$ that crosses $\pi_{C}$, say $u\in C_i$ and $v\in C_j$. Let $u'$ be the nearest portal to $u$ in $S_i$ and $v'$ be the nearest portal to $v$ in $S_j$. Replace edge $u,v$ in $P_C$ by the edges $(u,u')$, $(u',v')$, $(v',v)$. The increased length incurred is $d(u,u')+d(v,v')+d(u',v')-d(u,v)$, which is at most $ 2d(u,u')+2d(v,v')$ by triangle inequality. Note that because $S_i$ is a $\beta \Delta_{C_i} $-net of $C_i$ then  $d(u,u')\leq \beta \Delta_{C_i} \leq \beta \frac{\Delta_{C}}{2} $ and  $d(v,v')\leq \beta \Delta_{C_j}\leq \beta \frac{\Delta_{C}}{2} $. Thus the increased length incurred is at most 
	$2\beta \Delta_{C}$.
	Recall that for $u,v\in P_C$, the probability that $(u,v)$ crosses 
 $\pi_{C}$ is at most $\delta\kappa\frac{d(u,v)}{\Delta_{C}}$. Thus in expectation,  the increased length of $P_C$ after making it portal respecting with respect to $\pi_{C}$ is at most $\sum_{u,v\in P_{C}} \kappa'\frac{d(u,v)}{\Delta_{C}}2\beta \Delta_{C} =\kappa'\beta ||P_{C}||= \frac{\epsilon}{2\delta} ||P_{C}|| $.
\end{proof}

Now we formalize the idea of parallel runs of the hierarchical decomposition of Theorem \ref{thm:hier-dec}. Note that the algorithm for proving Theorem \ref{thm:hier-dec} starts from the single cluster $\{V\}$ (as the root of the decomposition) and at each step uses a {\em random partition} to the current cluster $C$ to decompose it   into $2^{O(\kappa)}$ clusters of diameter half the size.
This continues until we arrive at singleton node clusters (leaves of the split-tree).
We call one random partition of a cluster $C$ a {\em split} operation.
The $\gamma$-split-tree hierarchical decomposition of a doubling metric is obtained by 
considering $\gamma$ random partitions of $C$ (instead of just one) at each step. 
A $\gamma$-split-tree decomposition has two types of nodes that appear in alternating layers: cluster nodes and split nodes.
For each cluster $C$ we have $\gamma$ {\em split} nodes in the $\gamma$-split-tree 
(each corresponding to a random partition of $C$) and these split nodes are all children of $C$. For each split node $s$ that is a child of $C$, it will have children $C_1,\ldots,C_\sigma$ that are clusters obtained by decomposing $C$ according to the random partition corresponding to $s$. We continue until leaf nodes are cluster nodes with constant size $a=O(1)$.

\begin{definition}
	A $\gamma$-split tree for $G$ is a rooted tree $\Gamma$ with alternating
	levels of cluster nodes and split nodes.
	A cluster node $C$ corresponds a subset of $V$. The root of $\Gamma$, corresponds to the single cluster $\{V\}$ and each leaf node corresponds to a set of size $a=O(1)$ of vertices.
		Each non-leaf cluster node $C$ has $\gamma$ split nodes as its children. 
		Each split node corresponds to a (random) partition of $C$ into clusters
		of diameter $\frac{\Delta_C}{2}$. Each of those clusters become cluster children
		of the split node. So each split node has $2^{O(\kappa)}$ (cluster) children.
\end{definition}

Suppose $\Gamma$ is a $\gamma$-split-tree for $G$ and $\Phi$ is a partial function from (some) cluster nodes of $\Gamma$ to one of their split node children with the following properties:
\begin{itemize}
    \item $\Phi(C_0)=s$ for some child split node of $C_0$ (where $C_0$ is the root of $\Gamma$).
    \item if every ancestor split node of $C$ is in the image of $\Phi$ 
    then $\Phi(c)$ is defined.
\end{itemize}

Note that this function induces a "standard" hierarchical decomposition split-tree as in Theorem \ref{thm:hier-dec} as follows: 
start at the root node $C_0=\{V\}$ of $\Gamma$ and let root of tree $T$ be $C_0$ and repeat the following procedure: at each step, being at a cluster node $C$ pick $\Phi(C)$ (a split node child of $C$) and consider the cluster children of $\Phi(C)$, say $C_1,\ldots,C_\sigma$, and create nodes corresponding to these as children of $C$ in $T$. Recursively repeat the procedure from each of them. This builds a split-tree tree $T$. 
We call such a (partial) function $\Phi$ a {\em determining} function for $\Gamma$ and we say $T$ is induced by $\Phi$ on $\Gamma$: $T=\Gamma|_\Phi$.

For any cluster $C$ and the split node defined by $\Phi(C)$, we use 
$|P^*\cap E_{\pi_{\Phi(C)}}| $ to denote the number of bridge-edges of $P^*$  in $\pi_{\Phi(C)}$, i.e. $| P^*\cap R_{\pi_{\Phi(C)}}|=E_{ \pi_{\Phi(C)} } $.

Let $\gamma=3\log n$, we can build a $\gamma$-split tree $\Gamma$ for $G$ in the following way.	We start by making $C_{0}=\{V\}$ to be the root of $\Gamma$ and iteratively add levels to $\Gamma$. For a cluster node $C$, we generate its children split nodes as follows:	
we compute $\gamma$ independent random partitions of $C$, denoted as 
$\{\pi_{i}\}$, $1\leq i\leq\gamma_i$. Each $\pi_i=C_{i,1},\ldots,C_{i,\sigma_i}$ (where $\sigma_i\leq 2^{O(\kappa)}$) is obtained by taking a
random $\frac{\Delta_{C}}{4}$-net for $C$. Let $R_{\pi_{i}} $ be the set of bridge-edges of $C$, i.e. edges in $C$ crossing $\pi_{i}$: $R_{\pi_{i}} =\{u,v\in C: u\in C_{i,j}, v\in C_{i,j'}, \text{ for } j\not=j'\}$. For each $C_{i,j}$, we further consider the portal set $S_{i,j}$ which is a $\beta\Delta_{C_{i,j}}$-net of it. Let $R'_{\pi_{i}}$ be the set of portal edges of $C$: $R'_{\pi_{i}}=\{ u,v\in C: u\in S_{i,j}, v\in S_{i,j'}, \text{ for } j\not=j' \}$. Since $G$ is a doubling metric and $S$ is a net, we find the following bound on the sizes of the portals: 
$|R'_{\pi_{i}}|\leq(\sigma_i|S_{i,j}|)^{2}$ which is bounded by
$(\frac{16\kappa'\delta}{\epsilon})^{2\kappa} $. We create a child split node $s_{i}$ for $C$ for each $\pi_{i}$. 

For a split node $s$, let $C$ be its parent cluster node and let  $\pi_i$ be the partition (i.e. the $\frac{\Delta_{C}}{4}$-net for $C$) corresponding to $s$. Then for each $C_{i,j}\in \pi_i$ we create a child cluster node $C_j$ for $s$. The number of the children cluster nodes of $s$ is at most $2^{O(\kappa)}$.	
We continue this process until each leaf node is a cluster node $C$ with $|C|\leq a$ for some constant $a$.
From the construction above we know $\Delta_{C_i}\leq \frac{\Delta_{C}}{2}$
if $C_i$ is a child of a split node $s$ which is a child of $C$.
Thus there are at most $2 \log \Delta_{G}= 2\delta$ levels in $\Gamma$. If we define the height of $\Gamma$ as the number of levels of cluster nodes then the height of $\Gamma$ is at most $\delta$. The branching factor of $\Gamma$ is then the product of the branching factor of a cluster node and a split node, which is at most $\gamma 2^{O(\kappa)}$. Hence the size of $\Gamma$ is $(3\log n 2^{O(\kappa)})^{\delta}$.
Now we are ready to prove the following structure theorem for a near optimum solution.

\begin{theorem}(structure theorem)\label{thm:structure}
    Let graph $G=(V,E)$ with doubling dimension $\kappa$, start and end nodes $s,t\in V$, and integer $k$ be given as an instance of $k$-stroll. Assuming $P^*$ is an optimum solution and $\gamma=3\log n$ we can construct a $\gamma$-split-tree where with probability at least $1-\frac{1}{n}$ there exists a determining function 
    $\Phi$ and a corresponding split-tree hierarchical decomposition 
    $T=\Gamma|_\Phi$ of $G$ and a nearly optimal solution $P'$ such that $P'$ visits at least $k$ vertices and for any cluster $C\in T$ we have either:
    \begin{itemize}
        \item 1) $ |P'\cap R_{\pi_{\Phi(C)}} |\leq 2\kappa'\frac{\log n }{\epsilon}$, or        
    \item  2) $|P'\cap R'_{\pi_{\Phi(C)}} |\leq  ( \frac{16\kappa'\delta}{\epsilon} )^{2\kappa}$ and
        $||P'\cap C ||\leq(1+\epsilon) ||P^*\cap C||$.
    \end{itemize}
\end{theorem}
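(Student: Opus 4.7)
The plan is to build $\Phi$ and $P'$ by a top-down traversal of $\Gamma$, starting from the root cluster $C_{0}=\{V\}$ with $P':=P^{*}$. At each cluster $C$ we inspect its $\gamma$ split-node children (each an independent random partition $\pi$ of $C$), select one that satisfies either the sparse or the dense conclusion of the theorem for the current $P'_{C}$, and set $\Phi(C)$ to that split. In the dense case we then rewire $P'_{C}$ to be portal respecting w.r.t.\ $\pi$ as in Lemma \ref{lem:dense}; in the sparse case $P'$ is left unchanged. The recursion continues on each cluster grandchild of $C$ until we reach the leaves of $\Gamma$.

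The core analytic step is to show that for a single cluster $C$ and a single random partition $\pi$, the chance that $\pi$ is ``good'' (i.e. satisfies whichever case corresponds to $P'_{C}$ currently being sparse or dense) is at least $1/2$. If $P'_{C}$ is sparse w.r.t.\ $C$, Lemma \ref{lem:sparse} gives $\E[|P'\cap R_{\pi}|]\leq \kappa'\log n/\eps$, so Markov's inequality yields $|P'\cap R_{\pi}|\leq 2\kappa'\log n/\eps$ with probability at least $1/2$. If $P'_{C}$ is dense, Lemma \ref{lem:dense} gives expected length increase $\leq (\eps/2\delta)\|P'_{C}\|$ after portalization, so with probability at least $1/2$ the increase is at most $(\eps/\delta)\|P'_{C}\|$; in every dense outcome, $|P'\cap R'_{\pi}|$ is trivially bounded by the total number of portal-edges, which Proposition \ref{prop:packing} bounds by $(16\kappa'\delta/\eps)^{2\kappa}$. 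Since the $\gamma=3\log n$ children of $C$ are sampled independently, the probability that \emph{no} child is good at $C$ is at most $(1/2)^{\gamma}=n^{-3}$.

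Union-bounding this failure event over the cluster nodes of $\Gamma$, whose total count is $(\gamma\cdot 2^{O(\kappa)})^{\delta}$, gives an overall failure probability of at most $1/n$ in the quasi-polynomial regime assumed for $\delta$ and $\kappa$; hence with probability at least $1-1/n$ every cluster simultaneously has a good split, so the top-down construction succeeds. The subtle point, which I expect to be the main obstacle, is that the path $P'_{C}$ at the moment we evaluate sparse/dense and apply Markov depends on random modifications made higher in the tree; however the $\gamma$ partitions sampled at $C$ are drawn independently of all ancestor randomness, so conditioned on any realization of the ancestors the $\gamma$-fold trial at $C$ is a fresh Bernoulli experiment and the per-cluster bound and union bound both survive.

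Finally, the length guarantee $\|P'\cap C\|\leq(1+\eps)\|P^{*}\cap C\|$ in the dense case follows from a telescoping argument: each dense modification inside the subtree rooted at $C$, say at a descendant $D\subseteq C$, multiplies $\|P'\cap D\|$ by at most $(1+\eps/\delta)$ and therefore multiplies $\|P'\cap C\|$ by at most the same factor. Because the number of cluster-level levels below $C$ is at most $\delta$, the accumulated blow-up is at most $(1+\eps/\delta)^{\delta}\leq e^{\eps}$, which is $1+O(\eps)$ (and rescaling $\eps$ by a constant gives exactly $1+\eps$). Since portalization only inserts new portal vertices and never drops an original vertex of $P^{*}$, the final $P'$ still visits at least $k$ vertices, completing the proof.
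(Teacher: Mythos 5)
Your construction and the per-cluster probabilistic argument (Markov via Lemmas \ref{lem:sparse} and \ref{lem:dense}, then amplification over the $\gamma$ independent split children conditioned on all ancestor randomness) are exactly the paper's, and your observation about conditioning on ancestors to get a fresh Bernoulli experiment at each cluster is the right way to handle the adaptivity. The length/telescoping argument and the remark that portalization preserves the visited-vertex set also match.

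The one concrete flaw is the final counting. You union-bound the per-cluster failure ($\leq 2/n^3$) over \emph{all} cluster nodes of $\Gamma$, whose count you correctly give as $(\gamma\cdot 2^{O(\kappa)})^{\delta}$. But in the regime the theorem is used ($\delta = \log\Delta$ polylogarithmic in $n$), $(\gamma\cdot 2^{O(\kappa)})^\delta = (O(\log n))^{\mathrm{polylog}(n)}$ is quasi-polynomial in $n$, so $(\gamma\cdot 2^{O(\kappa)})^{\delta}\cdot n^{-3}$ is not $\leq 1/n$ — the union bound you wrote does not yield the claimed $1-1/n$. Indeed, you do not need a good split at every cluster of $\Gamma$, only at the clusters actually selected into $T=\Gamma|_\Phi$. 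Those form a partition of $V$ at each of the $\delta$ cluster levels, so there are at most $n$ per level and at most $n\delta$ overall. Going level by level (conditioning at level $\ell$ on the choices at levels $<\ell$, which pins down the $\leq n$ level-$\ell$ clusters of $T$, and then using fresh randomness of the $\gamma$ splits below them) gives failure $\leq 2n/n^3=2/n^2$ per level and $\leq 2\delta/n^2 \leq 1/n$ overall; this is the bound the paper uses. Your argument otherwise stands if you replace the count $(\gamma\cdot 2^{O(\kappa)})^\delta$ with $n\delta$ in this way.
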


\begin{proof}	
Suppose $\Gamma$ is a $\gamma$-split-tree for $G$.
We build $P'$ iteratively based on $P^*$ and at the same time build $\Phi(\cdot)$ and hence $T$ from the top to bottom.

Initially we set $P'$ to be $P^*$ and start from the root cluster node $C_0$ of $\Gamma$. At any point when we are at a cluster node $C$
we consider whether $P'$ is sparse or dense with respect to $C$:\\
\textbf{Sparse case:} If $||P'\cap C||\leq\eta\cdot \Delta_{C}$, i.e. $P'$ is $\eta$-sparse with respect to $C$, we don't modify $P'\cap C$ 
for when going down from $C$ to any split node $s$ of $C$.
Consider any child split node $s$ of $C$ and let $\pi_s$ be the partition of
$C$ according to the split node $s$.
Consider the number of edges of $P'\cap C$ that are bridge-edges based on this partition i.e. $|E_{\pi_s}|$ (where $E_{\pi_s}=P'\cap R_{\pi_s}$). According to Lemma \ref{lem:sparse},  
$\E(|P'\cap R_{\pi_{s}}|)=\E(|E_{\pi_{s}}|)\leq\kappa'\frac{\log n}{\epsilon}$. Let the event $\Lambda_s$ be the event that $|P'\cap R_{\pi_{s}}| \leq 2 \kappa'\frac{\log n }{\epsilon}$.
By Markov inequality $\Pr[\Lambda_s]\geq \frac{1}{2}$. 
Recall there are $\gamma=3\log n$ many children split nodes  of $c$, i.e. $\gamma$ independent random partitions of $C$. Thus the probability that for at least one child split node $s$ of $c$, event $\Lambda_s$ holds is at least
$1- (1-\frac{1}{2})^{3\log n}\geq 1-\frac{1}{n^{3}} $. 
In this case we select split node $s$ and define $\Phi(C)=s$ and consider each cluster child node of $s$ iteratively ($P'$ has not changed
in this step).\\
\textbf{Dense case:} Suppose $P'$ is $\eta$-dense with respect to $C$.
Consider an arbitrary split child $s$ of $C$. We will modify $P'$ 
going down the split node $s$ to be portal respecting with respect to 
$\pi_{s}$ as described in Lemma \ref{lem:dense}. Assume $\pi_s=C_1,\ldots,C_\sigma$. For each $C_i$ let $S_i$ be a set of portals.
If we go down the split node $s$ we modify $P'$ to be portal-respecting
for each $C_i$. Note that the set of edges $P'$ crossing $\pi_{s}$ after making it portal respecting with respect to $\pi_{s}$ is a subset of $R'_{\pi_{s}}$. Thus $ |P'\cap R'_{\pi_{s}}| $ is at most $|R'_{\pi_{s}}|$ which is at most  $(\frac{16\kappa' \delta}{\epsilon} )^{2\kappa} $ in this case.  According to Lemma \ref{lem:dense}, the expected increase of length of $P'$ after making it portal respecting with respect to $\pi_{s}$ is at most $\frac{\epsilon}{2\delta} ||P'\cap C||$. Let $\Lambda'_{s}$ be the event that the increase of length of $P'$ after making it portal respecting with respect to $\pi_{s}$ is at most $\frac{\epsilon}{\delta}||P'\cap C||$.
By Markov inequality,  $\Pr[\Lambda'_s] \geq \frac{1}{2} $.  Recall there are $\gamma$ many children split nodes of $C$. Thus the probability that for at least one child split node  $s$ of $C$  we have $\Lambda'_{s}$ is at least $1- (1-\frac{1}{2})^{3\log n}\geq 1-\frac{1}{n^{3}} $. 
In this case we set $\Phi(C)=s$ for this particular split node $s$ for which
$\Lambda'_s$ happens.

Therefore, regardless of whether $P'$ is sparse or dense w.r.t. $C$,
such $s$ exists for cluster $C$ with the probability at least 
$1-\frac{2}{n^{3}}$ and we can define $\Phi(C)$. Once we have
$\Phi(.)$ defined for clusters at a level of $\Gamma$, 
we have determined the clusters at the same level of $T=\Gamma|_\Phi$.
Note there are at most $n$ cluster nodes in one level of $T$. Thus with probability at least 
 $ 1-\frac{2}{n^{2}} $  such split nodes exist for all cluster nodes in one level. 
Since height of $\Gamma$ (and $T$) is $\delta$, thus with probability at least $(1-\frac{2}{n^{2}} )^{\delta}\geq 1-\frac{1}{n} $ (assuming that $\delta$ is polylogarithmic in $n$)
such $\Phi(.)$ is well defined over all levels. 

Note the increase of length of $P'$ only occurs when $P'$ is $\eta$-dense with respect to $C$ and for any of these clusters $C$ in 
$\Gamma|_{\Phi}$, the increase of length $P'$ by modifying $P'$ to be portal respecting with respect to $\pi_{\Phi(C)}$ is at most $ \frac{\eps}{\delta} ||P'\cap C||$. Since this $(1+\frac{\eps}{\delta})$-factor
increase occurs each time we go down the decomposition form a cluster $C$ to the next cluster level down, and 
the height of the decomposition is $\delta$,
thus inductively, for any cluster $C$ in 
$\Gamma|_\Phi$: $||P' \cap C||\leq (1+ \frac{\epsilon}{\delta} )^{\delta}||P^*\cap C||\leq  e^{\epsilon} || P^*\cap C|| \leq (1+\epsilon')   || P^*\cap C|| $ for some $\eps'>0$ depending on $\eps$. 
Replacing $\epsilon$ with $\eps'$ we get 
$||P'\cap C||\leq (1+\epsilon)   || P^*\cap C|| $.
\end{proof}

\subsection{$(\eps,\mu)$-approximation for {$k$-stroll}}\label{sec:mu-excess}
In this section we show a stronger bound on the length of the near optimum solution $P'$ guaranteed by Theorem \ref{thm:structure}.
Recall that a $(\eps,\mu)$-approximation for {$k$-stroll} is a path $P$ with $|P|=k$ where $||P||\leq ||P^*||+\eps\cdot\calE_{P^*,\mu}$ where $P^*$ is the optimum solution. We prove that path $P'$ in Theorem \ref{thm:structure} is in fact an $(\eps,\mu)$-approximation for $\mu=\lceil \frac{1}{\epsilon} \rceil+1$ .

Recall that in the proof of the structure theorem, the increase in length of $P'$ only happens in cases when $P'$ is dense with respect to a cluster $C$ in $\Gamma$ and we make the path portal respecting. Consider such a dense cluster $C$ in the hierarchical decomposition $T=\Gamma|_{\Phi}$, i.e. $P'$ is 
$\eta$-dense with respect to $C$. We show $P'$ has high $\mu$-excess in this case and the increased length of $P'$ in $C$ can be upper bounded by a factor of $\mu$-excess of $P^*$. 

\begin{lemma}\label{lem:muexcess}
    Let $D$ be a set of disjoint clusters and $Q$ be a path. Then $\calE_{Q,2}\geq \sum_{C\in D}(||Q \cap C||-\Delta_{C}) $.
\end{lemma}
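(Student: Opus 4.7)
My plan is to proceed by induction on $|D|$. The base case $|D|=0$ is immediate: for $\mu=2$ the only $2$-jump of $Q$ is the direct edge between its endpoints $s,t$, so $\calE_{Q,2} = ||Q|| - d(s,t) \geq 0$ by the triangle inequality.

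For the inductive step, I would pick any cluster $C_1 \in D$. If $Q$ visits no vertex of $C_1$, then $||Q \cap C_1|| - \Delta_{C_1} \leq 0$, and the inductive hypothesis applied to $D \setminus \{C_1\}$ already implies the claim. Otherwise, let $u_1,v_1$ be the first and last vertices of $Q$ lying in $C_1$, and form $Q'$ from $Q$ by replacing the sub-path $Q_{u_1,v_1}$ with the direct edge $(u_1,v_1)$. Since $Q$ is simple and $Q_{s,u_1},Q_{v_1,t}$ are vertex-disjoint, $Q'$ is again a simple $s$-$t$ path.

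Two identities drive the induction. First, since $Q$ and $Q'$ share endpoints,
\[ \calE_{Q,2} - \calE_{Q',2} \;=\; ||Q|| - ||Q'|| \;=\; ||Q_{u_1,v_1}|| - d(u_1,v_1). \]
Second, for each $C \in D \setminus \{C_1\}$ the new edge $(u_1,v_1)$ has both endpoints in $C_1$ and therefore contributes nothing to $||Q' \cap C||$, so
\[ ||Q' \cap C|| \;=\; ||Q \cap C|| - ||Q_{u_1,v_1} \cap C||. \]
Applying the inductive hypothesis to $Q'$ with cluster set $D\setminus\{C_1\}$ and plugging both identities in reduces the claim to
\[ ||Q_{u_1,v_1}|| \;-\; \sum_{C \in D\setminus\{C_1\}} ||Q_{u_1,v_1} \cap C|| \;-\; d(u_1,v_1) \;\geq\; ||Q \cap C_1|| - \Delta_{C_1}. \]
Writing $||Q_{u_1,v_1}||$ as the sum of its within-cluster and not-within-any-cluster edge lengths, the left-hand side equals $||Q_{u_1,v_1} \cap C_1|| + (\text{length of edges of } Q_{u_1,v_1} \text{ not within any } C\in D) - d(u_1,v_1)$. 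Using $d(u_1,v_1)\leq \Delta_{C_1}$, the nonnegativity of the between-cluster contribution, and the fact that $u_1,v_1$ being the first and last $C_1$-visits forces $||Q_{u_1,v_1} \cap C_1|| = ||Q \cap C_1||$, this is at least $||Q \cap C_1|| - \Delta_{C_1}$, as needed.

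The subtle point I expect to require the most care is the boundary bookkeeping behind the second identity: the edge of $Q$ entering $u_1$ and the one leaving $v_1$ must not be within-$C$ for any $C \in D\setminus\{C_1\}$. This holds because $u_1$ is the first $C_1$-visit (so its predecessor in $Q$ lies outside $C_1$) and the clusters in $D$ are pairwise disjoint (so an edge with one endpoint in $C_1$ cannot be within any other cluster); a symmetric argument handles $v_1$. Once this is in place the algebra above goes through cleanly, and the induction closes.
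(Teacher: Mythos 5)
Your proof is correct. It uses the same core idea as the paper — replace a cluster-crossing detour of $Q$ by a direct edge to form a shorter comparison path $Q'$ and bound $\calE_{Q,2}$ by $||Q||-||Q'||$ — but your bookkeeping is genuinely different and, in my view, cleaner. The paper builds a single $Q'$ by simultaneously processing all clusters (shortcutting the first maximal subpath inside each $C$ and bypassing subsequent visits), and then asserts the identity $||Q||-||Q'||=\sum_{C\in D}(||Q\cap C||-||Q'\cap C||)$. That identity is delicate: the bypass step alters bridge edges incident to $C$, so edge lengths lying outside every cluster are not obviously preserved, and when the first-to-last spans of different clusters interleave along $Q$, the simultaneous construction requires more care than the paper gives it. Your inductive version avoids both issues. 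By peeling off one cluster $C_1$ at a time and shortcutting the entire span $Q_{u_1,v_1}$ (first to last $C_1$-vertex in one step, rather than visit by visit), you get an exact telescoping identity $||Q'\cap C||=||Q\cap C||-||Q_{u_1,v_1}\cap C||$ for each remaining $C$, since the inserted chord has both endpoints in $C_1$ and the clusters are disjoint. The rest of your accounting — decomposing $||Q_{u_1,v_1}||$ into within-$C_1$, within-other-$C$, and free edges, dropping the nonnegative free contribution, and using $d(u_1,v_1)\le\Delta_{C_1}$ together with $||Q_{u_1,v_1}\cap C_1||=||Q\cap C_1||$ — closes the induction correctly. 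One small remark: the "subtle point" you flag about edges entering $u_1$ and leaving $v_1$ is in fact automatic once you observe that $(u_1,v_1)$ cannot lie inside any $C\neq C_1$ by disjointness, so the second identity holds without additional argument. In short, your proof is a slightly different and somewhat more robust route to the same lemma; what it buys is that it does not rely on a simultaneous-shortcut equality whose justification the paper leaves implicit.
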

\begin{proof}
    Intuitively, this is saying if $Q$ passes through several (disjoint) clusters in $D$ then the excess of $Q$ is at least as big as sum of excess of $Q$ in those clusters.

    Suppose start-end node of $Q$ are $u,v$ and let 
    $Q_{0}$ be the path just consisting of $u,v$. By definition of the excess, $\calE_{Q,2}=||Q||-||Q_{0}||$. Now consider following path $Q'$, which starts at $u$ and follows $Q$ but when it encounters a cluster $C$ in $D$ and it visits $C$ for the first time it directly connects the start and end node of the subpath of $Q$ in $C$. When it encounters a cluster $C$ in $D$ that is visited before, then bypasses $C$ entirely, i.e. directly connects the last vertex in $Q$ before it enters $C$ this time and the first vertex in $Q$ after it visits after it exits $C$ this time. From the construction of $Q'$, if $C\in D$, then $||Q'\cap C||\leq \Delta_{C}$. Clearly for any cluster $C \notin D$: $||Q'\cap C ||=||Q\cap C||$.  Thus $\calE_{Q,2}=||Q||-||Q_{0}|| \geq ||Q||-||Q'||= 
    \sum_{C\in D}(||Q \cap C||- ||Q' \cap C||) \geq \sum_{C\in D}(||Q \cap C||- \Delta_{C}) $.
\end{proof}

\begin{theorem}\label{thm:muexcess}
 Let graph $G=(V,E)$ with doubling dimension $\kappa$, start and end nodes $s,t\in V$, and integer $k$ be given as an instance of $k$-stroll. Suppose $P'$ and $\Gamma|_{\Phi}$ are as guaranteed by Theorem \ref{thm:structure}. Let $\mu=\lceil \frac{1}{\epsilon}\rceil+1$, then $P'$ is a $(\epsilon,\mu)$-approximation.
\end{theorem}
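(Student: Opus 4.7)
The plan is to charge the length increase $||P'|| - ||P^*||$ against $\calE_{P^*,\mu}$ rather than against $||P^*||$ itself. The structure theorem adds length to $P'$ only at $\eta$-dense clusters, where by definition $||P^* \cap C|| > \eta\, \Delta_C$ with $\eta = \log n/\epsilon$, so wherever $P'$ grows, $P^*$ has already wasted length far exceeding the local diameter, and that gap is exactly what should reappear as $\mu$-excess.

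First I would localize the growth. Let $\mathcal{T}$ be the antichain of \emph{topmost} dense clusters in $T = \Gamma|_\Phi$, meaning those $C \in T$ that are dense but whose proper cluster ancestors are all sparse. No modification is performed at sparse clusters, so all added length is contained in $\bigcup_{T \in \mathcal{T}} T$. Telescoping the per-level $(1+\epsilon/\delta)$ blow-up of Theorem \ref{thm:structure} over the at most $\delta$ levels below each $T$ and summing over the disjoint members of $\mathcal{T}$ yields
\begin{equation*}
||P'|| - ||P^*|| \;\leq\; \sum_{T \in \mathcal{T}} \epsilon\, ||P^* \cap T||.
\end{equation*}

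Next I would extend Lemma \ref{lem:muexcess} from $2$-excess to $\mu$-excess: for any family $\mathcal{D}$ of pairwise disjoint clusters whose total visit count from $P^*$ satisfies $\sum_{C \in \mathcal{D}} V_C \leq \lfloor (\mu-2)/2 \rfloor$,
\begin{equation*}
\calE_{P^*,\mu} \;\geq\; \sum_{C \in \mathcal{D}} \bigl(||P^* \cap C|| - V_C\, \Delta_C\bigr).
\end{equation*}
The argument mirrors Lemma \ref{lem:muexcess}: one explicitly constructs a $\mu$-jump of $P^*$ that keeps both endpoints together with the first and last vertex of every visit; this uses at most $2 + 2\sum_C V_C \leq \mu$ vertices and, by triangle inequality, each visit to $C$ contributes at most $\Delta_C$ to the jump length. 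Combined with the density inequality $||P^* \cap T|| - \Delta_T \geq (1 - 1/\eta)\, ||P^* \cap T||$, applying this to $\mathcal{T}$ gives
\begin{equation*}
\sum_{T \in \mathcal{T}} ||P^* \cap T|| \;\leq\; \tfrac{1}{1 - 1/\eta}\, \calE_{P^*,\mu},
\end{equation*}
and the theorem follows after absorbing the harmless $1/\eta$ slack (recall $\eta = \log n/\epsilon$ is huge) into a constant rescaling of $\epsilon$.

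The main obstacle is the regime in which the total visit count $\sum_{T \in \mathcal{T}} V_T$ exceeds $(\mu-2)/2 = \Theta(1/\epsilon)$, because then a single $\mu$-jump of $P^*$ cannot short-cut every dense region simultaneously and the extended lemma as stated is too weak. My plan to resolve this is a recursive refinement: inside each $T \in \mathcal{T}$ one reruns the same charging argument against the $\mu_T$-excess of the subpath $P^* \cap T$ for sub-budgets $\mu_T$ satisfying $\sum_T V_T\, \mu_T \leq \mu - 2$, which is exactly the $\mu$-jump slack of $P^*$ after reserving the entry and exit vertices of every visit to $\mathcal{T}$. Since $\Gamma|_\Phi$ has only $\delta$ cluster levels the recursion terminates, and the geometric losses at each level can be absorbed into a final constant-factor rescaling of $\epsilon$, producing $||P'|| \leq ||P^*|| + \epsilon\, \calE_{P^*,\mu}$ as required.
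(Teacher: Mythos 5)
Your localization step (maximal dense antichain $\mathcal{T}$, bounding $||P'||-||P^*||\leq\sum_{T\in\mathcal{T}}\epsilon\,||P^*\cap T||$) matches the paper's construction of the set $D$, but the proposed extension of Lemma \ref{lem:muexcess} is broken in two ways, and the entire charging argument rests on it. First, the direction is wrong: the paper defines $J^*_\mu(P^*)$ as the $\mu$-jump of \emph{maximum} length, so $\calE_{P^*,\mu}=||P^*||-\max_J||J||$. Exhibiting one particular jump $J_0$ of small length only shows $||J^*_\mu(P^*)||\geq||J_0||$, hence an \emph{upper} bound on $\calE_{P^*,\mu}$ --- the opposite of what you need. (Lemma \ref{lem:muexcess} avoids this because for $\mu=2$ the jump is unique, so there is no maximum to argue about.) Second, the claimed bound $\calE_{P^*,\mu}\geq\sum_{C}(||P^*\cap C||-V_C\,\Delta_C)$ is actually false under your visit-count constraint: with one cluster $C$, $V_C=1$, $\mu=4$, take $a,b,c\in C$ pairwise at distance $\Delta_C$, put $s$ next to $a$ and $t$ next to $c$, and let $P^*=\langle s,a,b,c,t\rangle$; then $||P^*\cap C||-\Delta_C=\Delta_C$ while the $4$-jump $\langle s,a,b,t\rangle$ already has length $\approx 2\Delta_C\approx||P^*||$, so $\calE_{P^*,4}\approx 0$. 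The $\mu-2$ intermediate vertices of a $\mu$-jump are free to sit \emph{inside} a dense cluster and salvage up to $\Theta(\mu)\,\Delta_C$ of its length independently of $V_C$, so a $V_C\,\Delta_C$ slack is unattainable.

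The paper avoids both traps by taking the \emph{optimal} $\mu$-jump $v_1,\ldots,v_\mu$, cutting $P^*$ at these points into $P^*_1,\ldots,P^*_{\mu-1}$, and using the exact identity $\calE_{P^*,\mu}=\sum_{i=1}^{\mu-1}\calE_{P^*_i,2}$; applying Lemma \ref{lem:muexcess} separately to each segment then gives $\calE_{P^*,\mu}\geq\sum_{C\in D}\bigl(||P^*\cap C||-(\mu-1)\Delta_C\bigr)$. The coarser $(\mu-1)\Delta_C$ slack (much larger than your intended $V_C\,\Delta_C$) is then harmless because $\eta$-density gives $||P^*\cap C||\gtrsim\frac{\log n}{\epsilon(1+\epsilon)}\Delta_C\gg(\mu-1)\Delta_C$, so the right-hand side is still $\Omega(||P^*\cap C||)$, and the theorem follows after a rescaling of $\epsilon$. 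The recursive refinement you sketch to handle the large-visit regime would face the same direction problem at each level; I do not see how to repair it without essentially reverting to the optimal-jump decomposition.
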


\begin{proof}
Note that in the proof of Theorem \ref{thm:structure}, the increase of length of $P'$ only occurs when $P'$ is $\eta$-dense with respect to a cluster $C$. We generate a set of disjoint clusters $D$ in $T=\Gamma|_{\Phi} $: we start from $C_{0}$ (the root of $T$) to generate $D$ iteratively. If $P'$ is $\eta$-dense with respect to $C$ then add $C$ to $D$, if $P'$ is sparse with respect to $C$ and $C$ is a non-leaf cluster node then iteratively consider all children cluster nodes of $C$.
Let $D$ be the set of cluster nodes returned by this process. From the construction of $D$, clusters in $D$ are disjoint, and for each cluster $C\in D$, $P'$ is $\eta$-dense with respect to $C$ and $ ||P'||-||P^*||= \sum_{C\in D} (||P' \cap C ||-||P^* \cap C || )  $.

Let  $\{ v_{1},\cdots,v_{\mu} \}$  be the optimal $\mu$-jump of $P^*$  and let  $P^*_{1},P^*_{2},\cdots,P^*_{\mu-1}$ be the subpaths divided by the vertices in this $\mu$-jump, i.e. $P^*_{i}$ is the subpath of $P^*$ whose start and end are $v_{i}$ and  $v_{i+1}$, respectively. By definition of excess, $  \calE_{P^*,\mu}=\sum_{i=1}^{\mu-1}\calE_{P^*_{i},2}$. For the set $D$ and each $P^*_{i}$, by Lemma \ref{lem:muexcess}, $ \calE_{P^*_{i},2} \geq \sum_{C\in D} (||P^*_{i}\cap C||-\Delta_{C}) $. Thus $\calE_{P^*,\mu}\geq  \sum_{i=1}^{\mu-1}\sum_{C\in D}(||P^*_{i}\cap C||-\Delta_{C})$, which is $\sum_{C\in D}(||P^*\cap C||-(\mu-1)\Delta_{C}) $. Recall from the proof of Theorem \ref{thm:structure} that we modify $P'$ when
it is $\eta$-dense with respect to $C$ (i.e. $||P'\cap C||>\frac{\log n}{\eps}\Delta_C$) and we always have 
$||P'\cap C||\leq (1+\eps)||P^*\cap C||$, which implies
$||P^*\cap C||\geq \frac{\log n}{\epsilon(1+\eps)} \Delta_{C} $; in a sense $P^*$ is also (almost) $\eta$-dense with respect to $C$. Since 
$\mu=\lceil \frac{1}{\epsilon} \rceil +1 $, thus $||P^*\cap C||-(\mu-1)\Delta_{C} \geq \frac{||P^*\cap C||}{2\eps}  $ and $ \calE_{P^*,\mu}\geq \sum_{C\in D}\frac{||P^*\cap C||}{2\eps}$. 

As in the proof Theorem \ref{thm:structure}, for any cluster 
$C\in T$, $||P'\cap C||\leq (1+\epsilon)||P^*\cap C||$. Thus  
$ ||P'||-||P^*||= \sum_{C\in D} (||P' \cap C ||-||P^* \cap C || )  \leq  \sum_{C\in D} \epsilon||P^* \cap C || \leq 2\epsilon^2\calE_{P^*,\mu}\leq\eps\calE_{P^*,\mu} $. \end{proof}

We should note that we can generalize this proof slightly as follows. Suppose that we pick some arbitrary $\mu$-jump of $P^*$ (instead of the optimum) and consider $\tilde{P}^*_1,\ldots,\tilde{P}^*_{\mu-1}$ which are the subpaths of $P^*$ defined by that $\mu$-jump. Then the same arguments show that $\sum_{i=1}^{\mu-1}\calE_{\tilde{P}^*_i,2}\geq \sum_{i=1}^{\mu-1}\sum_{C\in D}(||\tilde{P}^*_i\cap C||-\Delta_C)=\sum_{C\in D}||P^*\cap C||-(\mu-1)\Delta_C$ which implies
$\sum_{i=1}^{\mu-1}\calE_{\tilde{P}^*_i,2}\geq \sum_{C\in D}\frac{||P^*\cap C||}{2}$. Using this
one can show that at the end $||P'||\leq ||P^*|| +\eps\sum_{i=1}^{\mu-1}\calE_{\tilde{P}^*_i,2}$.
This slightly more general version will be used later when designing our algorithm for deadline TSP.

\subsection{Finding a near optimum solution for {$k$-stroll}}\label{sec:dp}
In this section we prove Theorem \ref{thm:kstrollDBL} by showing
how we can find a near optimum solution for {$k$-stroll} as guaranteed by Theorem \ref{thm:muexcess} using Dynamic Programming (DP).
The DP is built on the $\gamma$-split tree $\Gamma$ we compute.
Consider an arbitrary cluster node $C$ in $\Gamma$ and the restriction of the near optimum solution $P'$ in the subgraph $G(C)$, denoted by $P'_C$. 
The set of edges of $P'_C$ might be a collection of disjoint paths; one can imagine following along $P'$, it enters and exits $C$ multiple times and each time it enters it follows a path in $C$ until it exits again (assuming that $s,t$ are not in $C$). If we denote the start-end of these subpaths in $C$ as $s_i,t_i$, $1\leq i\leq m$ (for some $m$) and if $|P'_C|=k_C$ then $P'_C$ is a feasible solution for {multi-path $k$-stroll} with start-end pairs $s_i,t_i$ and parameter $k_C$.
This suggests a subproblem in our DP table will corresponds to an instance of {multi-path $k$-stroll} in a cluster $C$.

We define a subproblem in the DP as an instance of {multi-path $k$-stroll} with specified cluster node $C$, integer $k_{C}$ and $\sigma_{C}$  start-end pairs of vertices $\{(s_{i},t_{i})\}$ and the goal is to find a set of paths 
$\{P_{i} \}_{i=1}^{\sigma_{C}}$ such that $P_{i}$ is a $s_{i}$-$t_{i}$ path in $C$ and $ |\cup_{i=1}^{\sigma_{C}} P_{i}| =k_{C} $ while minimizing  $\sum_{i=1}^{\sigma_{C}}||P_{i}|| $. We use $A[C,k_C,(s_{i},t_{i})_{i=1}^{\sigma_{C}}] $ to denote the subproblem defined above and let the entry of the table store the optimal value of the solution to {multi-path $k$-stroll} for this subproblem.
 One should note that in the proof of Theorem \ref{thm:structure}, when we build $P'$ from $P^*$, each time we go down a node of the split-tree, we might have a number of bridge-edges (when $P'$ is sparse with respect to the current cluster $C$) or portal-edges (when $P'$ is dense with respect to the current cluster $C$). When $P'$ is sprase,
the number of bridge-edges is at most $2 \kappa'\frac{\log n}{\eps}$ (it was event $\Lambda_s$ in the sparse case) and when $P'$ is dense then the number of portal-edges is at most $(\frac{16\kappa' \delta}{\epsilon} )^{2\kappa}$. Therefore, each step $P'$ restricted to $C$ might be chopped up into at most $(\frac{16\kappa' \delta}{\epsilon} )^{2\kappa}$ subpaths, each with a new start-end point in a cluster $C_i$ that is part of the partition of $C$. Given that the height of the split-tree is $\delta$, the total number
of start-end points for the instance of {multi-path $k$-stroll} at any cluster node $C$ is at most $\delta(\frac{16\kappa' \delta}{\epsilon} )^{2\kappa}$. This upper bounds $\sigma$ in our subproblems.

Now we describe how to fill in the entries of the table.
The base cases are when the cluster $C$ has constant size $|C|=a$. Such instances can be solved using exhaustive search in $O(1)$ time. 
 
Consider an arbitrary entry $A[C,k_C,(s_i,t_i)_{i=1}^{\sigma_C}]$ where for all split nodes children of $C$ and every cluster children of them the entries of the table are computed.
Consider any child split node $s$ of $C$ in $\Gamma$
and let $C_{1},C_{2},\cdots C_{g}$ be the children cluster nodes of $s$ in 
$\Gamma$. Recall $\pi_{s}$  is the corresponding partition of $C$ and $R_{\pi_{s}}$ is the set of bridge-edges in $C$ (crossing $\pi_{s}$). For each $C_{j}$ let $S_{j}$ be its portal set and recall $R'_{\pi_{s}}$ is the set of edges crossing $\pi_{s}$ only through $\{S_{j}\}$. We guess $k_{C_{j}}$ for each $C_{j}$ such that $\sum_{j=1}^{g}k_{C_{j}}=k_{C}$. We show how to guess start-end pairs $\{ (s_{i},t_{i})_{i=1}^{\sigma_{C_{j}}}  \}$ for each $C_{j}$ and check the consistency of them. To do so we consider two cases: in the first case (meaning we guess we are in the sparse case) we guess a subset of $R_{\pi_{s}}$ of size at most $2\kappa'\frac{\log n}{\epsilon} $; in the second case (meaning we assume we are in the dense case) we guess a subset of $R'_{\pi_{s}}$ of size at most
$(\frac{16\kappa' \delta}{\epsilon} )^{2\kappa}$.
Let $E_{\pi_{s}} $ be the subset guessed in either case. Furthermore for each edge in $E_{\pi_{s}}$, we guess it is in which one of the $\sigma_{C}$ paths with start-end pair $(s_{i},t_{i})_{i=1}^{\sigma_{C}}$ and for each path with start-end pair $(s_{i},t_{i})$ we guess the order of the guessed edges appearing on the path. Specifically speaking, let $e_{1},e_{2}\cdots,e_{l}$  be the edges guessed in the path with source-sink pair $ (s_{i},t_{i})$ appearing in this order. Let $C_{a_1},C_{a_2},\cdots,C_{a_{l+1}} $ be the children cluster nodes of $s$ that the path encounters following $e_{1},e_{2}\cdots,e_{l} $, i.e. $e_{1}$ crosses between $C_{a_1}$ and $C_{a_2}$,  $e_{2}$ crosses between $C_{a_2}$ and $C_{a_3}$, $\cdots$, and $e_{l}$ crosses between $C_{a_l}$ and $C_{a_{l+1}}$. Then we set  $s_{i}$ and the endpoint of $e_{1}$ in $C_{1}$ to be a start-end pair in $C_{a_1}$, the endpoint of $e_{1}$ in $C_{a_2}$  and the  endpoint of $e_{2}$ in $C_{a_2}$ to be a start-end pair in $C_{a_2}$, $\cdots$, the endpoint of $e_{l}$ in $C_{a_{l+1}}$ and $t_{i}$ to be a start-end pair in $C_{a_{l+1}}$. By doing so we generate start-end pairs for each $C_{j}$ and we sort them based on their ordering in 
$s_{i}$-$t_{i}$ path and in the increasing order of $i$. This defines 
$\sigma_{C_{j}}$ start-end pairs for each $C_{j}$.

\begin{lemma}\label{lem:kstroll-dp}
    We can compute all entries  $A[C,k_C,(s_{i},t_{i})_{i=1}^{\sigma_{C}}]$
    in time $ n^{ O( (\frac{\delta}{\epsilon})^{2\kappa+1} )} $.
\end{lemma}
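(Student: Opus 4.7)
The plan is to prove the lemma by separately bounding (i) the number of DP table entries and (ii) the work required to compute each entry, then multiplying. First I would bound the size of the split tree $\Gamma$: by the construction in Section~\ref{sec:structure}, $\Gamma$ has at most $\delta$ cluster-node levels with branching factor $\gamma \cdot 2^{O(\kappa)} = O(\log n \cdot 2^{O(\kappa)})$ between consecutive cluster-node levels, so the total number of cluster nodes is at most $(3\log n \cdot 2^{O(\kappa)})^{\delta}$. The index $k_{C}$ takes at most $n$ values. For the tuple $(s_{i},t_{i})_{i=1}^{\sigma_{C}}$, I would use the observation already recorded above that at each step down $\Gamma$ the restriction of $P'$ gets chopped into at most $(16\kappa'\delta/\epsilon)^{2\kappa}$ additional pieces (dense case) or $2\kappa'\log n/\epsilon$ pieces (sparse case), so after $\delta$ levels $\sigma_{C}\leq \delta\cdot(16\kappa'\delta/\epsilon)^{2\kappa}$; since each coordinate is a vertex drawn from at most $n$ choices, the number of index tuples is bounded by $n^{O((\delta/\epsilon)^{2\kappa})}$.

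Next I would bound the time to fill a single entry $A[C,k_{C},(s_{i},t_{i})_{i=1}^{\sigma_{C}}]$. For each child split node $s$ of $C$ with cluster children $C_{1},\ldots,C_{g}$ (where $g=2^{O(\kappa)}$), the work splits into: (a) guessing the distribution $\{k_{C_{j}}\}$ summing to $k_{C}$, at a cost of $n^{g}$; (b) guessing the subset $E_{\pi_{s}}$ of crossing edges in either the sparse branch (a subset of $R_{\pi_{s}}$ of size at most $2\kappa'\log n/\epsilon$) or the dense branch (a subset of $R'_{\pi_{s}}$ whose total size is at most $(16\kappa'\delta/\epsilon)^{2\kappa}$), which contributes at most $n^{O((\delta/\epsilon)^{2\kappa})}$ choices in either branch; (c) assigning each edge of $E_{\pi_{s}}$ to one of the $\sigma_{C}$ paths and choosing its position along that path, at a cost of $\sigma_{C}^{|E_{\pi_{s}}|}\cdot|E_{\pi_{s}}|!$, a factor dominated by the previous one. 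Once these guesses are fixed, the start-end pairs induced on each $C_{j}$ are determined (as spelled out in the excerpt), and the recurrence only needs to sum previously computed entries. Hence the per-entry work is also $n^{O((\delta/\epsilon)^{2\kappa})}$.

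Multiplying the table size by the per-entry cost and absorbing the remaining $\log n$, $\delta$, and $2^{O(\kappa\delta)}$ factors into the exponent, I would obtain a total running time of $n^{O((\delta/\epsilon)^{2\kappa+1})}$, matching the claimed bound; the extra factor of $\delta/\epsilon$ in the exponent exactly accounts for the height contribution of $\Gamma$ combined with the per-level blowup. The main obstacle I anticipate is the combinatorial bookkeeping in step (c): one must verify that every joint choice of $E_{\pi_{s}}$, its assignment to the $\sigma_{C}$ paths, and the distribution $\{k_{C_{j}}\}$ produces a mutually consistent set of \emph{multi-path $k$-stroll} sub-instances on the children, and that the structured near-optimum $P'$ guaranteed by Theorem~\ref{thm:muexcess} is realized by at least one such guess. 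Once this consistency is established, the running-time bound follows from the routine multiplication above, and combining with Theorem~\ref{thm:muexcess} completes the proof of Theorem~\ref{thm:kstrollDBL}.
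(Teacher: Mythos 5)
Your proposal follows essentially the same strategy as the paper: bound the table size (number of clusters in $\Gamma$ $\times$ choices of $k_C$ $\times$ choices of the $\sigma_C$ start-end pairs) and the per-entry work (split-node choice, $\{k_{C_j}\}$-split, crossing-edge guess, edge-to-path assignment and ordering), then multiply, using $\sigma_C \le \delta|E_{\pi_s}|$ together with $|E_{\pi_s}| = O((\delta/\eps)^{2\kappa})$ in the dominant dense case. One small slip: having correctly stated $\sigma_C \le \delta(16\kappa'\delta/\eps)^{2\kappa}$, you then bound the number of index tuples $(s_i,t_i)_{i=1}^{\sigma_C}$ by $n^{O((\delta/\eps)^{2\kappa})}$, which drops the leading factor of $\delta$; the correct bound is $n^{O(\sigma_C)} = n^{O(\delta(\delta/\eps)^{2\kappa})} = n^{O((\delta/\eps)^{2\kappa+1})}$, and this is precisely where the extra power $\delta/\eps$ in the exponent actually enters (directly through $\sigma_C$), rather than through a separate ``height contribution'' multiplied in at the end as your closing sentence suggests. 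With that bookkeeping fixed the argument coincides with the paper's.
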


\begin{proof}
Formally, to compute $A[C,k_C,(s_{i},t_{i})_{i=1}^{\sigma_{C}}]$:

\begin{itemize}
\item  Consider any child split node $s$ of $C$, let $C_{1},C_{2},\cdots,C_{g}$ be the children cluster nodes of $s$ (where $g$ can depend on $s$).
\item  Guess (i.e. try all possible values) $k_{C_j}$ for each $C_{j}$ such that $\sum_{j=1}^{g}k_{C_{j}} =k_{C}$. 
\item  Let $\pi_{s}$ be the corresponding partition of $C$ and 
$R_{\pi_{s}}$ be the bridge-edges of $\pi_{s}$. For each $C_{j}$ let $S_{j}$ be the portal set for it and let $R'_{\pi_{s}}$ be the set of portal-edges of $\pi_{s}$. We consider both of the following two cases: 1) we guess a subset of $R_{\pi_{s}}$ of size at most $2\kappa'\frac{\log n}{\epsilon}$; 2) we guess a subset of $R'_{\pi_{s}}$; in both cases we denote the set of guessed edges as $E_{\pi_{s}}$.

\item For each edge in $E_{\pi_{s}}$, we guess it is in which one of the $\sigma_{C}$ paths with start-end pair $(s_{i},t_{i})_{i=1}^{\sigma_{C}}$ and for each path with start-end pair $(s_{i},t_{i})$ we guess the order of the guessed edges appearing as described above. 
We generate $\{ (s_{i},t_{i})\}_{i=1}^{\sigma_{C_{j}}}$ for each $C_{j}$ accordingly. Then:

\item We set $A[C,k_C,(s_{i},t_{i})_{i=1}^{\sigma_{C}}]=\min
\sum_{j=1}^{g} A[C_{j},k_{C_j},(s_{i},t_{i})_{i=1}^{\sigma_{C_{j}}}] +
    \sum_{(u,v)\in E_{\pi_{s}}}d(u,v) $,
where the minimum is taken over all tuples 
$s, k_{C_{1}},\cdots,k_{C_{g}},(s_{i},t_{i})_{i=1}^{\sigma_{C_{1}}},\cdots,(s_{i},t_{i})_{i=1}^{\sigma_{C_{g}}}$ as described above.

\end{itemize}

Based on the structure theorem and the recurrence given above it is straightforward to see that this recurrence computes the best $P'$ as guaranteed in Theorem \ref{thm:muexcess}. 

Now we analyze the running time. 
First we show the time required to compute one entry of the DP table is  
$n^{O(  (\frac{\delta}{\epsilon})^{2\kappa+1} )  }$.
In the recursion, for a cluster node $C$, there are $3\log n $ children split nodes of $C$ to consider. For a certain split node $s$, let $C_{1},C_{2},\cdots,C_{g}$ be children cluster nodes of $s$, there are at most $n^{g}$ guesses to for $\{k_{C_{j}}\}$ such that $\sum_{j=1}^{g}k_{C_{j}}=k_{C} $, which is at most $n^{2^{O(\kappa)}}$ because $g\leq 2^{O(\kappa)}$. For $E_{\pi_{s}}$: there are two cases, if $E_{\pi_{s}}\subset R_{\pi_{s}}$ then $|E_{\pi_{s}}|\leq 2 \kappa' \frac{\log n}{\epsilon} $, there are at most 
$n^{(4\kappa'\frac{\log n}{\epsilon})}$ many possible $E_{\pi_{s}}$'s to consider; if $ E_{\pi_{s}}\subset R'_{\pi_{s}}  $, then because $|R'_{\pi_{s}}|\leq ( \frac{16\kappa' \delta}{\epsilon} )^{2\kappa}    $ in this case there are at most $2^{( \frac{16\kappa' \delta}{\epsilon} )^{2\kappa}    }\leq n^{(\frac{16\kappa' \delta}{\epsilon})^{2\kappa}}$ many possible $E_{\pi_{s}}$'s to consider. To generate $(s_{i},t_{i})_{i=1}^{\sigma_{C_{j}}}$ for each $C_{j}$: for a certain $E_{\pi_{s}}$ and for each edge in $E_{\pi_{s}}$ we guess it is in which one of $\sigma_{C}$ paths with start-end pair $(s_{i},t_{i})_{i=1}^{\sigma_{C}}$ and for each path with start-end pair $(s_{i},t_{i})$ we guess the order of the edges appearing, which is at most $ |E_{\pi_{s}}|! |E_{\pi_{s}}|^{\sigma_{C}} $ guessings. Note at each recursion it may increase at most $|E_{\pi_{s}}|$ many the number of start-end pairs and the depth of the recursion is $\delta$. Thus $\sigma_{C}\leq \delta |E_{\pi_{s}}|$ and 
$|E_{\pi_{s}}|! |E_{\pi_{s}}|^{\sigma_{c}} \leq (\frac{16\kappa' \delta}{\epsilon} )^{2\kappa} ! (\frac{16\kappa' \delta}{\epsilon} )^{2\kappa \delta ( \frac{16\kappa' \delta}{\epsilon} )^{2\kappa}  }   \leq n^{O((\frac{\delta}{\eps})^{2\kappa+1})}$.
 
We show the size of the dynamic programming table is at most  $n^{O((\frac{\delta}{\epsilon})^{2\kappa +1})}$.
Recall an entry of the table is of the form $A[C,k_{C},(s_{i},t_{i})_{i=1}^{\sigma_{C}}]$. For $C$, there are at most $(3\log n 2^{\kappa} )^{\delta}  $ cluster nodes in $\Gamma$ because the size of $\Gamma$ is at most $(3\log n 2^{\kappa} )^{\delta} $. For $k_{C}$, there are at most $n$ possible value of $k_{C}$ to consider. For $\{s_{i},t_{i} \}_{i=1}^{\sigma_{C}}$, there are at most $ n^{2\sigma_{C} } $ start-end pairs to consider, which is at most $ n^{2\delta (\frac{16\kappa'\delta}{\epsilon})^{2\kappa} } $ because $\sigma_{C}$ is at most $\delta |E_{\pi_{s}}| $. 

Therefore, computing the whole DP table and finding the near optimum path $P'$ as in Theorem \ref{thm:muexcess} takes at most $ n^{ O( (\frac{\delta}{\epsilon})^{2\kappa+1} )} $ time.

\end{proof}

The goal is to compute $A[c_{0},k,(s,t)]$ where $k$ and $(s,t)$ are specified in the {$k$-stroll} instance. Proof of Theorem \ref{thm:kstrollDBL} follows from this lemma and Theorems \ref{thm:structure} and \ref{thm:muexcess}.

\subsection{Approximating P2P Orienteering on Doubling Metrics}\label{sec:p2p}
In this section we prove Theorem \ref{thm:P2PDBL} using the results of previous section for {$k$-stroll}.

\begin{lemma}\label{lem:p2porienteering}
    Let $G=(V,E)$ be graph with constant doubling dimension $\kappa$ and given an instance of a P2P orienteering on $G$ with specified budget $B$, start node $s\in V$ and end node $t\in V$. Let $P^*$ be the optimal for this instance and $k=|P^*|$. Then we can get a $s$-$t$ path that visits at least $(1-\epsilon)k$ vertices in $G$ with the length at most $B$.
\end{lemma}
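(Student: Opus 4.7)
The plan is to adapt the approach of Chen and Har-Peled~\cite{chen2008euclidean} that reduces P2P orienteering to $k$-stroll: enumerate the target number of vertices, invoke the $(\eps,\mu)$-approximation for $k$-stroll guaranteed by Theorem~\ref{thm:kstrollDBL} (established via Theorem~\ref{thm:muexcess}), and then shortcut the returned path so that its length fits inside the budget $B$ while losing only an $\eps$-fraction of the visited vertices. Set $\mu=\lceil 1/\eps\rceil+1$. For every $k\in\{1,\dots,n\}$ I would run the algorithm of Theorem~\ref{thm:kstrollDBL} to obtain a path $P'_k$ with $|P'_k|=k$ and $||P'_k||\leq ||P^{**}_k||+\eps\calE_{P^{**}_k,\mu}$, where $P^{**}_k$ is the optimum $k$-stroll solution on $(G,s,t)$; after the shortcutting step described below, I return the candidate with the most vertices whose length is at most $B$.

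Focus on the correct guess $k=|P^*|$. Because $P^*$ is itself a feasible solution to $k$-stroll on $G$, we have $||P^{**}_k||\leq ||P^*||\leq B$ and hence $||P'_k||\leq B+\eps\calE_{P^{**}_k,\mu}$; in other words $P'_k$ can overshoot the budget by at most $\eps\calE_{P^{**}_k,\mu}$. I would then shortcut $P'_k$ as follows: partition the $k$ vertices of $P'_k$ into $\mu-1$ consecutive blocks of at most $\lceil k/(\mu-1)\rceil\leq \eps k+1$ vertices each by selecting a $\mu$-jump of $P'_k$ at evenly spaced indices, identify the block whose corresponding subpath has the largest $2$-excess, and replace that subpath by the direct edge between its endpoints. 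Let $Q$ be the resulting $s$-$t$ path; by construction $|Q|\geq (1-\eps)k$, and the length saved is at least the average $2$-excess across the $\mu-1$ blocks, which is at least $\frac{1}{\mu-1}(||P'_k||-||J||)\geq \frac{\calE_{P'_k,\mu}}{\mu-1}\geq \eps\calE_{P'_k,\mu}$ where $J$ is the chosen $\mu$-jump.

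Verifying $||Q||\leq B$ then reduces to arguing that the shortcutting saving dominates the approximation overhead, i.e.\ that $\eps\calE_{P'_k,\mu}\geq \eps\calE_{P^{**}_k,\mu}$ up to lower-order terms. This will be the main obstacle of the proof. The plan is to invoke the strengthened form of Theorem~\ref{thm:muexcess} pointed out in the remark at the end of Section~\ref{sec:mu-excess}: the $(\eps,\mu)$-approximation in fact satisfies $||P'_k||\leq ||P^{**}_k||+\eps\sum_i \calE_{\tilde{P}^{**}_{k,i},2}$ for \emph{any} $\mu$-jump of $P^{**}_k$. By aligning this $\mu$-jump on $P^{**}_k$ with the evenly spaced $\mu$-jump chosen on $P'_k$, and exploiting the structural closeness of $P'_k$ and $P^{**}_k$ coming from the proof of the structure theorem (in particular the bound $||P'_k\cap C||\leq (1+\eps)||P^{**}_k\cap C||$ on every dense cluster $C$ of $T=\Gamma|_\Phi$), I expect to show that the block removed from $P'_k$ has $2$-excess at least as large as the corresponding block of $P^{**}_k$ up to an $O(\eps)$-factor, which is enough to absorb the approximation error and yield $||Q||\leq ||P^*||\leq B$.

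The overall running time is dominated by the $n$ invocations of Theorem~\ref{thm:kstrollDBL}, each costing $n^{O((\log\Delta/\eps)^{2\kappa+1})}$ time, plus $O(n/\eps)$ post-processing per candidate for the shortcutting step, matching the bound claimed in Theorem~\ref{thm:P2PDBL}. The non-routine step is the excess-absorption argument in the previous paragraph; once it is in place, the remainder is essentially bookkeeping over the guesses of $k$ and over the $\mu-1$ candidate blocks to drop.
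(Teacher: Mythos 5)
Your plan reverses the order of operations used in the paper's proof, and that reversal creates a genuine gap. The paper reasons about $P^*$ \emph{before} invoking the $k$-stroll approximation: it evenly splits $P^*$ into $\mu-1$ blocks $P^*_1,\ldots,P^*_{\mu-1}$, drops the block $P^*_j$ of maximum $2$-excess to obtain a (non-constructive but existent) feasible path $P'$ with $|P'|\geq(1-\eps)k$ and $\|P'\|=B-\calE_{P^*_j,2}$, and only then applies Theorem~\ref{thm:muexcess} to the \emph{smaller} $k'$-stroll instance with $k'=|P'|$. The approximation overhead there is $\eps\calE_{P',\mu}$, which the paper bounds through the evenly-spaced $\mu$-jump of $P'$ by $\frac{1}{\mu-1}\bigl(\sum_i\calE_{P^*_i,2}-\calE_{P^*_j,2}\bigr)\leq\calE_{P^*_j,2}$, since $j$ maximizes the block $2$-excess. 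The saving from the dropped block therefore absorbs the overhead and $\|P''\|\leq B$. Crucially, every quantity in that chain lives on pieces of $P^*$, which are under control.

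You instead run the DP first to get $P'_k$ and then shortcut $P'_k$. As you yourself flag, this requires (up to lower-order terms) $\calE_{P'_k,\mu}\geq\calE_{P^{**}_k,\mu}$, and that inequality is \emph{not} true in general: the DP minimizes length over structured $k$-vertex paths, and there is nothing preventing its output from being a nearly taut path with $\calE_{P'_k,\mu}\approx 0$ while still having length close to $\|P^{**}_k\|+\eps\calE_{P^{**}_k,\mu}$. In such a case the shortcutting step saves essentially nothing and the returned path can overshoot $B$. Your proposed repair via the per-cluster bound $\|P'_k\cap C\|\leq(1+\eps)\|P^{**}_k\cap C\|$ does not apply: that bound relates the \emph{specific} structured deformation of $P^{**}_k$ built inside the proof of Theorem~\ref{thm:structure} to $P^{**}_k$ itself, whereas the DP returns whichever structured path happens to be shortest, which need not be aligned with $P^{**}_k$ cluster-by-cluster or block-by-block. (A smaller issue: the step $\frac{\calE_{P'_k,\mu}}{\mu-1}\geq\eps\calE_{P'_k,\mu}$ needs $\mu-1\leq 1/\eps$, which fails for $\mu=\lceil 1/\eps\rceil+1$ unless $1/\eps$ is an integer; the paper's other sections use $\mu=\lfloor 1/\eps\rfloor+1$ for exactly this reason.) The fix is to adopt the paper's ordering: argue for the existence of a short, high-cardinality feasible solution $P'$ for $k'$-stroll by shortcutting $P^*$, and only then invoke the $(\eps,\mu)$-approximation with target $k'$.
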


\begin{proof}
The idea of the proof is starting from $P^*$ we can show there exists a subpath of it $P'$, where $|P'|\geq (1-\eps)k$ and $||P'||$ is smaller
than $B$ by at least $\eps$ times the $\mu$-excess of $P^*$ for $\mu=\Theta(1/\eps)$. So if we find
a $(\eps,\mu)$-approximation for {$k$-stroll} with parameter $(1-\eps)k$ and $\mu=\Theta(1/\eps)$ it's length will respect the budget bound $B$. To do that we break $P^*$ into $\mu$ equal size segments each having $\eps\cdot k$  nodes and short-cut the longest segment.

Let $\mu=\lceil \frac{1}{\eps}\rceil+1$ and assume that $k\geq\mu^2$
(otherwise we can find $P^*$ in polynomial time by exhaustive search in time $O(n^{1/\eps^2})$).
We construct a subsequence of $1,\cdots,k$ to define a $\mu$-jump of $P^*$: Let $a_{i}=\lceil \frac{(i-1)(k-1)}{\mu-1} \rceil+1$, $1\leq i\leq \mu$. Note that $a_{1}=1$ and $a_{\mu}=k$. Let $P^*_{1},P^*_{2},\cdots,P^*_{\mu-1}$ be the subpaths of $P^*$ divided by $\{v_{a_{1}},\cdots,v_{a_{\mu}}\}$, i.e. $P^*_{i} $ is a subpath of $P^*$ with the source $v_{a_{i}}$ and sink $v_{a_{i+1}}$.

  For each $P^*_{i}$ we consider the $2$-excess of it and let $P^*_{j}$ be the subpath with maximum $2$-excess among $\{P^*_{i}\}_{i=1}^{\mu-1}$, i.e. $j=\arg\max_{i}\calE_{P^*_{i},2}$. Note $|P^*_{j}|=a_{j+1}-a_{j}+1= (\lceil \frac{j(k-1)}{\mu-1} \rceil+1 ) -(\lceil \frac{(j-1)(k-1)}{\mu-1} \rceil+1 )+1\leq \lceil  \frac{k-1}{\mu-1} \rceil +1.$
Then let $P'$ be the path exactly the same as $P^*$ except $P'$ directly connects $v_{a_{j}}$ and $v_{a_{j+1}}$ in $P^*_{j}$. From the construction of $P'$, 

\begin{eqnarray}
|P'|=k-|P_{j}|+2 \geq k-\lceil  \frac{k-1}{\mu-1} \rceil-1+2 \geq k-\lfloor \frac{k-1}{\mu-1} \rfloor \geq (1-\epsilon)k
\quad\quad \text{since $k\geq\mu^2$, and}\label{eqn:eq1}\\
||P'||=||P^*||-\calE_{P^*_{j},2}=B-\calE_{P^*_{j},2}.\label{eqn:eq2}
\end{eqnarray}
We consider $P'$ as a feasible solution for a $k'$-stroll instance with $k'=|P'| \geq (1-2\epsilon)k$ and $s,t\in V$. By Theorem \ref{thm:muexcess}, we can compute a $(\epsilon,\mu)$-approximation for this instance where $\mu=\lceil \frac{1}{\epsilon} \rceil +1$, denoted as $P''$:
\begin{eqnarray}
|P''|\geq k'\geq (1-\epsilon)k \quad\quad \text{and} \label{eqn:eq3}\\
||P''||\leq ||P'||+ \epsilon \calE_{P',\mu} \leq B-\calE_{P^*_{j},2} + \epsilon \calE_{P',\mu} \leq B-\calE_{P^*_{j},2} + \frac{1}{\mu-1} \calE_{P',\mu},\label{eqn:eq4}
\end{eqnarray}
where the 2nd line uses (\ref{eqn:eq2}).
We consider the $\mu$-jump of $P'$ defined by $ \langle v_{a_{1}},\cdots,v_{a_{\mu}}\rangle $. Note that this is is also a $w$-jump of $P^*$. By the definition of excess and how we obtained $P'$ from $P^*$ (by short-cutting $P^*_{j,2}$):
\begin{equation}\label{eqn:eq5}
\calE_{P',\mu} \leq ||P'||- ||\langle v_{a_{1}},\cdots,v_{a_{\mu}}\rangle  \rangle || 
=\sum_{i=1}^{\mu-1}\calE_{P^*_{i},2} -\calE_{P^*_{j},2}.
\end{equation}

Thus, using (\ref{eqn:eq4}) and (\ref{eqn:eq5}):
$||P''||\leq B-\calE_{P^*_{j},2}+\frac{1}{\mu-1}(\sum_{i=1}^{\mu-1}\calE_{P^*_{i},2} -\calE_{P^*_{j},2} ) \leq B$,
where the last inequality follows from the fact that $\calE_{P^*_j,2}$ is the largest among all indices.

\end{proof}

However, for the P2P orienteering instance, $P$ and $k$ in Lemma \ref{lem:p2porienteering} are unknown in advance. Therefore, we will consider all possible integers $1\leq k\leq n$ and for each $k$ we get the approximation for {$k$-stroll} on $G$ with specified $k$ and $s,t\in V$. We return the maximum $k$ such that the length of path we get for $k$-stroll is at most $B$.
This completes the proof of Theorem \ref{thm:P2PDBL}.

\section{An Approximation Scheme for Deadline TSP in Doubling Metrics}\label{sec:qptas_deadlineTSP}

In this section we prove Theorem \ref{thm:deadlineDBL}. Our algorithm builds upon ideas from \cite{FriggstadS22} for $O(1)$-approximation for deadline TSP for general metrics combined with new ideas as well as those developed in the previous section to get an approximation scheme for deadline TSP on doubling metrics.
Friggstad and Swamyy \cite{FriggstadS22} present an $O(1)$-approximation for deadline TSP running in time $n^{O(\log n\Delta)}$ assuming that all distances are integers and at least 1. They use the notion of regret which is the same as $2$-excess. Note if path $P$ visits $u$ and then $v$ then
short-cutting the subpath $P_{uv}$ (replacing it with edge $uv$) will save a length which is exactly $\calE_{P_{uv},2}$. They guess a sequence of vertices 
$v_{0}=s,v_{1},\cdots,v_{\ell}$ of an optimal solution (with $\ell=O(\log\Delta)$) such that the $2$-excess of the subpaths of optimal increase geometrically: $\calE_{P_{v_{i}v_{i+1}},2}\geq \alpha^{i}$, where $\alpha$ is some constant satisfying $1+\alpha\geq \alpha^{2}$. They consider a set of {P2P orienteering} instances with start node $v_{i}$, end node $v_{i+1}$, and length budget  $d(v_{i},v_{i+1})+\alpha^{i}$. These instances are not independent however, hence this is a more general problem that we call multi-path orienteering with group budget constraints (see \cite{FriggstadS22}).
They show given an $\beta$-approximation for {P2P orienteering}, at an another $O(1)$-factor loss, one can turn it into an $O(\beta)$-approximation for multi-group multi-path orienteering. Then they concatenate these paths; these paths are not respecting the deadlines however. In order to make them deadline respecting, from every three consecutive paths they shortcut two of them, so another $O(1)$-factor loss. The saving for the shortcutting of two paths is enough for the deadline of every vertex in the third path being satisfied. To obtain $O(1)$-approximation for {P2P orienteering} instances with groups, they use known reductions from the problem of maximum coverage with group budgets to classic maximum coverage and use algorithm of \cite{chekuri2004maximum} to get a constant approximation via a reduction to classic {P2P orienteering} (see \cite{FriggstadS22} for more details). Putting everything together, to obtain an $O(1)$-approximation for deadline TSP they lose $O(1)$ factor in three steps. 

In our setting in order to get a $(1+\eps)$-approximation for deadline TSP on graphs with bounded doubling dimension, we have to change all these steps so that we don't lose more than $(1+\epsilon)$ factor in any step. It turns out  it becomes significantly more complex to maintain an at most $(1+\epsilon)$ loss in any step. As in \cite{FriggstadS22}, we assume distances are integer and $\geq 1$ (this can be done as pointed out in \cite{FriggstadS22} by  scaling). 

{\bf Overview of the proof:}
Suppose we have guessed a sequence of vertices $\langle v_{0}=s,v_{1},\cdots,v_{m}\rangle$ of an optimum solution $P^*$ where the $\mu$-excess of the sub-path $P^*_{v_i,v_{i+1}}$ is (at least) $\alpha^i$, where $\alpha=1+\eps$ (for simplicity assume the increase in
excess is exactly $\alpha^i$). We also assume we have guessed
the lengths of these sub-paths, say $||P^*_{v_i,v_{i+1}}||=B_i$. 
Note that the vertices visited in $P^*_{v_i,v_{i+1}}$ all must
have a deadline at least as big as the visit time of $v_i$ in $P^*$(which
we have guessed since we know all previous $B_j$'s, $j<i$); let $W_i$
denote this set of vertices.
Let $\calI_i$ be the {P2P orienteering} instance with start-end pair $v_i,v_{i+1}$, budget $B_i$, where the vertices allowed to visit are $W_i$.
Note that the sub-paths $P^*_{v_i,v_{i+1}}$ form a solution to these instances $\calI_i$ for $i\geq 1$.
If we have $v_i$'s and $B_i$'s, we can try to solve these $O(\log \Delta)$ instances simultaneously
(our DP described for {P2P orienteering} can be expanded to handle when we have $O(\log \Delta)$ instances to be solved on the same ground set simultaneously). 

One problem is that the vertices visited in $\calI_i$ might be violating their deadlines slightly. We will show that this violation will be small (using the assumption that the excess of subpath $P^*_{v_i,v_{i+1}}$ was $\alpha^i$) and that by using a similar technique as we did to convert an approximation for {$k$-stroll} to an approximation for {P2P orienteering} we can drop a small fraction of vertices visited in all instances such that the total saving in time achieved for all $\calI_j$ with $j<i$ is enough to ensure all the (remaining) vertices in $\calI_i$ are visited before their deadline. Now we start explaining the details of the proof.

Let $G = (V, E)$ be a graph with constant doubling dimension $\kappa$, given a start node $s\in V$ and deadline $D(v)$ for all $v\in V$ as an instance $\calI$ of deadline TSP on $G$. Suppose $\mu$ is a constant (will be fixed to be $\lfloor\frac{1}{\eps}\rfloor+1$ later and let $\alpha=(1+\eps)$. Let $P^*$ be an optimum solution for this instance and $\langle v_{0},v_{1},\cdots,v_{m}\rangle$ be a sequence of vertices in $P^{*}$ satisfying the following properties:

\begin{itemize}
    \item $v_{0}=s$ is the start node of $P^{*}$.
    \item  $v_{i+1}$ is the first vertex in $P^{*}$ after $v_{i}$ with $ \calE_{P^*_{v_{i}v_{i+1}},\mu}> \alpha^{i}$, except possibly for $v_m$ (the last vertex of $P^{*}$).
\end{itemize}

So each vertex $v_{i+1}$ is the first vertex along the optimum
after $v_i$ such that the $\mu$-excess of the subpath $P^*_{v_iv_{i+1}}$ is at least $\alpha^i$. We will be guessing these vertices $v_i$'s eventually and try to find (good) approximations of $P^*_{v_iv_{i+1}}$. We can assume that $|P^*_{v_iv_{i+1}}|\geq \mu^2$, otherwise we can compute $P^*_{v_iv_{i+1}}$ exactly using exhaustive search.
We also denote the vertex on $P^*$ immediately before $v_{i+1}$ by
$v'_i$. Note  $||P^{*}|| \leq n\Delta_{G}$, thus  $m\leq h\delta$ (where $\delta=\log\Delta$) for some constant $h=h(\eps)>0$. 
For each $0\leq i <m$, we break $P^*_{v_iv_{i+1}}$ into $\mu-1 $ subpaths of (almost) equal sizes, denoted as $P^{*}_{i,j},1\leq j<\mu$, by selecting a $\mu$-jump $J_i: v_i=u^1_i,u^2_i,u^3_i,\ldots,u^{\mu-1}_i,u^{\mu}_i=v_{i+1}$ as follows. Assume $P^*_{v_iv_{i+1}}=\langle v_{i,1},\ldots,v_{i,k_i}\rangle$ where $v_i=v_{i,1}$ and $v_{i,k_i}=v_{i+1}$, let $a_{j}= \lceil \frac {(j-1)(k_{i}-1)  }{\mu -1} \rceil +1$ then $a_{1}=1$, $a_{\mu}=k_{i}$, and if we consider  $v_{i,a_{1}},\cdots,v_{i,a_{\mu}}$ then we obtain $J_i$ by letting $v_{i,a_j}=u^j_i$. Suppose $J^*_\mu=J^*_\mu(P^*_{v_iv_{i+1}})$ is the optimum $\mu$-jump of $P^*_{v_iv_{i+1}}$, which is the $\mu$-jump with the maximum length.
Recall that $\calE_{P^*_{v_iv_{i+1}},\mu}$ is the $\mu$-excess of $P^*_{v_iv_{i+1}}$ and with  
$B_i= ||J^*_\mu(P^*_{v_iv_{i+1}})||+\calE_{P^*_{v_iv_{i+1}},\mu}$ we have $||P^*_{v_iv_{i+1}}||= B_i$.To simplify the notation we denote the $\mu$-excess of subpath $P^*_{v_iv_{i+1}}$ by $\calE^*_{i,\mu}$. We also use $\calE'_{i,\mu}$ to denote the $\mu$-excess of path $P^*_{v_iv'_i}$. From the definition of $\langle v_{0},v_{1},\cdots,v_{m}\rangle$ it follows that $\calE^*_{i,\mu}>\alpha^i$ and $\calE'_{i,\mu}\leq \alpha^i-1$.
Let $v$ be an arbitrary vertex in $P^*_{v_iv_{i+1}}$ that falls in between 
$u^j_i$ and $u^{j+1}_i$. We use $||J_i(v_i,u^{j}_i)||$ to denote the length of $J_i$ from $v_i$ to $u^j_i$ (i.e. following along $J_i$ from the start node $v_i$ to $u^j_i$). Define $L_{i,j}=\sum_{j=0}^{i-1} ||P^*_{v_j v_{j+1}}||+||J_i(v_i,u^{j}_i)||$. 
Note that the visiting time of $v$ in $P^*$ (and hence the deadline of $v$) is lower bounded by $L_{i,j}$. 

Let $N_{i,j}=\{ v: D(v)\geq L_{i,j} \}$. Observe that if we consider $P^*_{v_iv_{i+1}}$ broken up into several legs
$P^{*}_{u_{i}^{j}u_{i}^{j+1}}$, $1\leq j<\mu$, then it is a {P2P orienteering} instance with start node $v_i$, end node $v_{i+1}$ and given extra intermediate nodes $u^j_i$ and the path is supposed to go through these intermediate nodes in this order; so it consists of $\mu-1$ {\em legs} where leg $j$ is between ${u^j_i,u^{j+1}_i}$ and uses vertices in $N_{i,j}\subset V$  and total budget $B_i=||J^*_\mu(P^*_{v_iv_{i+1}})||+\calE^*_{i,\mu}$.

\begin{definition}\label{def:mglo} (multiple-groups-legs orienteering)
    Let $G=(V,E)$ be a graph, given $m$ groups each with $\mu-1$ legs, where
    leg $\ell$ of group $i$ ($0\leq i<m$, $1\leq \ell<\mu$) has 
    start and end node pair $(s_{i,\ell}, t_{i,\ell})$ and can use vertices from $N_{i,\ell}\subseteq V$ (we have
    the property that end-node of leg $\ell$ is the same as start node of leg $\ell+1$: $t_{i,\ell}=s_{i,\ell+1}$),
    total budget for all the legs of group $i$ is $B_{i}$. The goal is to find a collection of paths $Q_{i,\ell}$, for $0\leq i<m$, $1\leq \ell\leq \mu -1$, such that $Q_{i,\ell}$ is a $s_{i,\ell}$-$t_{i,\ell}$ path (and so concatenation of all legs of group $i$ gives a single path from start node of the first leg to the last node of leg $\mu-1$), such that $\sum_{\ell=1}^{\mu-1}||Q_{i,\ell}||\leq B_{i} $ and $| \cup_{i=0}^{m-1} \cup_{\ell=1}^{\mu-1} (Q_{i,\ell}\cap N_{i,\ell}) |$ is maximized.
\end{definition}

Note $P^{*}_{u_i^{j}, u_i^{j+1}}$ ($0\leq i<m, 1\leq j<\mu$) is a feasible solution of the multiple groups-legs orienteering instance with groups $ 0\leq i<m$ and legs $1\leq j<\mu$, start and end node pairs $(u_i^{j}, u_i^{j+1})$, budgets $B_{i}$ and subset $N_{i,j}$.
Consider the instance restricted to group $i$ alone an instance of multiple-leg {P2P orienteering} where we have to find a $v_iv_{i+1}$-path that goes through all $u^j_i$'s in order (so has $\mu-1$ legs) and has budget $B_i$; call this instance $\calI_i$. Note $P^{*}_{v_{i}v_{i+1}}$ is a feasible solution to $\calI_i$, thus the optimal of $\calI_i$ visits at least $|P^{*}_{v_{i}v_{i+1}}|$ many vertices in $\cup_{j=1}^{\mu-1} N_{i,j}$. We consider all $\calI_i$'s concurrently, i.e. an  instance with pairs of start-end nodes $v_iv_{i+1}$ for group $i$ where each group is also required to go through vertices of $J_i$ in that order and thus has $\mu-1$ legs where each leg has start-end pair $u^j_i,u^{j+1}_i$, total budget $B_{i}$ for all the legs of group $i$, and the subset $N_{i,j}\subset V$, $0\leq i< m$, $1\leq j<\mu$ for leg $j$ of group $i$. For a path $Q$, let $Q\cap N_{i,j}$ be the set of vertices $Q$ visited in $ N_{i,j}$. Using an argument similar to that of Theorem \ref{thm:structure}
we can show there is a $(1+\epsilon)$-approximation i.e. a set of paths $Q'_{i,j}$ such that $Q'_{i,j}$ is a  $u^j_{i},u^{j+1}_i$-path and if we define concatenation of different legs of group $i$ by $Q'_i=Q'_{i,1}+\ldots+Q'_{i,\mu-1}$ then:
\begin{equation}\label{eqn12}
    ||Q'_i||=\sum_{j=1}^{\mu-1}||Q'_{i,j}||\leq ||P^*_{v_iv_{i+1}}||-\eps\calE^*_{i,\mu}\quad\quad\text{and}\quad\quad
|\bigcup_{i=0}^{m-1} Q'_i|=|\bigcup_{i=0}^{m-1}\cup_{j=1}^{\mu-1}(Q'_{i,j}\cap N_{i,j})|\geq (1-4\eps)|P^*|.
\end{equation} 

We also show that if $v$ is visited by $Q'_{i,j}$ then if $Q'_{i,j}(u^j_i,v)$ denotes the segment of path $Q'_{i,j}$ from $u^j_i$ to $v$, then the length of the segment from $v_i$ to $v$ in $Q'_i$ can be upper bounded:

\begin{equation}\label{eqn13}
||Q'_i(v_i,v)||=\sum_{\ell=1}^{j-1}||Q'_{i,\ell}|| + ||Q'_{i,j}(u^j_i,v)|| \leq ||J_i(v_i,u^j_i)||+(1-\eps)\calE'_{i,\mu}.
\end{equation}

We will prove the existence of such paths $Q'_{i,j}$ with a structure in the next theorem and also how to find these using a DP.
For now suppose we have found such paths $Q'_i$ as described above. We concatenate all these paths to obtain the final answer $\calQ=Q'_0+Q'_1+\ldots+Q'_{m-1}$. We show the vertices of $\calQ$ 
are visited before their deadlines and hence we have an approximation  for deadline TSP. Given the bounds given for the sizes of $Q'_i$'s in (\ref{eqn12}),  the number of vertices visited overall (respecting their deadlines) is at least $(1-4\eps)|P^*|$.

To see why the vertices in $\calQ$ respect their deadlines consider an arbitrary node $v\in Q'_i$.
Note that each $Q'_i$ contains the vertices in $J_i$ (as those are the vertices that define $\mu-1$ legs of the $i$'th group). Suppose $v$ is visited in $Q'_{i,j}$, i.e. between $u^j_i$ and $u^{j+1}_i$. Therefore, the visit time of $v$ in $\calQ$, i.e. $||\calQ_{sv}||$ is bounded by:

\begin{eqnarray*}
  ||\calQ_{sv}|| &=& \sum_{\ell=0}^{i-1}||Q'_\ell|| + ||Q'_i(v_i,v)|| \\
    &\leq& \sum_{\ell=0}^{i-1} (||P^*_{v_\ell v_{\ell+1}}|| - \eps\calE^*_{\ell,\mu}) + ||J_i(v_i,u^j_i)||+(1-\eps)\calE'_{i,\mu} \quad\quad\quad\quad\quad\mbox{using (\ref{eqn12}) and (\ref{eqn13})}\\ [-10pt]
    &=& L_{i,j} + (1-\eps)\calE'_{i,\mu}-\eps\sum_{\ell=0}^{i-1}\calE^*_{\ell,\mu}\leq D(v),
\end{eqnarray*}

\noindent where the last inequality follows from the fact that $\calE'_{i,\mu}\leq\alpha^i-1$ and
$\calE^*_{\ell,\mu}> \alpha^\ell$ so $\eps\sum_{\ell=0}^{i-1}\calE^*_{\ell,\mu}\geq  \eps\sum_{\ell=0}^{i-1}\alpha^\ell = (\alpha^i-1)\geq \calE'_{i,\mu}$.

So we need to show the existence of $Q'_i$ as described and how to find them.
The road to get these paths $Q'_i$ is an extension of what we had for {multi-path orienteering} in the previous section and is obtained by a DP that computes multi-group-legs with multiple paths (i.e. instances $\calI_i$'s) concurrently. The proof of following theorem is an extension of that of Theorems \ref{thm:structure} and \ref{thm:muexcess}.

\begin{theorem}(multi-groups-legs multi-paths orienteering structure theorem )\label{the:mppstructure}
Let $G=(V,E)$ be a graph with constant doubling dimension $\kappa$, given $s\in V$ and $D(v)$ for all $v$ as an instance of deadline TSP and $P^{*}$ be an optimal solution. Let $v_{i}$ ($0\leq i<m$), $u_{i}^{j}$ ($0\leq i<m$, $1\leq j <\mu$), $B_{i}$, and $N_{i,j}$ as described above for $\mu=\lfloor\frac{1}{\eps}\rfloor+1$. Consider a multi-groups-legs orienteering instance with groups $ 0\leq i<m$ and legs $1\leq j<\mu$, start and end node pairs $(u_i^{j}, u_i^{j+1})$, budgets $B_{i}$, and subset $N_{i,j}$. then we can construct a $\gamma$-split-tree (with $\gamma=n^{3h\delta}$) such that with probability at least $1-\frac{1}{n}$ there exists a determining function $\Phi$ and a corresponding split-tree hierarchical decomposition $T=\Gamma|_{\Phi}$ of $G$ and a structured solution of the multi-groups-legs orienteering instance $Q'_{i,j }$, ($0\leq i<m$, $1\leq j<\mu$) such that if we define $Q'_i=Q'_{i,1}+\ldots+Q'_{i,\mu-1}$ for each $0\leq i<m$, then:
\begin{enumerate}
   \item $|\cup_{i=0}^{m-1}\cup_{j=1}^{\mu-1} (Q'_{i,j}\cap N_{i,j})|\geq  (1-4\eps)|\cup_{i=0}^{m-1}\cup_{j=1}^{\mu-1} P^{*}_{ u_i^{j}, u_i^{j+1}}|=(1-4\eps)|P^{*}|$.
   \item $||Q'_i||=\sum_{j=1}^{\mu-1}||Q'_{i,j }||\leq B_i-\eps\calE^*_{i,\mu} = ||J^*_{\mu}(P^*_{v_i,v_{i+1}})|| + (1-\eps)\calE^*_{_i,\mu}$
   and for any vertex visited by $Q'_i$, say $v$ visited in $Q'_{i,j}$ we have the length of the path from $v_i$ to $v$ in $Q'_i$, denoted by $||Q'_i(v_i,v)||$, satisfies: 
   $||Q'_i(v_i,v)||\leq ||J^*_\mu(P^*_{v_i,v'_i})||+(1-\eps)\calE'_{i,\mu}$.
   
   \item For any cluster $C\in T$ and a partition $\pi_{\Phi(C)}$ of $C$ defining its children in $T$, let $R_{\pi_{\Phi(C)}}, R_{\pi_{\Phi(C)}}$ are the set of bridge-edges and portal-edges of $C$ cut by the partition, we have either:
   \begin{itemize}
      \item  $ |Q^{'}_{i} \cap R_{ \pi_{\Phi(C)} } |\leq 2\kappa'\frac{\log n}{\epsilon}  $ or
      \item $|Q^{'}_{i} \cap R'_{ \pi_{\Phi(C) }}| \leq (\frac{16\kappa' \delta}{\epsilon})^{2\kappa}$.
   \end{itemize}
\end{enumerate}
\end{theorem}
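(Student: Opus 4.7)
The plan is to extend the argument of Theorem~\ref{thm:structure} and Theorem~\ref{thm:muexcess} from a single path to $m \leq h\delta$ paths processed simultaneously. Initialize $Q'_i := P^*_{v_iv_{i+1}}$ for each group $i$ and traverse the $\gamma$-split-tree top-down, producing the function $\Phi$ as we go. At each cluster node $C$ in the tree, classify the current restriction $Q'_i \cap C$ of \emph{each} group $i$ as either $\eta$-sparse or $\eta$-dense with respect to $C$ (with $\eta = \log n / \eps$). For a given split-node child $s$ of $C$ with random partition $\pi_s$, say that $s$ is \emph{good for group $i$} if either the sparse case yields $|Q'_i \cap R_{\pi_s}| \leq 2\kappa' \log n/\eps$, or the dense case yields that making $Q'_i \cap C$ portal-respecting w.r.t.\ $\pi_s$ increases its length by at most $\frac{\eps}{\delta}\,\|Q'_i \cap C\|$. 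Lemmas~\ref{lem:sparse} and~\ref{lem:dense} combined with Markov give $\Pr[s \text{ good for group } i] \geq 1/2$. Define $\Phi(C)$ to be any child split node that is simultaneously good for all $m$ groups, then modify the $Q'_i$'s accordingly (portal-respecting replacement in the dense case, no change in the sparse case) and recurse.

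The probability a single split node is simultaneously good for all groups could be as low as $2^{-m} \geq 2^{-h\delta}$, which is why we blow up the branching to $\gamma = n^{3h\delta}$: among $\gamma$ independent partitions, the probability none is good for all $m$ groups is at most $(1-2^{-h\delta})^{n^{3h\delta}}$, which is $\leq 1/n^{\Omega(1)}$ by a standard estimate. Union-bounding over the at most $n$ clusters per level and the $\delta$ levels of the induced tree $T = \Gamma|_\Phi$ shows such $\Phi$ exists with probability $\geq 1 - 1/n$. This handles item~(3) of the theorem and gives the determining function $\Phi$.

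For items~(1) and~(2), the length bounds are obtained by propagating the $(1 + \eps/\delta)$-blowup induced by the dense-case modifications through the $\delta$ levels of $T$, yielding $\|Q'_i \cap C\| \leq (1+\eps)\|P^*_{v_iv_{i+1}} \cap C\|$ for every cluster $C$, group-by-group exactly as in Theorem~\ref{thm:structure}. The $\mu$-excess argument of Theorem~\ref{thm:muexcess} then applies to each group independently: for each group $i$, collecting the maximal dense clusters into a disjoint family $D_i$ and using Lemma~\ref{lem:muexcess} (together with the strengthening noted after Theorem~\ref{thm:muexcess} for the fixed $\mu$-jump $J_i$ rather than the optimum one) gives $\|Q'_i\| - \|P^*_{v_iv_{i+1}}\| \leq \eps\, \calE^*_{i,\mu}$. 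To turn this into the stated bound $\|Q'_i\| \leq B_i - \eps \calE^*_{i,\mu}$, we additionally short-cut the longest leg of $Q'_i$ among legs $1,\ldots,\mu-2$ (and, as in the proof following~(\ref{eqn:bteq1}), also the last leg) exactly as in the bounded treewidth case; this drops at most $(4\eps)|P^*_{v_iv_{i+1}}|$ vertices, which accumulated over all groups gives item~(1). The same short-cutting argument applied to the prefix ending at any visited vertex $v \in Q'_{i,j}$ delivers the sub-path bound $\|Q'_i(v_i,v)\| \leq \|J_i(v_i,u_i^j)\| + (1-\eps)\calE'_{i,\mu}$.

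The main obstacle is the probabilistic step: showing that a single random partition is good for all $m$ groups at once with probability high enough to be amplified by $\gamma$. Markov applied per-group only gives probability $\geq 1/2$ per group, and the $m$ good events are not independent, so naive union bounds or multiplicativity both fail. The resolution is the brute-force amplification to $\gamma = n^{3h\delta}$, which is why this quantity appears in the statement and ultimately inflates the DP table size. A secondary technical issue is that, unlike the single-path case, each group can independently be in either the sparse or the dense regime at $C$, so both the recursion and the DP over $T$ must enumerate this sparse/dense label per group per cluster; however this only multiplies the state space by $2^m \leq 2^{h\delta}$, which is absorbed into the final running time.
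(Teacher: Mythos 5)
Your high-level plan — classify each group $i$ at each cluster $C$ as $\eta$-sparse or $\eta$-dense, pick a split node that is simultaneously good for all $m$ groups, and use a branching factor of $\gamma = n^{3h\delta}$ to absorb the multi-group simultaneity — matches the construction of $\Phi$ and item~(3) in the paper closely. But item~(2) has a genuine gap: you short-cut only two legs per group (the largest-2-excess leg among $1,\ldots,\mu-2$ plus leg $\mu-1$), ``exactly as in the bounded treewidth case.'' In the treewidth setting the DP is exact, so there is no length blowup to pay for and two legs carve out the full $\eps\calE^*_{i,\mu}$ slack. Here you have also paid $\eps\sum_j\calE_{P^*_{i,j},2}$ (which can be as large as $\eps\calE^*_{i,\mu}$) in expectation for the portal-respecting modifications. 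Two-leg short-cutting saves only $D^2_i \geq \tfrac{1}{\mu-1}\sum_j\calE_{P^*_{i,j},2} \approx \eps\sum_j\calE_{P^*_{i,j},2}$, which merely cancels the portal blowup and yields $\|Q'_i\| \leq B_i$ but not the required $\|Q'_i\| \leq B_i - \eps\calE^*_{i,\mu}$. The paper short-cuts \emph{three} legs — the two largest-2-excess legs among $1,\ldots,\mu-2$ and leg $\mu-1$ — saving $D^3_i \geq \tfrac{2}{\mu-1}\sum_j\calE_{P^*_{i,j},2}$, i.e.\ double the portal cost, so there is an $\eps\calE^*_{i,\mu}$ safety margin left over. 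That margin is exactly what drives the deadline-feasibility argument: $\eps\sum_{\ell<i}\calE^*_{\ell,\mu} \geq \alpha^i - 1 \geq \calE'_{i,\mu}$ requires actual negative slack on every earlier group, not just $\|Q'_\ell\| \leq B_\ell$. The stated $(1-4\eps)$ rather than $(1-3\eps)$ in item~(1) is the signature of the third shortcut you omitted.

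There is also a secondary difference that would complicate carrying out your plan. You apply the portal-respecting pass to $P^*_{v_iv_{i+1}}$ first and short-cut legs afterward, whereas the paper first short-cuts to obtain $Q_i$ and then makes $Q_i$ portal-respecting. In your order, the direct shortcut edges $(u^{j'}_i, u^{j'+1}_i)$ and $(u^{\mu-1}_i, v'_i)$ are inserted after the portal-respecting pass, so at clusters where group $i$ is in the dense regime they need not cross via portals; this violates the bound $|Q'_i \cap R'_{\pi_{\Phi(C)}}| \leq (\tfrac{16\kappa'\delta}{\eps})^{2\kappa}$ in item~(3) that the DP depends on. The paper's order avoids this: the shortcut edges become part of the path before any level of portal-respecting is applied, so the final $Q'_i$ uniformly obeys the structure the DP enumerates.
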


\begin{proof}

To simplify  notation, for each $i$, let $P^{*}_{i}$ be the subpath $P^*_{v_iv_{i+1}}$ and $P^{*}_{i,j} $ be the subpath $P^*_{u_i^j u_{i}^{j+1}}$. For each $P^{*}_{i,j}$ we consider the $2$-excess of it and let $j',j''$
be the two indices that $P^*_{i,j'},P^*_{i,j''}$ have the largest two 2-excess among indices $1,\ldots,\mu-2$.
We consider short-cutting the three subpaths $P^*_{i,j'},P^*_{i,j''},$ and $P^*_{i,\mu-1}$ and let the resulting
path obtained from $P^*_i$ by these short-cutting be $Q_i$. In other words, $Q_i$ is the same as $P^*_i$ except that each of $P^*_{i,j'}$, $P^*_{i,j''}$, and $P^*_{i,\mu-1}$ are replaced with the direct edges $(u^{j'}_i,u^{j'+1}_i)$, $(u^{j''}_i,u^{j''+1}_i)$, and $(u^{\mu-1}_i,t_i)$, respectively.
Let $D^2_i=\calE_{P^*_{i,j'},2}+\calE_{P^*_{i,j''},2}$ and $D^3_i=D^2_i+\calE_{P^*_{i,\mu-1},2}$. We have $D^2_i\geq\frac{2}{\mu-1}\sum_{j=1}^{\mu-2}\calE_{P^*_{i,j},2}$. Recall that $v'_i$ is the vertex on $P^*$ just before $v_{i+1}$, so $||P^*_i(v_i,v'_i)||=||J^*_\mu(P^*_i(v_i,v'_i))||+\calE'_{i,\mu}$ (this is from the definition of optimum $\mu$-jump of the path $P^*_i(v_i,v'_i)$). So for vertices in $Q_{i}$ we can bound the length of the subpath from $v_i$ to $v$ by: 

\begin{equation}\label{eqn10}
||Q_i(v_i,v)||\leq ||P^*_i(v_i,v)||-D^2_i\leq ||J^*_\mu(P^*_i(v_i,v'_i))||+\calE'_{i,\mu}-\frac{2}{\mu-1}\sum_{j=1}^{\mu-2}\calE_{P^*_{i,j},2}.
\end{equation}

Also for the total length of $Q_i$ we have:
\begin{equation}\label{eqn11}
    ||Q_i||\leq ||P^*_i|| - D^3 \leq ||P^*_i||-\frac{2}{\mu-1}\sum_{j=1}^{\mu-1}\calE_{P^*_{i,j},2}.
\end{equation}

As for the size note that $|Q^{*}_{i,j'}|=a_{j'+1}-a_{j'}+1 =(\lceil \frac {(j'+1)(k_{i}-1)  }{\mu -1} \rceil +1) -  (\lceil \frac{j'(k_{i}-1)  }{\mu-1}  \rceil+1) +  1 \leq \lceil\frac{k_{i}-1}{\mu-1}  \rceil +1 $. Same bound holds for $|P^*_{i,j''}|$ and $|P^*_{i,\mu-1}|$.
From the construction of $Q_{i}$: $|Q'_{i}|=k_{i}-|Q^{*}_{i,j'}|+2 -|P^*_{i,j''}|+2 -|P^*_{i,\mu-1}| +2 \geq k_{i} - 3\lceil\frac{k_{i}-1}{\mu-1}  \rceil +6 \geq  k_{i} - 3\lfloor\frac{k_{i}-1}{\mu-1} \rfloor  \geq (1-4\epsilon) k_{i} $ since we assumed $k_i\geq\mu^2$.

Suppose $\Gamma$ is a $\gamma$-split-tree for $G$. We build  $Q^{'}_{i,j}$, $0\leq i<m$, $1\leq j<\mu$ iteratively based on $Q_{i,j}$ and at the same time build $\Phi(\cdot)$ and hence $T$ from the top to bottom.
Initially we set $Q^{'}_{i,j}$ to be $Q_{i,j}$ for $0\leq i<m$,$1\leq j<\mu$, and start from the root cluster node $C_{0}$ of $T$. At any point when we are at a cluster node $C$ we consider $\cup_{j=1}^{\mu-1} Q'_{i,j}$ with respect to $C$ and whether it is dense or sparse.

For any group $i$ that $\cup_{j=1}^{\mu-1} Q^{'}_{i,j}$ is sparse with respect to $C$, we don't modify $(\cup_{j=1}^{\mu-1} Q^{'}_{i,j})\cap C$ when going down from $C$ to any split node $s$ of $C$. Consider any child split nodes $s$ of $C$ and let $\pi_{s}$ be the partition of $C$ according to the split node $s$. For each $i,j$ let $E^{i,j}_{\pi_s}$ be the bridge-edges of $Q'_{i,j}$ with respect to $\pi_s$. Consider the number of edges of $(\cup_{j=1}^{\mu-1}Q^{'}_{i,j})\cap C$ that are bridge-edges based on this partition, i.e. $| (\cup_{j=1}^{\mu-1}Q^{'}_{i,j}) \cap R_{\pi_{s}}| =|\cup_{j=1}^{\mu-1}E^{i,j}_{\pi_{s}}| $. Using Lemma \ref{lem:sparse}: $\E( |\cup_{j=1}^{\mu-1} E^{i,j}_{\pi_{s}}|) \leq \kappa' \frac{\log n}{\epsilon}  $. Let the event $\Lambda_{i,s}$ be the event that $ |\cup_{j=1}^{\mu-1}E^{i,j}_{\pi_{s}}|\leq 2\kappa'\frac{\log n}{\epsilon} $. By Markov inequality $\Pr[ \Lambda_{i,s}   ]\geq \frac{1}{2}$. Recall there are $\gamma=n^{3h\delta}=2^{3h\delta\log n}$ many children split nodes of $C$, i.e. $\gamma$ random partition of $C$ and (since $P^*$ was broken into $m$ subpaths $P^*_i$) there are at most $m=h\delta$  many paths $Q^{'}_{i}$ that are sparse with respect to $C$. Thus the probability that for at least one child split node  $s$ of $C$, events $\Lambda_{i,s}  $ holds for all $Q^{'}_{i}$'s that are sparse with respect to $C$ is at least 
$ 1- [1- (\frac{1}{2})^{h\delta}]^{\gamma} \geq 1- [1-\frac{1}{2^{h\delta}}]^{2^{3h\delta\log n} } \geq  1-\frac{1}{n^{3}}$.

Now consider any group $i$ such that $\cup_{j=1}^{\mu-1} Q^{'}_{i,j} $ is $\eta$-dense with respect to $C$ and let $s$ be an arbitrary child split nodes of $C$. We will modify $\cup_{j=1}^{\mu-1}Q^{'}_{i,j} $ going down the split node $s$ to be portal respecting with respect to $\pi_{s}$ as described in Lemma \ref{lem:dense}. Note the set of edges of $\cup_{j=1}^{\mu-1}Q^{'}_{i,j} $ crossing $\pi_{s}$ after making it portal respecting with respect to $\pi_{s}$ is a subset of $R'_{\pi_{s}}$. Thus $|(\cup_{j=1}^{\mu-1} Q^{'}_{i,j})\cap R'_{\pi_{s}}|\leq R'_{\pi_{s}} \leq ( \frac{16\kappa' \delta}{\epsilon} )^{2\kappa}$ in this case. According to Lemma \ref{lem:dense}, the expected increase of length of  $\cup_{j=1}^{\mu-1} Q^{'}_{i,j} $ after making it portal respecting with respect to $\pi_{s}$ is at most $ \frac{\epsilon}{2\delta} ||(\cup_{j=1}^{\mu-1} Q^{'}_{i,j}) \cap C||$. Let $\Lambda^{'}_{i,s}$ be the event that the increase of length of $\cup_{j=1}^{\mu-1} Q^{'}_{i,j} $ after making it portal respecting with respect to $\pi_{s}$ is at most $\frac{\epsilon}{\delta}||(\cup_{j=1}^{\mu-1} Q^{'}_{i,j} )\cap C|| $. By Markov inequality,  $\Pr[ \Lambda'_{i,s} ] \geq \frac{1}{2} $.  Recall there are $\gamma$ many children split nodes of $C$ and there are at most $m=h\delta$ many paths $Q^{'}_{i}$ that are $\eta$-dense with respect to $C$. Thus the probability that for at least one child split node  $s$ of $C$, events $\Lambda^{i}_{s}$ happens for all $Q^{'}_{i}$'s that are $\eta$-dense with respect to $C$ is at least 
$ 1- [1- (\frac{1}{2})^{h\delta}]^{\gamma} \geq 1- [1-\frac{1}{2^{h\delta}}]^{2^{3h\delta\log n} } \geq  1-\frac{1}{n^{3}}$.

Therefore, such $s$ exists for cluster $C$ with the probability at least $1-\frac{2}{n^{3}}$ for all $i$ and we can define $\Phi(C)$. Once we have $\Phi(\cdot)$ defined for clusters at a level of $\Gamma$, we have determined the clusters at the same level of $T=\Gamma|_{\Phi}$. Note there are at most $n$  cluster nodes in one level of $T$. Thus with probability at least $1-\frac{2}{n^{2}}$ such split nodes exist for all cluster nodes in one level. Since the height of $\Gamma$ is $\delta$, thus with probability at least   $ (1-\frac{2}{n^{2}})^{\delta}\geq 1-\frac{1}{n}$ (assuming that $\delta$ is polylogarithmic in $n$ )  such $\Phi(\cdot)$ is well defined over all levels.

Note $ Q^{'}_{i,j} $ visits all the of vertices from $ N_{i,j}$ that $Q_i$ was visiting, so  
$$|\cup_{i=0}^{m-1}\cup_{j=1}^{\mu-1}(Q'_{i,j}\cap N_{i,j})|\geq (1-4\eps)|\cup_{i=0}^{m-1}\cup_{j=1}^{\mu-1} P^{*}_{u_i^{j}, u_i^{j+1}}|=(1-4\eps)|P^{*}|.$$ For any cluster $C\in T$, the increase in length of $Q'_i=\cup_{j=1}^{\mu-1} Q^{'}_{i,j}$ only stems from when it is $\eta$-dense with respect to $C$. 
Using Theorem \ref{thm:muexcess}, $Q'_i=\cup_{j=1}^{\mu-1} Q^{'}_{i,j}$ is a $(\epsilon,\mu)$-approximation of $Q_{i}$. Thus, combined with (\ref{eqn11}):

\begin{equation}\label{eqn5}
||Q^{'}_{i} || \leq ||Q_{i}||+\epsilon \calE_{Q_{i} ,\mu } =||P^{*}_{i}||- \frac{2}{\mu-1}\sum_{j=1}^{\mu-1}\calE_{P^*_{i,j},2} + \epsilon\calE_{Q_{i} ,\mu} 
\end{equation}

Now we do similar to what we did in Lemma \ref{lem:p2porienteering} to bound $\eps\calE_{Q_i,\mu}$.
We consider the $\mu$-jump of $Q_{i}$ based on $\langle v_{i,a_{1}},\cdots,v_{i,a_{\mu}} \rangle=\langle u^1_i,\ldots,u^\mu_i\rangle$, which are the same nodes in the $\mu$-jump of $P^*_i$ from which we obtained $Q_i$. By the definition of excess and how we obtained $Q_i$ from $P^*_i$:

\begin{equation}\label{eqn6}
\calE_{Q_{i} ,\mu}\leq || Q_{i} ||- || \langle v_{i,a_{1}},\cdots,v_{i,a_{\mu}} \rangle ||= \sum_{j=1}^{\mu-1}\calE_{ P^{*}_{i,j} ,2}  - \calE_{P^*_{i,j'},2} -\calE_{P^*_{i,j''},2} -\calE_{P^*_{i,\mu-1},2} \leq \sum_{j=1}^{\mu-1}\calE_{ P^{*}_{i,j} ,2}
\end{equation}

Thus, using (\ref{eqn5}) and (\ref{eqn6}) and noting that $\mu=\lfloor\frac{1}{\eps}\rfloor +1 $:

\[||Q^{'}_{i} || \leq ||P^*_i||-2\eps\sum_{j=1}^{\mu-1}\calE_{P^*_{i,j},2} + \eps\sum_{j=1}^{\mu-1}\calE_{P^{*}_{i,j} ,2} \leq ||P^*_i||- \eps\calE^*_{i,\mu}\leq
 ||J^*_{\mu}(P^*_i)||+(1-\eps)\calE^*_{i,\mu}.\]

To prove the upper bound required on $||Q'_i(v_i,v)||$ when considering any vertex $v$ left in $Q'_i$ (that was visited in $Q_i$ between $u^j_i,u^{j+1}_i$) we use the slightly more general version mentiond right after Theorem\ref{thm:muexcess}. Using that arguement and assuming that we take the $\mu$-jump of $P^*_i$ from which
we derived $Q_i$ instead, the extra length of $Q'_i$ compared to $Q_i$ at any dense cluster $C\in D$ is bounded by 
$\eps\sum_{j=1}^{\mu-1}\calE_{Q_{i,j},2}$. Noting that $\calE_{Q_{i,\mu-1},2}=0$
and definition of $D^2_i$ we can say: 

\begin{eqnarray*}
    ||Q'_i(v_i,v)|| &\leq& ||Q_i(v_i,v)||+\eps\sum_{j=1}^{\mu-2}\calE_{Q_{i,j},2}\\
    &=&  ||Q_i(v_i,v)||+\eps\sum_{j=1}^{\mu-2}\calE_{P^*_{i,j},2}\\
    &\leq& ||J^*_\mu(P^*_i(v_i,v'_i))||+\calE'_{i,\mu}-2\eps\sum_{j=1}^{\mu-2}\calE_{P^*_{i,j},2} +\eps\sum_{j=1}^{\mu-2}\calE_{P^*_{i,j},2}\quad\quad\quad\quad\mbox{using (\ref{eqn10})}\\
    &\leq&  ||J^*_\mu(P^*_i(v_i,v'_i))||+(1-\eps)\calE'_{i,\mu}.
\end{eqnarray*}
\end{proof}

{\bf Dynamic Programming:}
Now we describe how we can find a near optimum solution as guaranteed by Theorem \ref{the:mppstructure} using DP.
Recall the instance $\calI$ for deadline TSP as described before Theorem \ref{the:mppstructure} and let $P^*$ be an optimum solution for it.
Suppose we have guessed the vertices $\langle v_0,\ldots,v_m\rangle$ for $P^*$ where the $\mu$-excess
increases by at least $\alpha^i$ (recall $\alpha=1+\eps$). Also suppose for each $0\leq i<m$ we have guessed the vertices of the $\mu$-jump $J_i$ for $P^*_{v_i v_{i+1}}$ where 
$J_i:  v_i=u^1_i,u^2_i,u^3_i,\ldots,u^{\mu-1}_i,u^{\mu}_i=v_{i+1}$. We also we have guessed the visiting time of $v_i$'s ($0\leq i<m$); this means we have guessed $||J^*_\mu(P^*_{v_i v_{i+1}})||$
and also $\calE^*_{i,\mu}$. Observe that having all these values, we can find sets $N_{i,j}$ .
For each $i$, we can guess $k_i=|P^*_{v_iv_{i+1}}|$ and for those that $k_i<\mu^2$ we can guess $P^*_i$ exactly. 
Note that all these guesses can be done in time $(n\Delta)^{O(\mu^2h\delta)}$. We call this guessing {\em "Phase 1"}.
In the remaining (which is "Phase 2") we discuss how to find $Q'_i$'s for those $i$ for which $k_i\geq\mu^2$ that remain using DP. To simplify the notation we assume all $i$'s satisfy $k_i\geq\mu^2$ in what follows. Let $\calI_i$ be the instance defined before Theorem \ref{the:mppstructure} where the goal is to find a $v_i$-$v_{i+1}$-path that goes through $u^1_i,u^2_i,\ldots,u^\mu_i$ in this order; so it has $j$ leg where
the $j$'th leg is between $u^j_i,u^{j+1}_i$ and should use vertices in $N_{i,j}$. Our goal is to solve all $\calI_i$'s (where $k_i\geq\mu^2$) concurrently using a DP and hence find near optimum solutions $Q'_i$ as guaranteed by Theorem \ref{the:mppstructure}.

The DP is built on the $\gamma$-split tree $\Gamma$ we compute. Consider for any cluster node $C \in \Gamma$, any group $0\leq i<m$ and any leg $1\leq j<\mu$, the restriction of $Q'_{i,j}$ in the subgraph $G(C)$, which might be a collection of paths because $Q'_{i,j}$ can enter and exit $C$ multiple times. This causes $Q'_{i,j}$ be chopped up into multiple segments. For $C$, group $i$ and leg $j$ suppose that it is broken into $\sigma_{C,i,j}$ many pieces. 

\begin{definition}(multi-groups-legs multi-paths orienteering)
    Let $G=(V,E)$ be a graph, given groups $ 0\leq i<m$ and legs $1\leq j<\mu$, start and end node pairs $(s_{i,j,l},t_{i,j,l}), 1\leq l \leq \sigma_{i,j}$, budgets $B_{i}$ and subset $N_{i,j}$ as an instance of multi-group-legs multi-paths orienteering. The goal is to find a collection of paths $ Q_{i,j,l} $ such that $Q_{i,j,l}$ is a $s_{i,j,l}$-$t_{i,j,l} $ path, $ \sum_{j=1}^{\mu-1} \sum_{l=1}^{\sigma_{i,j}}||Q_{i,j,l}||\leq B_{i} $ and $ | \cup_{i=0}^{m-1} \cup_{j=1}^{\mu-1}( (\cup_{l=1}^{\sigma_{i,j}} Q_{i,j,l})\cap N_{i,j} )|$ is maximized.
\end{definition}

We define a subproblem in the DP as an instance of multi-groups-legs multi-paths orienteering  on $C$ with groups $ 0\leq i<m$ and legs $1\leq j<\mu$, start and end node pairs $(s_{i,j,l},t_{i,j,l}), 1\leq l \leq \sigma_{C,i,j}$, budgets $B_{C,i}$ and subset $N_{i,j}$. The goal is to find a collection of paths $ Q_{i,j,l} $ such that $Q_{i,j,l}$ is a $s_{i,j,l}$-$t_{i,j,l} $ path, $ \sum_{j=1}^{\mu-1} \sum_{l=1}^{\sigma_{C,i,j}}||Q_{i,j,l}||\leq B_{C,i} $ and $ | \cup_{i=0}^{m-1} \cup_{j=1}^{\mu-1}( (\cup_{l=1}^{\sigma_{i,j}} Q_{i,j,l})\cap N_{i,j} )|$ is maximized.\\We use $A[C,\{B_{C,i}\}_{0\leq i<m}, \{(s_{i,j,l},t_{i,j,l})_{l=1}^{\sigma_{C,i,j}} \}_{ 0\leq i<m; 1\leq j<\mu }]$ to denote the subproblem described above and the entry of the table to store the optimal value of the subproblem. We will show $\sigma_{C,i,j}$ is poly-logarithmic.

We compute the entries of this DP from bottom to up of $\Gamma$. The base cases are when $C$ has constant size $|C|=a$ thus such subproblem can be solved by exhaustive search.

In the recursion, consider any entry $A[C,\{B_{C,i}\}_{0\leq i<m}, \{(s_{i,j,l},t_{i,j,l})_{l=1}^{\sigma_{C,i,j}} \}_{ 0\leq i<m; 1\leq j<\mu}] $ and  any child split node $s$ of $C$ in $\Gamma$ and let $C_{1}, C_{2},\cdots, C_{g}$ be the children cluster nodes of $s$ in $\Gamma$. Recall $\pi_{s}$ is the corresponding partition of $C$ and $R_{\pi_{s}}$ is the set of bridge-edges in $C$ (crossing $\pi_{s}$). For each $C_{b}$ let $S_{b}$ be its portal set and recall $R'_{\pi_{s}}$ is the set of edges crossing $\pi_{s}$ only through $\{S_{b} \}$.

Note for each $i$ we consider two cases (the sparse case or the dense case). 
We will first guess two sets $I^{C}_{1}$ and $I^{C}_{2}$ that are disjoint and the union of them is $\{0,\cdots,m-1\}$. The intention is that
$I^{C}_{1}$ is the set containing those $i$'s that are (guessed to be) sparse w.r.t. $C$ and $I^{C}_{2}$ is the set of those $i$'s that are (guessed to be) dense w.r.t. $C$. For each $i\in I^{C}_{1}$ and for each $j$, we guess a subset of $R_{\pi_{s}} $, denoted as $E^{i,j}_{\pi_s}$ such that they are disjoint (for different $i,j$) and the size of $|\cup_{j=1}^{\mu-1} E^{i,j}_{\pi_s}|$ is at most $2\kappa'\frac{\log n}{\epsilon}$.
For $i\in I^{C}_{2} $ and for each $j$, we guess a subset of $R'_{\pi_{s}} $, also denoted as $E^{i,j}_{\pi_s}$ that size of $|\cup_{j=1}^{\mu-1} E^{i,j}_{\pi_s}|$ is at most $(\frac{16\kappa' \delta}{\epsilon})^{2\kappa} $ (since these are portal edges they need not be from $N_{i,j}$). 
For each $i$ we guess $B_{C_{1},i},\cdots,B_{C_{g},i}, $ such that $\sum_{j=1}^{g} B_{C_{j},i} + \sum_{j=1}^{\mu-1}\sum_{(u,v)\in E^{i,j}_{\pi_{s}}} d(u,v) =B_{C,i}$. 

We show how to guess start-end pairs $\{(s_{i,j,l},t_{i,j,l})_{l=1}^{\sigma_{C,i,j}} \}_{ 0\leq i<m; 1\leq j<\mu}$  for each $C_{b}$ and  check the consistency of them: for each $E^{i,j}_{\pi_s} $ and for each edge in $E^{i,j}_{\pi_{s}} $, we guess it is in which one of the $\sigma_{C,i,j}$ paths with start-end pair $(s_{i,j,l},t_{i,j,l})_{l=1}^{\sigma_{C,i,j}} $ and for each path with start-end pair $(s_{i,j,l},t_{i,j,l})$ we guess the order of the guessed edges appearing on the path. Specifically speaking, let $e_{1},e_{2}\cdots,e_{w}$  be the edges guessed in the path with start-end pair $(s_{i,j,l},t_{i,j,l})$ appearing in this order. Let $ C_{a_{1}},C_{a_{2}},\cdots,C_{a_{w+1}} $ be the children cluster nodes of $s$ that the path encounters following $ e_{1},e_{2}\cdots,e_{w} $, i.e. $e_{1}$ crosses $C_{a_{1}}$ and $C_{a_{2}}$,  $e_{2}$ crosses $C_{a_{2}}$ and $C_{a_{3}}$, $\cdots$,  $e_{w}$ crosses $C_{a_{w}}$ and $C_{a_{w+1}}$. Then we set $s_{i,j,l}$ and the endpoint of $e_{1}$ in $C_{a_{1}}$ to be a start-end pair in group $i$ and leg $j$ in $C_{a_{1}}$, the endpoint of $e_{1}$ in $C_{a_{2}}$  and the endpoint of $e_{2}$ in $C_{a_{2}}$ to be a start-end pair in group $i$ and leg $j$ in $C_{a_{2}}$, $\cdots$, the endpoint of $e_{w}$ in $C_{a_{w+1}}$ and $t_{i,j,l}$ to be a start-end pair in group $i$ and leg $j$ in $C_{a_{w+1}}$. By doing so we generate start-end pairs in group $i$ and leg $j$ for each $C_{b} $ and we sort them based on their ordering in $s_{i,j,l}$-$t_{i,j,l}$ path. This defines $\sigma_{C_{b},i,j}$ start-end pairs for each $C_{b}$.
Formally, to compute $A[C,\{B_{C,i}\}_{0\leq i<m}, \{(s_{i,j,l},t_{i,j,l})_{l=1}^{\sigma_{C,i,j}} \}_{ 0\leq i<m; 1\leq j<\mu}] $:
\begin{itemize}
    \item Consider any child split node $s$ of $C$ and let $C_{1}, C_{2},\cdots, C_{g}$ be the children cluster nodes of $s$.
    \item  Let $\pi_{s}$ be the corresponding partition of $C$ and $R_{\pi_{s}}$ be the set of bridge-edges of $\pi_{s}$. For each $C_{b}$ let $S_{b}$ be its portal set and let $R'_{\pi_{s}}$ be the set of portal-edges of $\pi_{s}$.
    \item Guess sets $I^{C}_{1}$ and $I^{C}_{2}$ such that they are disjoint and the union of them is $\{0,\cdots,m-1\}$.

    \item For $i\in I^{C}_{1} $, we guess a subset of $R_{\pi_{s}} $, denoted as $E^{i,j}_{\pi_s}$ such that they are disjoint (for different $i,j$) and the size of $|\cup_{j=1}^{\mu-1} E^{i,j}_{\pi_s}|$ is at most $2\kappa'\frac{\log n}{\epsilon}$. For $i\in I^{C}_{2} $ and for each $j$, we guess a subset of $R'_{\pi_{s}} $, also denoted as $E^{i,j}_{\pi_s}$ that size of $|\cup_{j=1}^{\mu-1} E^{i,j}_{\pi_s}|$ is at most $(\frac{16\kappa' \delta}{\epsilon})^{2\kappa} $  

    \item For each $i$ we guess $B_{C_{1},i},\cdots,B_{C_{g},i}, $ such that $\sum_{j=1}^{g} B_{C_{j},i} + \sum_{j=1}^{\mu-1}\sum_{(u,v)\in E^{i,j}_{\pi_{s}}} d(u,v) =B_{C,i,\ell}$.
    \item For each $E^{i,j}_{\pi_s} $ and for each edge in $E^{i,j}_{\pi_{s}} $, we guess it is in which one of the $\sigma_{C,i,j}$ paths with start-end pair $(s_{i,j,l},t_{i,j,l})_{l=1}^{\sigma_{C,i,j}} $ and for each path with start-end pair $(s_{i,j,l},t_{i,j,l})$ we guess the order of the guessed edges appearing on the path. as described above. We generate start-end pairs in group $i$ and leg $j$ for each $C_{b} $ accordingly. Then:
    \item We set $A[C,\{B_{C,i}\}_{0\leq i<m}, \{(s_{i,j,l},t_{i,j,l})_{l=1}^{\sigma_{C,i,j}} \}_{ 0\leq i<m; 1\leq j<\mu}] =\\\max\sum_{b=1}^{g} A[C_{b}, \{B_{C_{b},i}\}_{0\leq i<m},  \{(s_{i,j,l},t_{i,j,l})_{l=1}^{ \sigma_{C_{b},i,j}} \}_{ 0\leq i<m; 1\leq j<\mu}  ] $,  where the maximum is taken over all tuples $(s,\{ B_{C_{b},i}\}_{1\leq b\leq g;0\leq i<m}, \{(s_{i,j,l},t_{i,j,l})_{l=1}^{\sigma_{C_{b},i,j}}\}_{1\leq b\leq g;0\leq i<m;1\leq j<\mu} )$ as described above.
\end{itemize}

The goal is to compute $A[c_{0},\{ B_{i}-\eps \calE^{*}_{i,\mu} \}_{0\leq i<m},\{(u_i^{j},u_i^{j+1})\}_{0\leq i<m;1\leq j<\mu} ]$.

Now we analyze the running time of DP. First we show the running time of computing one entry of the DP table is $n^{O((\frac{\delta}{\epsilon})^{2\kappa+2} )} $. In the recursion, for a cluster node $C$, there are  $\gamma=n^{3h\delta}$ children split nodes of $C$ to consider. For a certain split node $s$, let $C_{1},\cdots,C_{g}$ be the children cluster nodes of $s$.
There are $2^{m+1}$ guessing for $I^{C}_{1},I^{C}_{2}$ because they are disjoint and the union of them is $\{0,\cdots,m-1\}$. For $\{E^{i,j}_{\pi_{s}}\}_{0\leq i<m;1\leq j\leq \mu=1}$ : for each group $i$ and leg $j$, if $i\in I^{C}_{1}$, then because $|\cup_{j=1}^{\mu-1} E^{i,j}_{\pi_{s}}|\leq 2\kappa'\frac{\log n}{\epsilon}  $, there are at most $n^{4\kappa'\frac{\log n}{\epsilon} } $ many possible $E^{i,j}_{\pi_{s}}$ to consider; if $i \in I^{C}_{2}$ then because $E^{i,j}_{\pi_{s}}\subset R'_{\pi_{s}}$ and $ | R'_{\pi_{s}} |\leq (\frac{16\kappa' \delta}{\epsilon})^{2\kappa}$ in this case, there are at most $2^{ (\frac{16 \kappa' \delta}{\epsilon})^{2\kappa}  } \leq n^{ ( \frac{16\kappa' \delta}{\epsilon})^{2\kappa} } $
many possible $E^{i,j}_{\pi_{s}}$ to consider. Hence there are at most $[ n^{ ( \frac{16\kappa' \delta }{\epsilon})^{2\kappa}}]^{(m+1)\mu}$ many possible $E^{i,j}_{\pi_{s}}$ to consider for all $i$ and all $j$. There are at most $ [( n\Delta_{G} )^{g}]^{m+1} $ guessings for $\{B_{C_{1},i}, \cdots,B_{C_{g},i}\}$ such that  $ \sum_{b=1}^{g} B_{C_{b},i} + \sum_{j=1}^{\mu-1} \sum_{(u,v)\in E^{i,j}_{s}} d(u,v) =B_{C,i} $ for all $i$. 
To generate $\{(s_{i,j,l},t_{i,j,l})_{l=1}^{\sigma_{C,i,j}} \}_{ 0\leq i<m; 1\leq j<\mu}$  for each $C_{b}$: for each group $i$ and leg $j$ and for each edge in $E^{i,j}_{s}  $, we guess it is in which one of $\sigma_{C,i,j}$ path with start-end pair $(s_{i,j,l},t_{i,j,l})_{l=1}^{\sigma_{c,i,j}}$ and for each path with start-end pair $(s_{i,j,l},t_{i,j,l})$ we guess the order of the edges appearing, which is at most $ (|E^{i,j}_{\pi_{s}}|!|E^{i,j}_{\pi_{s}}|^{\sigma_{C,i,j}}) $ guessings. Note at each recursion for group $i$ and leg $j$ it may increase at most $|E^{i,j}_{\pi_{s}} |$ the number of start-end pairs and the depth of the recursion is $\delta$. Thus $ \sigma_{C,i,j}\leq \delta|E^{i,j}_{\pi_{s}} |$  and the total guessings for all $i$ and all $j$ is at most $[|E^{i}_{\pi_{s}}|! |E^{i}_{\pi_{s}}|^{\sigma_{c,i}}]^{(m+1)\mu}\leq ( (\frac{16\kappa' \delta}{\epsilon})^{2\kappa} ! (\frac{16\kappa' \delta}{\epsilon})^{2\kappa \delta (\frac{16\kappa' \delta}{\epsilon})^{2\kappa}} )^{(m+1)\mu}\leq n^{O((\frac{\delta}{\epsilon})^{2\kappa+2})} $.

We show the size of the dynamic programming table is at most $n^{O((\frac{\delta}{\epsilon})^{2\kappa+2})}$. Recall the entry of the table is  $A[C,\{B_{C,i}\}_{0\leq i<m}, \{(s_{i,j,l},t_{i,j,l})_{l=1}^{\sigma_{C,i,j}} \}_{ 0\leq i<m; 1\leq j<\mu}]$. For $C$, there are at most $( n^{3h\delta} 2^{\kappa})^{\delta} $ cluster nodes in $\Gamma$ . For $B_{C,0},\cdots,B_{C,m}$, there are at most $(n\Delta_{G})^{m+1}$ many possible value to consider. For $\{(s_{i,j,l},t_{i,j,l})_{l=1}^{\sigma_{C,i,j}} \}_{ 0\leq i<m; 1\leq j<\mu} $, there are at most $n^{O((\frac{\delta}{\epsilon})^{2\kappa+2})} $ possible start-end pairs to consider.

Therefore, computing the whole DP table and finding $\{Q^{'}_{i,j}\}_{0\leq i<m; 1\leq j<\mu}$ as in Theorem \ref{the:mppstructure} takes at most $n^{O((\frac{\delta}{\epsilon})^{2\kappa+2})} $ time. 
Considering the time spent guessing in Phase 1, the total time 
is $n^{O((\frac{\delta}{\eps})^{2\kappa+2})} $.

As mentioned, we compute $A[c_{0},\{ B_{i}-\eps \calE^{*}_{i,\mu} \}_{0\leq i<m},\{(u_i^{j},u_i^{j+1})\}_{0\leq i<m;1\leq j<\mu} ]$ for all guesses of $v_0,v_1,\ldots,v_m$, $u^1_i,\ldots,u^{\mu-2}_i$ (for $0\leq i< m$), 
and also $\calE^*_{i,\mu}$ (which yields $B_i$'s as well as $N_{i,j}$'s).
For each solution we consider $Q'$ that is the path obtained by concatenating $Q'_{0},Q'_{1},\cdots,Q'_{m}$
and check if all deadlines are respected. We pick the feasible solution with maximum $|Q'|$. By Theorem \ref{the:mppstructure} for at least one of them we have a $(1-4\eps)$-approximation for $P^*$. Replacing $\eps=\eps/4$ in all the calculations this implies the the proof
of Theorem \ref{thm:deadlineDBL}.

{\bf Bicriteria approximation for distances in $\QQ^+$:} We can adapt the analysis of proof of Theorem \ref{thm:deadlineDBL} to work when
distances have fractional values, i.e. are from $\QQ^+$ instead of integers. In this setting 
we bucket the distances (and budgets) based on powers of $\lambda=(1+\eps^2/\delta)$. More precisely, for any value $x$ let $L(x)$ (which stands for rounded down value) be the nearest power of $\lambda$ that is at most $x$ and $R(x)$ (which stands for rounded up value) be the nearest power of $\lambda$ that is at least $x$.
Whenever we consider distance values in our calculations we consider taking $L(.)$ and whenever we consider budget value we consider taking $R(.)$. A feasible solution after rounding down distance and rounding up budgets  remains feasibility. Since the height of $\Gamma$, and hence the depth of our DP recursion is $O(\delta)$, the number of additions of distances and budgets (hence rounding error accumulation) is at most $O(\delta)$. Thus, if we find a solution using DP, the actual distances might be off by $\lambda^{O(\delta)}\leq O(\eps)$ factor. So our solution at the end might be violating the deadlines by at most $O(\eps)$-factor.
The size of the table decreases slightly as for each $B_{c,i}$
the number of possible values will be $\log_\lambda B_{C,i}\leq\log(n\Delta)\delta/\eps^2\leq\delta^2\log n/\eps^2$.
So the running time for DP is still upper bounded by $n^{O((\delta/\eps)^{2\kappa+2})}$.

\bibliographystyle{plain}
\bibliography{references}

\end{document}